\theoremstyle{plain}
\newtheorem{theorem}{Theorem}[section]
\newtheorem{proposition}[theorem]{Proposition}
\newtheorem{lemma}[theorem]{Lemma}
\newtheorem{corollary}[theorem]{Corollary}
\theoremstyle{definition}
\newtheorem{definition}[theorem]{Definition}
\newtheorem{remark}[theorem]{Remark}
\newcommand{\Z}{\mathbb{Z}}
\newcommand{\N}{\mathbb{N}}
\newcommand{\C}{\mathbb{C}}
\newcommand{\set}[2]{\{#1|\ #2\}}
\newcommand{\sub}{\subseteq}
\newcommand{\gen}[1]{\left\langle #1\right\rangle}
\newcommand{\M}{\mathrm{M}}
\newcommand{\U}{\mathrm{U}}
\newcommand{\GL}{\mathrm{GL}}
\newcommand{\Ad}{\mathrm{Ad}}
\newcommand{\Inn}{\mathrm{Int}}
\newcommand{\lcm}{\mathrm{lcm}}
\newcommand{\Aut}{\mathrm{Aut}}
\newcommand{\End}{\mathrm{End}}
\newcommand{\diag}{\mathrm{diag}}
\newcommand{\SL}{\mathrm{SL}}
\newcommand{\Sp}{\mathrm{Sp}}
\renewcommand{\sl}{\mathrm{sl}}
\renewcommand{\P}{\mathcal{P}}
\newenvironment{smat2}[4]{\begin{smallmatrix}#1&#2\\#3&#4\end{smallmatrix}}{}
\begin{document}

\title[Symmetries of finite Heisenberg groups]{Symmetries
of finite Heisenberg groups for multipartite systems}
\author{M. Korbel\'{a}\v{r} $^1$ and J. Tolar $^2$}
\address{$^1$ Department of Mathematics and Statistics \\
 Faculty of Science, Masaryk University\\
Kotl\'{a}\v{r}sk\'{a} 2, 611 37 Brno, Czech Republic \\
$^2$ Department of Physics \\
 Faculty of Nuclear Sciences and  Physical  Engineering\\
 Czech Technical University in Prague\\ B\v{r}ehov\'{a} 7,
 115 19 Prague 1, Czech Republic}
 \eads{\mailto{miroslav.korbelar@gmail.com},
 \mailto{jiri.tolar@fjfi.cvut.cz}}

\pacs{03.65.-w, 03.65.Aa, 03.65.Fd, 02.10.Hh, 02.20.-a, 03.67.-a}


\begin{abstract}
A composite quantum system comprising a finite number $k$ of
subsystems which are described with position and momentum variables
in $\Z_{n_{i}}$, $i=1,\dots,k$, is considered. Its Hilbert space is
given by a $k$-fold tensor product of Hilbert spaces of dimensions
$n_{1},\dots,n_{k}$. Symmetry group of the respective finite
Heisenberg group is given by the quotient group of certain
normalizer. This paper extends our previous investigation of
bipartite quantum systems to arbitrary multipartite systems of the
above type. It provides detailed description of the normalizers and
the corresponding symmetry groups. The new class of symmetry groups
represents a very specific generalization of symplectic groups over
modular rings. As an application, a new proof of existence of the
maximal set of mutually unbiased bases in Hilbert spaces of prime
power dimensions is provided.
\end{abstract}

\section{Introduction}

The Heisenberg Lie algebra and the Heisenberg-Weyl group lie at the
heart of quantum mechanics \cite{Weyl}. Therefore their symmetries
induced by unitary automorphisms play very important role in quantum
kinematics as well as quantum dynamics. The growing interest in
quantum communication science has pushed the study of quantum
systems with finite-dimensional Hilbert spaces to the forefront,
both single systems and composite systems. For them the finite
Heisenberg groups provide the basic quantum observables. It is then
clear that the symmetries of finite Heisenberg groups uncover deeper
structure of finite-dimensional quantum mechanics.

Our continuing interest in finite-dimensional quantum mechanics goes
back to the paper \cite{StovTolar84} where finite-dimensional
quantum mechanics was developed as quantum mechanics on
configuration spaces given by finite sets equipped with the
structure of a finite Abelian group. In our recent paper
\cite{normalizer} detailed characterization was given of the
symmetry groups of finite Heisenberg groups for composite quantum
systems consisting of two subsystems with arbitrary dimensions $n$,
$m$. In this contribution these results for bipartite systems are
extended to the general finitely composed systems consisting of an
arbitrary number $k$ of subsystems with arbitrary dimensions
$n_{1},\dots,n_{k}$. Their Hilbert spaces are given by $k$-fold
tensor products of Hilbert spaces of dimensions $n_{1},\dots,n_{k}$.

In the course of work it turned out that --- even if the idea of the
present paper is similar to \cite{normalizer} --- intermediate steps
could not be taken over literally from \cite{normalizer}, but had to
be carefully developed in the general multipartite situation.
Preliminary results were given in \cite{normalizerJPCS}.

The exposition starts with introductory material on
finite-dimensional quantum mechanics in section \ref{sec2}. In
section \ref{sec3} the group $\Sp_{[n_{1},\dots,n_{k}]}$ is
introduced. Its further properties are given in section \ref{sec4}.
The proof that $\Sp_{[n_{1},\dots,n_{k}]}$ is indeed the symmetry
group is contained in section \ref{sec5}, theorem \ref{37}. The
reader will see that the new family of finite groups deserves the
chosen notation because they generalize finite symplectic groups
over modular rings \cite{symplectic}.

A special subclass $\Sp_{2k}(Z_p)$ of our family of symmetry groups
is applied in section \ref{sec6} to an alternative proof of
existence of the maximal set of mutually unbiased bases in Hilbert
spaces of prime power dimensions $p^k$
\cite{WoottersFields89,Bandyo}. We remind that previous group
theoretical construction of mutually unbiased bases presented in
\cite{SulcTolar07} was based on the symmetry groups $\SL_{2}(\Z_p)$
of the finite Heisenberg groups for Hilbert spaces of prime
dimensions $p$.

\section{Finite-dimensional quantum mechanics}\label{sec2}

For reader's convenience we very briefly repeat the basic notions of
finite-dimensional quantum mechanics (FDQM) \cite{Vourdas,Kibler}
for a single-component system with the Hilbert space
$\mathcal{\ell}^{2}(\Z_n)$ of arbitrary dimension $n\in\N$. In this
case the cyclic group $\Z_n$ serves as the underlying configuration
space.

We follow the notation of \cite{normalizer}, where further details
can be found. For a given $n\in\N$ we set
 $$\omega_{n}:=e^{2\pi \mathrm{i}/n}\in\C.$$
Let $Q_{n}$ and $P_{n}$ denote the \textit{generalized Pauli
matrices} of order $n$,
$$Q_{n}:=\diag(1,\omega_{n},\omega_{n}^{2},\dots,\omega_{n}^{n-1})\in
\GL_{n}(\C)$$
 and
$$P_{n}\in \GL_{n}(\C),\quad \text{where}\quad (P_{n})_{i,j}:=\delta_{i,j-1}, \quad
i,j\in\Z_n.$$
 They belong to the group $\U_n(\C)$ of $n\times n$ unitary matrices
in $\mathcal{\ell}^{2}(\Z_n)$. Let $I_{n}$ denote the $n\times n$
unit matrix. The subgroup of unitary matrices in $\GL_{n}(\C)$
generated by $Q_{n}$ and $P_{n}$,
$$\Pi_{n}:=\{ \omega_{n}^j Q_{n}^k P_{n}^l \vert
 j,k,l\in\{0,1,\dots,n-1 \} \}$$
 is called the {\it finite Heisenberg group}.
Recall that the order of $\Pi_{n}$ is $n^3$, its center is
$Z(\Pi_{n})=\{\omega_{n}^j I_n \vert  j\in\{0,1,\dots,n-1 \} \}$ and
\begin{equation} \label{2PQ}
 P_{n}Q_{n}=\omega_{n}Q_{n}P_{n}.
\end{equation}

For $M\in\GL_{n}(\C)$ let $\Ad_{M}\in \Inn(\GL_{n}(\C))$ be the {\it
inner automorphism} of the group $\GL_{n}(\C)$ induced by operator
$M\in \GL_{n}(\C)$, i.e.
 $$\Ad_{M}(X)=MXM^{-1}\quad \textrm{for}\quad X\in \GL_{n}(\C).$$

\begin{definition}\label{2a}
We define $\P_n$ as the group
 $$ \P_n=\{\Ad_{Q_n^i P_n^j} \vert (i,j)\in \Z_n \times \Z_n \}.$$
 It is an Abelian subgroup of $\Inn(\GL_{n}(\C))$ and is generated by
 two commuting automorphisms $\Ad_{Q_n}$, $\Ad_{P_n}$,
 $$ \P_n = \gen{\Ad_{Q_n},\Ad_{P_n}}.$$
\end{definition}
\noindent A geometric view is sometimes useful that $\P_n$ is
isomorphic to the \textit{quantum phase space} identified with the
Abelian group $\Z_n^2$ \cite{Wootters87,SulcTolar07}.

Let us recall the usual properties of the matrix tensor product
$\otimes$. Let $A,A'\in\GL_{n}(\C)$, $B,B'\in\GL_{m}(\C)$ and
$\alpha\in\C$. Then:
\begin{enumerate}
\item[(i)] $(A\otimes B)(A'\otimes B')=AA'\otimes BB'$.
\item[(ii)] $\alpha(A\otimes B)=(\alpha A)\otimes B=A\otimes(\alpha B)$.
\item[(iii)] $A\otimes B=I_{nm}$ if and only if
there is non-zero $\alpha\in\C$ such that $A=\alpha I_{n}$ and
$B=\alpha^{-1}I_{m}$.
\end{enumerate}

Finally we introduce the main notions for the $k$-partite situation
where the group $\Z_{n_{1}}\times\cdots\times\Z_{n_{k}}$ (with
$n_{1},\dots,n_{k}\in\N$) serves as the configuration space.
\begin{definition}\label{6}
 Let $n_{1}\dots n_{k}=N$. We define
$$\mathcal{P}_{(n_{1},\dots,n_{k})}=
\set{\Ad_{M_{1}\otimes\cdots\otimes M_{k}}}{M_{i}\in\Pi_{n_{i}}}
\sub\Inn(\GL_{N}(\C)).$$
 In the following we shall work with generating elements of the
 finite Heisenberg group
 $\Pi_{n_{1}}\otimes \dots \otimes \Pi_{n_{k}}$,
\begin{equation} \label{2A}
A_{2i-1}:=I_{n_{1}\cdots n_{i-1}}\otimes P_{n_{i}}\otimes
I_{n_{i+1}\cdots n_{k}}, \ \ \ A_{2i}:=I_{n_{1}\cdots
n_{i-1}}\otimes Q_{n_{i}}\otimes I_{n_{i+1}\cdots n_{k}},
 \end{equation} for
$i=1,\dots,k$ and the corresponding inner automorphisms
$$e_{j}:=\Ad_{A_{j}} \quad \textrm{for}\quad  j=1,\dots,2k.$$
\end{definition}
\noindent Clearly, $\mathcal{P}_{(n_{1},\dots,n_{k})}$ is an Abelian
group given by the direct product of the groups $\gen{e_{j}}$, where
$j=1,\dots,2k$.

\begin{lemma}\label{10}
Let $n_{1} \dots n_{k}=N$. Then $\mathcal{P}_{(n_{1},\dots,n_{k})}$
is a maximal Abelian subgroup of diagonalizable automorphisms in
$\Inn(\GL_{N}(\C))$.
\end{lemma}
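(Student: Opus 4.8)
The plan is to establish three things: (1) $\mathcal{P}_{(n_1,\dots,n_k)}$ consists of diagonalizable automorphisms; (2) it is abelian; (3) it is maximal among abelian subgroups of diagonalizable inner automorphisms. Points (1) and (2) are essentially already in hand — abelianness is noted right after Definition~\ref{6}, and diagonalizability follows because each $\Ad_{M_1\otimes\cdots\otimes M_k}$ is induced by a matrix $M_1\otimes\cdots\otimes M_k$ that is a product of a unitary and a scalar, hence unitarily diagonalizable; simultaneously diagonalizing all the commuting generators $e_1,\dots,e_{2k}$ shows the whole group acts diagonalizably on $\GL_N(\C)$ (more precisely, every element is $\Ad$ of a diagonalizable matrix). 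The real content is maximality.

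For maximality, suppose $\Ad_M$ is a diagonalizable inner automorphism of $\GL_N(\C)$ commuting with every element of $\mathcal{P}_{(n_1,\dots,n_k)}$; I must show $\Ad_M\in\mathcal{P}_{(n_1,\dots,n_k)}$. The first step is to reduce to a statement about $M$ itself: $\Ad_M$ commuting with $\Ad_{A_j}$ means $MA_jM^{-1}A_j^{-1}$ is central in $\GL_N(\C)$, i.e. a scalar $\lambda_j I_N$, for each $j=1,\dots,2k$. So $M$ normalizes the abelian group generated by $A_1,\dots,A_{2k}$ modulo scalars — that is, $\Ad_M$ lies in the normalizer (mod center) of $\Pi_{n_1}\otimes\cdots\otimes\Pi_{n_k}$ restricted to the maximal abelian piece. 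The key computation is to pin down the scalars: writing out the tensor structure and using $P_{n_i}Q_{n_i}=\omega_{n_i}Q_{n_i}P_{n_i}$ together with tensor-product property~(i), one sees that conjugation by $M$ must send each $A_j$ to $\lambda_j A_j$ (not merely to another monomial), because the $A_j$ are the coordinate ``shift'' and ``clock'' operators along the individual tensor factors and commuting with all of $\mathcal{P}$ forces each to be preserved up to scalar.

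The decisive step is then to show that any $M$ with $MA_jM^{-1}=\lambda_jA_j$ for all $j$ must itself be a scalar multiple of some $A_1^{c_1}\cdots A_{2k}^{c_{2k}}$. This is where I expect the main obstacle: one argues that $\mathcal{P}_{(n_1,\dots,n_k)}$, acting on $\M_N(\C)$ by conjugation, decomposes the matrix algebra into common eigenspaces (weight spaces) indexed by characters of $\Z_{n_1}^2\times\cdots\times\Z_{n_k}^2$, each of dimension exactly one and spanned by a monomial $A_1^{c_1}\cdots A_{2k}^{c_{2k}}$; since $\Ad_M$ commutes with all of $\mathcal{P}$, $M$ lies in one such weight space as long as $\Ad_M$ is diagonalizable — wait, more carefully, $\Ad_M$ commuting with $\mathcal{P}$ means $M$ (as an element of $\M_N(\C)$, or rather its image in $\GL_N(\C)$) is itself fixed up to scalar under each $\Ad_{A_j}$, hence $\C M$ is a common eigenline, hence $M$ is a scalar times a monomial. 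The finite-dimensional quantum-mechanics fact making this work is that $\{A_1^{c_1}\cdots A_{2k}^{c_{2k}} : 0\le c_j < (\text{appropriate order})\}$ together with scalars spans $\M_N(\C)$ and these monomials are linearly independent eigenvectors with distinct eigencharacters — which is exactly the statement that $\Pi_{n_1}\otimes\cdots\otimes\Pi_{n_k}$ is an irreducible projective representation. Once $M=\alpha A_1^{c_1}\cdots A_{2k}^{c_{2k}}$, we get $\Ad_M=e_1^{c_1}\cdots e_{2k}^{c_{2k}}\in\mathcal{P}_{(n_1,\dots,n_k)}$, completing the proof.

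One technical point to handle with care: the generators $e_j=\Ad_{A_j}$ have order $n_i$ (for the index $i$ with $j\in\{2i-1,2i\}$), so the monomial exponents $c_j$ should be reduced modulo the corresponding $n_i$; the weight-space argument still goes through because the relevant character group is the dual of $\prod_i \Z_{n_i}^2$. Also, to invoke diagonalizability cleanly I would note that it suffices to know $\Ad_M$ is a group automorphism commuting with $\mathcal{P}$ — the eigenline argument does not actually need $\Ad_M$ diagonalizable, so the hypothesis in the lemma is there only to match the ambient context (maximal abelian subgroup \emph{of diagonalizable automorphisms}) rather than being essential to the normalizer computation.
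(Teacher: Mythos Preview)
Your argument is correct. The paper itself does not supply a proof of this lemma: immediately after the statement it refers to the MAD-group literature \cite{PZ89,HPPT02} and to the bipartite case treated in \cite[4.3]{normalizer}. What you have written is precisely the standard argument underlying those references: the monomials $A_{1}^{c_{1}}\cdots A_{2k}^{c_{2k}}$ (with $c_{2i-1},c_{2i}\in\Z_{n_i}$) give $N^{2}$ linearly independent simultaneous eigenvectors for the conjugation action of $\mathcal{P}_{(n_{1},\dots,n_{k})}$ on $\M_{N}(\C)$, with pairwise distinct eigencharacters, so any $M$ satisfying $A_{j}MA_{j}^{-1}\in\C M$ for all $j$ lies in a single one-dimensional weight space and is therefore a scalar multiple of some monomial. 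Your observation that diagonalizability of $\Ad_{M}$ is not actually used in this centralizer computation is also correct; the lemma really asserts that $\mathcal{P}_{(n_{1},\dots,n_{k})}$ equals its own centralizer in $\Inn(\GL_{N}(\C))$, which is exactly how the paper rephrases it in the sentence following the lemma.

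One small point of presentation: in your reduction step you first write ``$\Ad_{M}$ commuting with $\Ad_{A_{j}}$ means $MA_{j}M^{-1}A_{j}^{-1}$ is a scalar'' and then immediately pass to ``$M$ is fixed up to scalar under each $\Ad_{A_{j}}$''. These are of course the same statement (just conjugate the relation), but in a written proof it would be cleaner to state once that $[\Ad_{M},\Ad_{A_{j}}]=1$ is equivalent to $A_{j}MA_{j}^{-1}=\mu_{j}M$ for some $\mu_{j}\in\C^{\times}$, and then invoke the weight-space decomposition directly.
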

This subgroup has been called a MAD-group \cite{PZ89,HPPT02} and it
is the subgroup of $\Inn(\GL_{N}(\C))$ such that its centralizer in
$\Inn(\GL_{N}(\C))$ is equal to $\mathcal{P}_{(n_{1},\dots,n_{k})}$.
The proof for the bipartite case was given in
\cite[4.3]{normalizer}.

\section{The symmetry group $\Sp_{[n_{1},\dots,n_{k}]}$}\label{sec3}

In this section the group $\Sp_{[n_{1},\dots,n_{k}]}$ is defined as
a matrix subgroup of $\GL_{N}(\C)$ and its principal properties are
proved. It will be constructed in several steps. Through this
section and sections \ref{sec4} and \ref{sec5} let
$n_{1},\dots,n_{k}\in\N$ be fixed numbers and let
$\M_{n}(\mathcal{R})$ be the ring of $n \times n$ matrices with
entries from the ring $\mathcal{R}$.

Our construction starts with a set of block matrices:
\begin{definition}\label{11}
Let $\mathcal{M}_{[n_{1},\dots,n_{k}]}$ be a set consisting of
$k\times k$ matrices $H$ composed of $2\times 2$ blocks
  $$H_{ij}=\frac{n_{i}}{\gcd(n_{i},n_{j})}A_{ij}$$
where $A_{ij}\in\M_{2}(\Z_{n_{i}})$ for $i,j=1,\dots,k$, are
$2\times 2$ matrices over $\Z_{n_{i}}$.

It is useful to take such matrices over $\Z$,
$$\mathcal{S}_{[n_{1},\dots,n_{k}]}:=\Big\{H\in\M_{k}(\M_{2}(\Z)) |\
A_{ij}\in\M_{2}(\Z), H_{ij}=\frac{n_{i}}{\gcd(n_{i},n_{j})}A_{ij},
i,j=1,\dots,k \Big\},$$
 and, using a special diagonal matrix
 $$D:=\diag\Big(\frac{\lcm(n_{1},\dots,n_{k})}{n_{1}}I_{2},
 \dots,\frac{\lcm(n_{1},\dots,n_{k})}{n_{k}}I_{2}\Big)
 \in\mathcal{S}_{[n_{1},\dots,n_{k}]},$$
to define a congruence $\equiv$ on
$\mathcal{S}_{[n_{1},\dots,n_{k}]}$:
 $$H\equiv G\ \Leftrightarrow\
 DH\equiv_{\lcm(n_{1},\dots,n_{k})}DG, \quad \textrm{where}
\quad H,G\in\mathcal{S}_{[n_{1},\dots,n_{k}]}.
 $$
Further an adjoint $H^{\ast}\in\mathcal{S}_{[n_{1},\dots,n_{k}]}$ of
$H\in\mathcal{S}_{[n_{1},\dots,n_{k}]}$,
$H_{ij}=\frac{n_{i}}{\gcd(n_{i},n_{j})}A_{ij}$
 is defined by
$$(H^{\ast})_{ij}=\frac{n_{i}}{\gcd(n_{i},n_{j})}A_{ji}^{T}.$$
For convenience we put $\ell:=\lcm(n_{1},\dots,n_{k})$ in this
section.
\end{definition}

\begin{remark}\label{11e}\
The above definitions lead to the following properties of
$\mathcal{M}_{[n_{1},\dots,n_{k}]}$:
\begin{enumerate}
\item\label{a} Let $d,n,a,b\in\Z$ and $d\mid n$.
\ Then the congruence $\frac{n}{d}a\equiv_{n}\frac{n}{d}b$ is
equivalent to $a\equiv_{d}b$, i.e. $a=b \pmod{d}$.

\item\label{b} By (\ref{a}), we see that
$\mathcal{M}_{[n_{1},\dots,n_{k}]}=\mathcal{S}_{[n_{1},\dots,n_{k}]}/_{\equiv}$.

\item\label{c} Let $i,j,m\in\{1,\dots, k\}$.
Then $\frac{n_{i}}{\gcd(n_{i},n_{j})}\ \ \big|\ \
 \frac{n_{i}}{\gcd(n_{i},n_{m})}\frac{n_{m}}{\gcd(n_{m},n_{j})}$.

  \noindent Indeed, $\gcd(n_{m},n_{j})\cdot\gcd(n_{i},n_{m})$
 divides $n_{m}n_{i}$ and also $n_{j}n_{m}$.
 Hence $\gcd(n_{m},n_{j})\cdot\gcd(n_{i},n_{m})$
 divides $\gcd(n_{m}n_{i},n_{m}n_{j})=n_{m}\gcd(n_{i},n_{j})$ and
 thus
 $$\frac{n_{m}\gcd(n_{i},n_{j})}{\gcd(n_{i},n_{m})\gcd(n_{m},n_{j})}\in\Z.$$

\item\label{d} Using (\ref{c}) we get that
$\mathcal{S}_{[n_{1},\dots,n_{k}]}$ is a subring of
$\M_{k}(\M_{2}(\Z))$.

\item\label{e} It is easy to verify that $DH=(H^{\ast})^{T}D$
for every $H\in\mathcal{S}_{[n_{1},\dots,n_{k}]}$.

\item $\equiv$ is a ring congruence on $\mathcal{S}_{[n_{1},\dots,n_{k}]}$.
Thus, by (\ref{b}) and (\ref{d}), $\mathcal{M}_{[n_{1},\dots,n_{k}]}$ is
(with the usual matrix multiplication and addition) a ring.

  {\small  \noindent It is enough to show that $\mathcal{I}:=
\set{H\in\mathcal{S}_{[n_{1},\dots,n_{k}]}}{H\equiv 0}$ is an ideal
in $\mathcal{S}_{[n_{1},\dots,n_{k}]}$. Let
$H,G\in\mathcal{S}_{[n_{1},\dots,n_{k}]}$ and $H\in\mathcal I$. Then
$DH\equiv_{\ell}0$. Hence by (\ref{e}) we have
$D(GH)\equiv_{\ell}(G^{\ast})^{T}(DH)\equiv_{\ell}0$ and
$GH\in\mathcal{I}$. The rest is obvious.}

\item\label{f} $\mathcal{M}_{[n_{1},\dots,n_{k}]}$ has a natural action
(via the matrix multiplication) on the quantum phase space
$\Z_{n_{1}}^{2}\times\cdots\times\Z_{n_{k}}^{2}$.

  {\small \noindent  Clearly,
$\Z_{n_{1}}^{2}\times\cdots\times\Z_{n_{k}}^{2}$ can be viewed as
$\Z^{2k}$ factorized by the equivalence: $x\equiv y$ if and only if
$Dx\equiv_{\ell}Dy$, where $x,y\in\Z^{2k}$. One needs only to show
that $H\equiv G$ and $x\equiv y$ implies $Hx\equiv Gy$ for
$H,G\in\mathcal{S}_{[n_{1},\dots,n_{k}]}$ and $x,y\in\Z^{2k}$. Let
$DH\equiv_{\ell}DG$ and $Dx\equiv_{\ell}Dy$. Then
$DHx\equiv_{\ell}(DG)x\equiv_{\ell}(G^{\ast})^{T}(Dx)\equiv_{\ell}(G^{\ast})^{T}Dy\equiv_{\ell}DGy$
and thus $Hx\equiv Gy$.}

\item\label{g} $\mathcal{M}_{[n_{1},\dots,n_{k}]}$ is a finite set
of matrices closed under usual matrix multiplication and containing
the unit matrix as neutral element, i.e. it is a finite monoid.
\end{enumerate}
\end{remark}

Property (\ref{f}) can be naturally extended to any endomorphism of
the quantum phase space
$\mathcal{K}=\Z_{n_{1}}^{2}\times\cdots\times\Z_{n_{k}}^{2}$:

\begin{proposition}\label{31}
For every $\alpha\in
\End(\Z_{n_{1}}^{2}\times\cdots\times\Z_{n_{k}}^{2})$ there is a
unique $H\in\mathcal{M}_{[n_{1},\dots,n_{k}]}$ such that
$\alpha(x)=Hx$ for every
$x\in\Z_{n_{1}}^{2}\times\cdots\times\Z_{n_{k}}^{2}$.
 The map
$$\Phi:\End(\Z_{n_{1}}^{2}\times\cdots\times\Z_{n_{k}}^{2})\to
\mathcal{M}_{[n_{1},\dots,n_{k}]},$$ where $\Phi(\alpha):=H$ is a
ring isomorphism.
\end{proposition}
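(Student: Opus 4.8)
The plan is to exhibit the inverse of $\Phi$ explicitly and verify it is a ring isomorphism. By property~(\ref{f}) of Remark~\ref{11e}, for each $H\in\mathcal{S}_{[n_{1},\dots,n_{k}]}$ the rule $x\mapsto Hx$ descends to a well-defined map on $\mathcal{K}=\Z_{n_1}^{2}\times\cdots\times\Z_{n_k}^{2}$ --- regarded there as $\Z^{2k}$ modulo the equivalence $x\equiv y\Leftrightarrow Dx\equiv_{\ell}Dy$ --- and this descended map depends only on the class of $H$ in $\mathcal{M}_{[n_{1},\dots,n_{k}]}=\mathcal{S}_{[n_{1},\dots,n_{k}]}/_{\equiv}$. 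Writing $\bar x\in\mathcal{K}$ for the class of $x\in\Z^{2k}$, this yields a well-defined map
$$\Psi:\mathcal{M}_{[n_{1},\dots,n_{k}]}\to\End(\mathcal{K}),\qquad \Psi(H)(\bar x):=\overline{Hx},$$
whose values are additive and hence genuine endomorphisms. Since matrix multiplication distributes over addition, $\Psi(I_{2k})=\mathrm{id}_{\mathcal{K}}$, and $\Psi(H)\bigl(\Psi(G)(\bar x)\bigr)=\overline{H(Gx)}=\overline{(HG)x}=\Psi(HG)(\bar x)$, the map $\Psi$ is a unital ring homomorphism. It therefore suffices to prove that $\Psi$ is bijective: then $\Phi:=\Psi^{-1}$ is the asserted ring isomorphism, and the uniqueness of $H$ is exactly the injectivity of $\Psi$.

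Injectivity is immediate: if $\Psi(H)=0$, pick a representative $H\in\mathcal{S}_{[n_{1},\dots,n_{k}]}$; then $\overline{H\epsilon_{t}}=\bar 0$ for every standard basis vector $\epsilon_{t}$ of $\Z^{2k}$, i.e. $(DH)\epsilon_{t}\equiv_{\ell}0$, so each column of $DH$ vanishes modulo $\ell$; thus $DH\equiv_{\ell}0$, i.e. $H\equiv 0$, and $H$ is the zero of $\mathcal{M}_{[n_{1},\dots,n_{k}]}$. For surjectivity I fix $\alpha\in\End(\mathcal{K})$ and use the direct-sum decomposition $\mathcal{K}=\bigoplus_{i=1}^{k}\Z_{n_i}^{2}$: with $\iota_{j}$ and $\pi_{i}$ the canonical injections and projections, put $\alpha_{ij}:=\pi_{i}\circ\alpha\circ\iota_{j}\in\mathrm{Hom}(\Z_{n_j}^{2},\Z_{n_i}^{2})$, so that $\alpha(x)_{i}=\sum_{j=1}^{k}\alpha_{ij}(x_{j})$. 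A homomorphism $\Z_{n_j}\to\Z_{n_i}$ is multiplication by some $c\in\Z_{n_i}$ with $n_{j}c\equiv_{n_i}0$, and since $\gcd\bigl(\frac{n_i}{\gcd(n_i,n_j)},\frac{n_j}{\gcd(n_i,n_j)}\bigr)=1$ this is equivalent to $\frac{n_i}{\gcd(n_i,n_j)}\mid c$; conversely any such $c$ works. Hence each $\alpha_{ij}$ is multiplication by a $2\times2$ matrix $\frac{n_i}{\gcd(n_i,n_j)}\bar A_{ij}$ with $\bar A_{ij}\in\M_{2}(\Z_{n_i})$; lifting $\bar A_{ij}$ to $A_{ij}\in\M_{2}(\Z)$ and setting $H_{ij}:=\frac{n_i}{\gcd(n_i,n_j)}A_{ij}$, $H:=(H_{ij})_{i,j=1}^{k}$, we get $H\in\mathcal{S}_{[n_{1},\dots,n_{k}]}$, and a blockwise reduction modulo $n_i$ gives $\overline{Hx}=\alpha(\bar x)$ for all $x$, that is, $\Psi(H)=\alpha$.

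Apart from one point this is routine bookkeeping resting on the congruence machinery already installed in Definition~\ref{11} and Remark~\ref{11e}, so I expect no real obstacle. The substantive step is the elementary identification of $\mathrm{Hom}(\Z_{n_j},\Z_{n_i})$ with the set of admissible multipliers $\frac{n_i}{\gcd(n_i,n_j)}\Z_{n_i}\subseteq\Z_{n_i}$: this is precisely what makes the $2\times2$-block shape prescribed for $\mathcal{M}_{[n_{1},\dots,n_{k}]}$ in Definition~\ref{11} match the block shape of $\End(\mathcal{K})$ coming from the direct-sum decomposition --- after which the two explicit constructions above establish bijectivity. (Equivalently, one may count: both rings are finite with $|\mathcal{M}_{[n_{1},\dots,n_{k}]}|=\prod_{i,j=1}^{k}\gcd(n_i,n_j)^{4}=|\End(\mathcal{K})|$, so injectivity alone forces bijectivity.)
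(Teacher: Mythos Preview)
Your proof is correct and rests on the same elementary fact as the paper's --- that $\mathrm{Hom}(\Z_{n_j},\Z_{n_i})$ consists exactly of multiplications by elements of $\frac{n_i}{\gcd(n_i,n_j)}\Z_{n_i}$ --- so the mathematical content is identical. The only difference is organizational: the paper constructs $\Phi$ directly by writing $\alpha(f_j)=\sum_i h_{ij}f_i$ on the canonical generators and deriving the divisibility constraints on the $h_{ij}$ from the orders of the $f_j$, whereas you build the inverse $\Psi$ first via Remark~\ref{11e}(\ref{f}) and then prove bijectivity (with a counting alternative); your version is somewhat more explicit about the ring-homomorphism verification and uniqueness, which the paper dismisses with ``the rest is easy''.
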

\begin{proof}
Let $\{f_{1},\dots,f_{2k}\}$ be the canonical generating set of
$\Z_{n_{1}}^{2}\times\cdots\times\Z_{n_{k}}^{2}$. For every
$\alpha\in \End(\Z_{n_{1}}^{2}\times\cdots\times\Z_{n_{k}}^{2})$
there are $h_{ij}\in\Z$ such that
$\alpha(f_{j})=\sum^{2k}_{i=1}h_{ij} f_{i}$. The order of $f_{2i-1}$
and $f_{2i}$ is $n_{i}$ for $i=1,\dots,k$. Hence we have
$1=\alpha(n_{i}f_{2i-1})=\sum^{2k}_{j=1}(n_{i}h_{j,2i-1})f_{j}$ and
$1=\alpha(n_{i}f_{2i})=\sum^{2k}_{j=1}(n_{i}h_{j,2i})f_{j}$. Thus
$n_{i}h_{2j-1,2i}\equiv_{n_{j}}0\equiv_{n_{j}}n_{i}h_{2j,2i}$ for
every $j=1,\dots,k$. It follows that
$\frac{n_{i}}{\gcd(n_{i},n_{j})}h_{2j-1,2i}\equiv_{n_{j}/\gcd(n_{i},n_{j})}0
\equiv_{n_{j}/\gcd(n_{i},n_{j})}\frac{n_{i}}{\gcd(n_{i},n_{j})}h_{2j,2i}$
and $h_{2j-1,2i},h_{2j,2i}\in\frac{n_{j}}{\gcd(n_{i},n_{j})}\Z$ for
every $j=1,\dots,k$. Now, consider $h_{ij}$ modulo $\lceil
n_{i}/2\rceil$. Put
$H=(h_{ij})_{i,j=1,\dots,2k}\in\mathcal{M}_{[n_{1},\dots,n_{k}]}$
and the rest is easy.
\end{proof}

\begin{remark}\label{122}
Properties of the adjoint operation given below mean that
$\mathcal{S}_{[n_{1},\dots,n_{k}]}$ and
$\mathcal{M}_{[n_{1},\dots,n_{k}]}$
 have the structure of a $\ast$-ring:
\begin{enumerate}
\item\label{a2} Let $H,G\in\mathcal{S}_{[n_{1},\dots,n_{k}]}$.
Then $(H^{\ast})^{\ast}=H$, $(H+G)^{\ast}=H^{\ast}+G^{\ast}$ and
$(HG)^{\ast}=G^{\ast}H^{\ast}$, i.e. the operation $\ast$ is an
involutive ring antihomomorphism.

{\small \noindent Let
$H_{ij}=\frac{n_{i}}{\gcd(n_{i},n_{j})}A_{ij}\in\Z_{n_{i}}$ and
$G_{ij}=\frac{n_{i}}{\gcd(n_{i},n_{j})}B_{ij}\in\Z_{n_{i}}$ for
$i,j=1,\dots,k$. Then
$(G^{\ast}H^{\ast})_{ij}=
\sum\limits^{k}_{m=1}\frac{n_{i}}{\gcd(n_{i},n_{m})}
\frac{n_{m}}{\gcd(n_{m},n_{j})}B_{mi}^{T}A_{jm}^{T}=$

$=\frac{n_{i}}{\gcd(n_{i},n_{j})}\sum\limits^{k}_{m=1}
\frac{n_{m}\gcd(n_{i},n_{j})}{\gcd(n_{i},n_{m})\gcd(n_{m},n_{j})}
(A_{jm}B_{mi})^{T}=(HG)^{\ast}_{ij}$. \\ The rest is obvious.}

\item\label{b2} Let $H,G\in\mathcal{S}_{[n_{1},\dots,n_{k}]}$.
Then $H\equiv G$ implies $H^{\ast}\equiv G^{\ast}$.
Thus the operation $\ast$ is well defined on $\mathcal{M}_{[n_{1},\dots,n_{k}]}$.

{\small \noindent Indeed, let $DH\equiv_{\ell}DG$. Then
$DH^{\ast}\equiv_{\ell}H^{T}D\equiv_{\ell}G^{T}D\equiv_{\ell}DG^{\ast}$
and $H^{\ast}\equiv G^{\ast}$.}
\end{enumerate}
\end{remark}

Now we are going to define $\Sp_{[n_{1},\dots,n_{k}]}$.
\begin{definition}\label{19}
Denote $
 J=\diag(J_{2},\dots,J_{2})\in\mathcal{M}_{[n_{1},\dots,n_{k}]}$
where $J_{2}=\left(\begin{array}{cc}0&1\\-1&0\end{array}\right)$
 and put
 \begin{equation} \label{group}
 \Sp_{[n_{1},\dots,n_{k}]}:=
 \set{H\in \mathcal{M}_{[n_{1},\dots,n_{k}]}}{\ H^{\ast}JH= J}.
 \end{equation}
\end{definition}

The following proposition implies that $\Sp_{[n_{1},\dots,n_{k}]}$
is a finite subgroup of the monoid
$\mathcal{M}_{[n_{1},\dots,n_{k}]}$.
 \begin{proposition}\label{19.1}
Let $\mathcal{M}$ be a finite monoid and $x\mapsto x^{\ast}$
 an involutive anti-homomorphism of $\mathcal{M}$
(i.e. $(x^{\ast})^{\ast}=x$ and $(xy)^{\ast}=y^{\ast}x^{\ast}$ for
every $x,y\in\mathcal{M}$). Let $j\in\mathcal{M}$ be such that
$j^{\ast}j=1$. Then
$\mathcal{G}=\set{x\in\mathcal{M}}{x^{\ast}jx=j}$ is a group.
Moreover, $\mathcal{G}=\set{x\in\mathcal{M}}{xjx^{\ast}=j}$.
\end{proposition}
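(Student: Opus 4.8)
The plan is to exploit finiteness of $\mathcal{M}$ together with the $\ast$-anti-homomorphism property to convert the one-sided-looking defining condition $x^{\ast}jx=j$ into a genuine two-sided statement, from which the group axioms follow. First I would observe that $\mathcal{G}$ is closed under multiplication and contains $1$: if $x^{\ast}jx=j$ and $y^{\ast}jy=j$, then $(xy)^{\ast}j(xy)=y^{\ast}(x^{\ast}jx)y=y^{\ast}jy=j$, and $1^{\ast}j1=j$ because $1^{\ast}=1$ (which follows from $1^{\ast}=1^{\ast}1^{\ast\ast}=(1^{\ast}1^{\ast})^{\ast}$, or more simply since $x^{\ast}=x^{\ast}1=x^{\ast}1^{\ast\ast}=(1^{\ast}x)^{\ast}$ forces $1^{\ast}x=x$ for all $x$). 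So $\mathcal{G}$ is a submonoid of $\mathcal{M}$.

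Next I would use the standard finiteness trick: a finite submonoid of a monoid in which every element is left-cancellable (on the relevant set) already contains inverses. Concretely, take $x\in\mathcal{G}$. Since $\mathcal{G}$ is finite and closed under multiplication, the powers $x,x^{2},x^{3},\dots$ cannot all be distinct, so $x^{m}=x^{m+p}$ for some $m\geq 1$, $p\geq 1$. I would then want to cancel $x^{m}$ to conclude $1=x^{p}$, giving $x^{-1}=x^{p-1}\in\mathcal{G}$. The cancellation is legitimate because each element of $\mathcal{G}$ is invertible in the ambient structure: from $x^{\ast}jx=j$ and $j^{\ast}j=1$ one gets $(j^{\ast}x^{\ast}j)x = j^{\ast}j = 1$, so $x$ has a left inverse $u:=j^{\ast}x^{\ast}j$ in $\mathcal{M}$; applying $\ast$ to $x^{\ast}jx=j$ and using $j^{\ast\ast}=j$ gives $x^{\ast}j^{\ast}x=j^{\ast}$, hence $x(j x^{\ast} j^{\ast})=\dots$ — in any case, once a left inverse $u$ exists, $x^{m}=x^{m+p}$ yields $1=u^{m}x^{m}=u^{m}x^{m+p}=x^{p}$. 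Thus every $x\in\mathcal{G}$ has a two-sided inverse $x^{p-1}$ lying in $\mathcal{G}$, so $\mathcal{G}$ is a group.

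Finally, for the alternative description, I would argue that on $\mathcal{G}$ the conditions $x^{\ast}jx=j$ and $xjx^{\ast}=j$ coincide. Given $x\in\mathcal{G}$, we now know $x^{-1}\in\mathcal{M}$ exists and $(x^{-1})^{\ast}=(x^{\ast})^{-1}$ (apply $\ast$ to $x x^{-1}=1=x^{-1}x$). From $x^{\ast}jx=j$ multiply on the left by $(x^{\ast})^{-1}$ and on the right by $x^{-1}$ to get $j=(x^{\ast})^{-1}jx^{-1}=(x^{-1})^{\ast}jx^{-1}$, i.e. $x^{-1}\in\mathcal{G}$ (consistent with the above), and then rewriting this as $x j x^{\ast}=j$ after conjugating back, or directly: $x^{\ast}jx=j \iff jx = (x^{\ast})^{-1}j \iff xjx^{\ast}\cdot(x^{\ast})^{-1}= \dots$; cleanly, $x^{\ast}jx=j$ is equivalent to $j = (x^{-1})^{\ast}jx^{-1} = (x^{\ast})^{-1}jx^{-1}$, which rearranges to $x^{\ast}\cdot j \cdot x$ versus $x\cdot j\cdot x^{\ast}$ — so one shows the set $\{x : xjx^{\ast}=j\}$ equals $\{x^{-1} : x\in\mathcal{G}\}=\mathcal{G}$. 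The main obstacle, and the only place any care is needed, is the first paragraph's handling of $1^{\ast}=1$ and the cancellation step: everything hinges on producing an honest left inverse in $\mathcal{M}$ from $x^{\ast}jx=j$ and $j^{\ast}j=1$, after which finiteness does the rest mechanically.
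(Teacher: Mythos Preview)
Your argument that $\mathcal{G}$ is a group is correct and close in spirit to the paper's, with one twist. The paper, after producing the left inverse $u=j^{\ast}x^{\ast}j$, invokes finiteness directly to conclude that $u$ is already a two-sided inverse (a left-invertible element of a finite monoid is invertible), and then checks separately that $j,j^{\ast},x^{\ast}\in\mathcal{G}$ so that $x^{-1}=j^{\ast}x^{\ast}j\in\mathcal{G}$. You instead use the power-recurrence trick to exhibit $x^{-1}=x^{p-1}$ as a power of $x$, which lands in $\mathcal{G}$ by multiplicative closure without ever needing $j,j^{\ast},x^{\ast}\in\mathcal{G}$. Both routes work; yours is slightly more economical for the group claim, while the paper's yields the explicit formula $x^{-1}=j^{\ast}x^{\ast}j$ as a byproduct, which it then reuses.

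Your treatment of the ``moreover'' clause, however, has a genuine gap. None of the attempted manipulations (``conjugating back'', the chain starting from $jx=(x^{\ast})^{-1}j$, or the claimed identity $\{x:xjx^{\ast}=j\}=\{x^{-1}:x\in\mathcal{G}\}$) goes through as written. In particular, for invertible $x$ the condition $x^{-1}\in\mathcal{G}$ is equivalent to $x\in\mathcal{G}$, not to $xjx^{\ast}=j$, so your final identification is circular. The clean fix uses exactly the ingredient you set aside: since $u=j^{\ast}x^{\ast}j$ is a left inverse of $x$ and you have now shown $x$ is invertible, necessarily $x^{-1}=u$, so $x\,j^{\ast}x^{\ast}j=xx^{-1}=1$. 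Finiteness applied to $j$ itself upgrades $j^{\ast}j=1$ to $jj^{\ast}=1$, giving $j^{-1}=j^{\ast}$; hence $xj^{\ast}x^{\ast}=j^{\ast}$, and applying $\ast$ yields $xjx^{\ast}=j$. The reverse inclusion follows by the symmetric computation (starting from $jj^{\ast}=1$). This is exactly how the paper closes the argument.
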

\begin{proof}
Let $x,y\in\mathcal{G}$.
 Then $(xy)^{\ast}j(xy)=y^{\ast}(x^{\ast}jx)y=y^{\ast}jy=j$.
Hence $xy\in\mathcal{G}$ and $\mathcal{G}$ is closed under
multiplication. Further, since $j$ has a left inverse, it is
invertible, $jj^{\ast}=1$ and thus $1,j,j^{\ast}\in\mathcal{G}$. For
$x\in\mathcal{G}$ we have $x^{\ast}jx=j$, hence
$(j^{\ast}x^{\ast}j)x=1$. Thus $x$ is invertible,
$x^{-1}=j^{\ast}x^{\ast}j$ and $1=xx^{-1}=xj^{\ast}x^{\ast}j$. It
follows $j^{\ast}=xj^{\ast}x^{\ast}$ and applying the $\ast$
operation we get $j=xjx^{\ast}=(x^{\ast})^{\ast}jx^{\ast}$, since
$(x^{\ast})^{\ast}=x$. Finally $x^{\ast}\in\mathcal{G}$,
$x^{-1}=j^{\ast}x^{\ast}j\in\mathcal{G}$ and $\mathcal{G}$ is a
group. By a similar argument, $xjx^{\ast}=j$ implies $x^{\ast}jx=j$.
\end{proof}

\begin{corollary}\label{20.2}
$\Sp_{[n_{1},\dots,n_{k}]}$ is a finite subgroup of the monoid
$\mathcal{M}_{[n_{1},\dots,n_{k}]}$.
\end{corollary}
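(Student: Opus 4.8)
The plan is to deduce Corollary~\ref{20.2} as a direct application of Proposition~\ref{19.1}, instantiated with the finite monoid $\mathcal{M}=\mathcal{M}_{[n_{1},\dots,n_{k}]}$, the adjoint operation $\ast$ from Definition~\ref{11}, and $j=J$, where $J=\diag(J_{2},\dots,J_{2})$ as in Definition~\ref{19}. To invoke that proposition, three things must be verified: that $\mathcal{M}_{[n_{1},\dots,n_{k}]}$ is a finite monoid, that $\ast$ is an involutive anti-homomorphism of it, and that $J^{\ast}J$ equals the neutral element of $\mathcal{M}_{[n_{1},\dots,n_{k}]}$, i.e. the $2k\times 2k$ unit matrix.

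The first point is precisely Remark~\ref{11e}(\ref{g}). The second is contained in Remark~\ref{122}(\ref{a2}): there $\ast$ is shown to be an involutive ring antihomomorphism on $\mathcal{M}_{[n_{1},\dots,n_{k}]}$, hence in particular an involutive anti-homomorphism of its multiplicative monoid. For the third point I would compute $J^{\ast}$ straight from the definition of the adjoint. Since $\gcd(n_{i},n_{i})=n_{i}$, the scalar factor $\tfrac{n_{i}}{\gcd(n_{i},n_{i})}$ is $1$, so the $(i,i)$-block of $J$ is $J_{2}$ itself and all off-diagonal blocks vanish; consequently $J^{\ast}=\diag(J_{2}^{T},\dots,J_{2}^{T})$. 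Using $J_{2}^{T}=-J_{2}$ and $J_{2}^{2}=-I_{2}$ one gets $J_{2}^{T}J_{2}=-J_{2}^{2}=I_{2}$, whence $J^{\ast}J=\diag(I_{2},\dots,I_{2})=I_{2k}$, as needed. (That $J$ itself lies in $\mathcal{M}_{[n_{1},\dots,n_{k}]}$ is already recorded in Definition~\ref{19}.)

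With the hypotheses in hand, Proposition~\ref{19.1} gives that $\Sp_{[n_{1},\dots,n_{k}]}=\set{H\in\mathcal{M}_{[n_{1},\dots,n_{k}]}}{H^{\ast}JH=J}$ is a group, and its "moreover" clause additionally identifies it with $\set{H\in\mathcal{M}_{[n_{1},\dots,n_{k}]}}{HJH^{\ast}=J}$. Finiteness is immediate because $\Sp_{[n_{1},\dots,n_{k}]}$ is by definition a subset of the finite set $\mathcal{M}_{[n_{1},\dots,n_{k}]}$. I do not expect any genuine obstacle here: the substantive content has been absorbed into Proposition~\ref{19.1} together with Remarks~\ref{11e} and~\ref{122}, and the only step demanding an actual (but trivial) computation is the identity $J^{\ast}J=I_{2k}$, which in turn comes down to the single $2\times 2$ fact $J_{2}^{T}J_{2}=I_{2}$.
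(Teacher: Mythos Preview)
Your proposal is correct and follows exactly the route the paper intends: the corollary is stated immediately after Proposition~\ref{19.1} without a separate proof, precisely because it is obtained by instantiating that proposition with $\mathcal{M}=\mathcal{M}_{[n_{1},\dots,n_{k}]}$, the adjoint $\ast$, and $j=J$, using Remarks~\ref{11e} and~\ref{122} for the hypotheses. The only nitpick is that the passage of $\ast$ from $\mathcal{S}_{[n_{1},\dots,n_{k}]}$ to $\mathcal{M}_{[n_{1},\dots,n_{k}]}$ uses Remark~\ref{122}(\ref{b2}) as well as~(\ref{a2}); otherwise your verification (including $J^{\ast}J=I_{2k}$) is exactly what is needed.
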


\begin{proposition}\label{23}
Let $H=(h_{ij})_{i,j=1,\dots,2k}\in\mathcal{M}_{[n_{1},\dots,n_{k}]}$,
$h_{ij}=\frac{n_{\lceil i/2\rceil}}{\gcd(n_{\lceil i/2\rceil},
n_{\lceil j/2\rceil})}a_{ij}$ and $a_{ij}\in\Z_{n_{\lceil i/2\rceil}}$
for $i,j=1,\dots,2k$. Then $H\in\Sp_{[n_{1},\dots,n_{k}]}$ if and only if
$$\sum^{k}_{m=1}\frac{n_{\lceil i/2\rceil}}{\gcd(n_{m},
n_{\lceil i/2\rceil})}\cdot\frac{n_{m}}{\gcd(n_{m},n_{\lceil
j/2\rceil})}
(a_{2m-1,i}a_{2m,j}-a_{2m-1,j}a_{2m,i})\equiv_{n_{\lceil i/2\rceil}}
w_{ij}$$ for every $i,j=1,\dots,2k$ (where
$J=(w_{ij})_{i,j=1,\dots,2k}\in\Sp_{[n_{1},\dots,n_{k}]}$).
\end{proposition}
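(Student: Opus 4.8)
The plan is to unwind Definition \ref{19} and the congruence $\equiv$ of Definition \ref{11}, after which the claim becomes a direct entrywise computation. By Definition \ref{19}, $H\in\Sp_{[n_{1},\dots,n_{k}]}$ exactly when $H^{\ast}JH=J$ holds in the ring $\mathcal{M}_{[n_{1},\dots,n_{k}]}$; passing to integer representatives in $\mathcal{S}_{[n_{1},\dots,n_{k}]}$ (allowed by Remark \ref{11e}(\ref{b})) this says $H^{\ast}JH\equiv J$, that is
$$D(H^{\ast}JH)\equiv_{\ell}DJ.$$
So it suffices to evaluate the integer $2k\times 2k$ matrix $D(H^{\ast}JH)$ entrywise and compare it modulo $\ell$ with $DJ$.

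The first step I would take is to remove the adjoint. Applying Remark \ref{11e}(\ref{e}) to $H^{\ast}$ and using $(H^{\ast})^{\ast}=H$ (Remark \ref{122}(\ref{a2})) gives $DH^{\ast}=H^{T}D$, whence $D(H^{\ast}JH)=(DH^{\ast})(JH)=H^{T}\Omega H$ with $\Omega:=DJ=\diag\big(\frac{\ell}{n_{1}}J_{2},\dots,\frac{\ell}{n_{k}}J_{2}\big)$. Thus $H\in\Sp_{[n_{1},\dots,n_{k}]}$ iff $H^{T}\Omega H\equiv_{\ell}\Omega$, which has the familiar symplectic shape and expands cleanly: since $\Omega$ is block-diagonal with $2\times2$ blocks $\frac{\ell}{n_{m}}J_{2}$, only index pairs in a single common block $m$ contribute to any given entry, so
$$\big(H^{T}\Omega H\big)_{ij}=\sum_{m=1}^{k}\frac{\ell}{n_{m}}\big(h_{2m-1,i}h_{2m,j}-h_{2m,i}h_{2m-1,j}\big).$$
Substituting $h_{2m-1,i}=\frac{n_{m}}{\gcd(n_{m},n_{\lceil i/2\rceil})}a_{2m-1,i}$ and the analogous expressions for the three remaining factors and collecting scalars, the $m$-th summand becomes $\frac{\ell\,n_{m}}{\gcd(n_{m},n_{\lceil i/2\rceil})\gcd(n_{m},n_{\lceil j/2\rceil})}(a_{2m-1,i}a_{2m,j}-a_{2m-1,j}a_{2m,i})$, and the coefficient here equals
$$\frac{\ell}{n_{\lceil i/2\rceil}}\cdot\frac{n_{\lceil i/2\rceil}}{\gcd(n_{m},n_{\lceil i/2\rceil})}\cdot\frac{n_{m}}{\gcd(n_{m},n_{\lceil j/2\rceil})},$$
whose second and third factors are integers (this divisibility is the content of Remark \ref{11e}(\ref{c}) and is exactly why $H^{\ast}JH$ stays in $\mathcal{S}_{[n_{1},\dots,n_{k}]}$). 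Since the $(i,j)$ entry of $DJ$ is $\frac{\ell}{n_{\lceil i/2\rceil}}w_{ij}$, the condition $(H^{T}\Omega H)_{ij}\equiv_{\ell}(DJ)_{ij}$ reads $\frac{\ell}{n_{\lceil i/2\rceil}}S_{ij}\equiv_{\ell}\frac{\ell}{n_{\lceil i/2\rceil}}w_{ij}$, where $S_{ij}$ is precisely the sum on the left-hand side of the asserted congruence. As $n_{\lceil i/2\rceil}\mid\ell$, Remark \ref{11e}(\ref{a}) cancels the common factor $\frac{\ell}{n_{\lceil i/2\rceil}}$ and replaces $\equiv_{\ell}$ by $\equiv_{n_{\lceil i/2\rceil}}$, giving the stated equivalence for every $i,j=1,\dots,2k$; since every step is an equivalence, both directions are obtained at once.

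Essentially everything above is bookkeeping, but there are three substantive points at which one could slip: (i) one must remember that the equality $H^{\ast}JH=J$ in $\mathcal{M}_{[n_{1},\dots,n_{k}]}$ means $D(H^{\ast}JH)\equiv_{\ell}DJ$ and nothing coarser; (ii) the rewriting $D(H^{\ast}JH)=H^{T}(DJ)H$ via Remark \ref{11e}(\ref{e}) is what lets one avoid expressing the adjoint in scalar coordinates; and (iii) the divisibility of Remark \ref{11e}(\ref{c}) is what legitimises cancelling $\frac{\ell}{n_{\lceil i/2\rceil}}$ before applying Remark \ref{11e}(\ref{a}). The only real nuisance is keeping the ceiling-function dictionary between the $k\times k$ block indices and the $2k$ scalar indices consistent throughout the expansion.
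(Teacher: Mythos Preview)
Your argument is correct and follows the same route as the paper: both proofs simply expand the defining relation $H^{\ast}JH=J$ entrywise. The paper does this in one line by plugging in the scalar formula $h^{\ast}_{ij}=\frac{n_{\lceil i/2\rceil}}{\gcd(n_{\lceil i/2\rceil},n_{\lceil j/2\rceil})}a_{ji}$ directly, whereas you first use the identity $DH^{\ast}=H^{T}D$ to recast the condition as $H^{T}(DJ)H\equiv_{\ell}DJ$ before expanding; this is a tidy organisational device but not a different idea. One small quibble: the integrality of the factors $\frac{n_{\lceil i/2\rceil}}{\gcd(n_{m},n_{\lceil i/2\rceil})}$ and $\frac{n_{m}}{\gcd(n_{m},n_{\lceil j/2\rceil})}$ is immediate from the definition of $\gcd$ and does not require Remark \ref{11e}(\ref{c}); that remark is rather what guarantees closure of $\mathcal{S}_{[n_{1},\dots,n_{k}]}$ under multiplication, which you use implicitly when asserting $H^{\ast}JH\in\mathcal{S}_{[n_{1},\dots,n_{k}]}$.
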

\begin{proof}
We only transcribe the equation $H^{\ast}JH=J$ using
$h^{\ast}_{ij}=\frac{n_{\lceil i/2\rceil}}{\gcd(n_{\lceil
i/2\rceil},n_{\lceil j/2\rceil})}a_{ji}$, $w_{2m-1,2m}=1$,
$w_{2m,2m-1}=-1$ for $m=1,\dots,k$ and $w_{ij}=0$ otherwise.
\end{proof}

Due to \ref{19} the new groups $\Sp_{[n_{1},\dots,n_{k}]}$ represent
very specific generalization of symplectic groups over modular
rings, thus providing sufficient reason for our notation. Clearly,
for composite systems consisting of subsystems of equal dimensions
$n_{1}=\dots =n_{k}$, the new groups reduce to the well known
symplectic groups \cite{symplectic}:
\begin{corollary}\label{20.3}
If $n_1=\dots=n_k= n$, i.e. $N=n^k$, the symmetry group is
$\Sp_{[n,\dots,n]}\cong \Sp_{2k}(\Z_n)$.
\end{corollary}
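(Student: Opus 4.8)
The plan is to unwind every definition of Section \ref{sec3} in the special case $n_{1}=\dots=n_{k}=n$ and watch the whole construction collapse onto the classical one. First I would observe that $\gcd(n_{i},n_{j})=n$ for all $i,j$, so every scaling factor $n_{i}/\gcd(n_{i},n_{j})$ equals $1$; hence in Definition \ref{11} the blocks satisfy $H_{ij}=A_{ij}$ with $A_{ij}$ ranging freely over $\M_{2}(\Z)$, i.e. $\mathcal{S}_{[n,\dots,n]}=\M_{k}(\M_{2}(\Z))$, which under the standard identification of a $k\times k$ array of $2\times2$ blocks with a $2k\times2k$ matrix is just $\M_{2k}(\Z)$ as a ring. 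Next, $\ell=\lcm(n,\dots,n)=n$ and $D=\diag(I_{2},\dots,I_{2})=I_{2k}$, so the congruence $\equiv$ becomes ordinary entrywise congruence modulo $n$; by Remark \ref{11e}(\ref{b}) we get $\mathcal{M}_{[n,\dots,n]}=\mathcal{S}_{[n,\dots,n]}/_{\equiv}\cong\M_{2k}(\Z_{n})$ as rings. (The same isomorphism also follows from Proposition \ref{31}, since $\End(\Z_{n}^{2}\times\cdots\times\Z_{n}^{2})=\End(\Z_{n}^{2k})\cong\M_{2k}(\Z_{n})$.)

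Then I would identify the $\ast$-structure and the form $J$ under this isomorphism. From Definition \ref{11}, $(H^{\ast})_{ij}=A_{ji}^{T}$, and a direct check of indices shows that the $2k\times2k$ matrix with $(i,j)$-block $A_{ji}^{T}$ is exactly the full transpose of the $2k\times2k$ matrix with $(i,j)$-block $A_{ij}$: transposing the big matrix sends the $(i,j)$-block to the transpose of the $(j,i)$-block. Hence the involution $\ast$ of $\mathcal{M}_{[n,\dots,n]}$ corresponds to the usual matrix transpose on $\M_{2k}(\Z_{n})$. Likewise $J=\diag(J_{2},\dots,J_{2})$ is, under the identification, precisely the standard block-diagonal symplectic form on $\Z_{n}^{2k}$.

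Putting these together, the defining condition $H^{\ast}JH=J$ of Definition \ref{19} becomes $H^{T}JH=J$ over $\Z_{n}$, which is exactly the defining relation of the symplectic group $\Sp_{2k}(\Z_{n})$ \cite{symplectic}. Since the identification $\mathcal{M}_{[n,\dots,n]}\cong\M_{2k}(\Z_{n})$ is a ring isomorphism, it restricts to a group isomorphism $\Sp_{[n,\dots,n]}\cong\Sp_{2k}(\Z_{n})$, which is the claim. There is no real obstacle here: the only step requiring any care — and it is entirely routine — is verifying that the block-wise adjoint of Definition \ref{11} coincides with the ordinary transpose of the assembled $2k\times2k$ matrix, and that under the same bookkeeping $J=\diag(J_{2},\dots,J_{2})$ is the standard symplectic form; everything else is a direct substitution into the definitions.
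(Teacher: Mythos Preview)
Your proposal is correct and matches the paper's approach: the paper states this corollary as an immediate (``clearly'') consequence of Definition~\ref{19}, and your argument simply makes explicit the routine unwinding of the definitions --- all scaling factors become $1$, $D=I_{2k}$, $\equiv$ becomes reduction mod $n$, $\ast$ becomes ordinary transpose, and the defining relation $H^{\ast}JH=J$ reads $H^{T}JH=J$ over $\Z_{n}$. The only point worth a brief remark is that the paper later (Section~\ref{sec6}) uses the alternative form $J'=\big(\begin{smallmatrix}0&-I_{k}\\ I_{k}&0\end{smallmatrix}\big)$ for $\Sp_{2k}(\Z_{p})$, but this is conjugate to your block-diagonal $J$ by a permutation matrix, so the two definitions give isomorphic groups.
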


These cases are of particular interest, since they uncover
symplectic symmetry of $k$-partite systems composed of subsystems
with the same dimensions. This circumstance was found, to our
knowledge, first in \cite{PST06} for $k=2$ under additional
assumption that $n=p$ is prime, leading to $\Sp_{4}(\mathrm{F}_p)$
over the field $\mathrm{F}_p$. We have generalized this result in
\cite{normalizer} to bipartite systems with arbitrary (non-prime)
$n=m$ yielding the symmetry group $\Sp_{4}(\mathbb{Z}_n)$ over the
modular ring $\mathbb{Z}_n$. The above corollary \ref{20.3} extends
this fact also to multipartite systems. Similar result has
independently been obtained in \cite{Han}, where symmetries of
tensored Pauli gradings of $\sl_{n^k}(\C)$ were investigated.

\section{Characterization of $\Sp_{[n_{1},\dots,n_{k}]}$}\label{sec4}

In this section we are going to prove theorem \ref{30} describing by
which elements the group $\Sp_{[n_{1},\dots,n_{k}]}$ is generated.
Let $n_{1},\dots,n_{k}\in\N$ be again fixed numbers.
\begin{definition}\label{24}
Let $\ell\in\Z$, $1\leq i<j\leq k$.
 We define special matrices
 $G_{ij}(\ell)\in\mathcal{M}_{[n_{1},\dots,n_{k}]}$
with $2\times 2$ blocks
 \begin{displaymath}\Big(G_{ij}(\ell)\Big)_{rs}:=
 \left\{\begin{array}{lll} I_{2} & \textrm{if $r=s$} \\
 \frac{n_{r}}{\gcd(n_{r},n_{s})}\ell\cdot
 \big(\begin{smallmatrix}0&0\\1&0\end{smallmatrix}\big) &
 \textrm{if $(r,s)=(i,j),(j,i)$}\\0 & \textrm{otherwise}\end{array}\right.
 \end{displaymath} where $r,s=1,\dots,k$.
\end{definition}

Further we note that
$$\SL_{2}(\Z_{n_{1}})\times\cdots\times\SL_{2}(\Z_{n_{k}})\cong$$
$$\cong\big\{\diag(H_{1},\dots,H_{k})\in\mathcal{M}_{[n_{1},\dots,n_{k}]}
\Big| \ \ H_{i}\in\M_{2}(\Z_{n_{i}})\ \&\ \det
H_{i}\equiv_{n_{i}}1\big\}.$$
 Thus we can assume
 $\SL_{2}(\Z_{n_{1}})\times\cdots\times\SL_{2}(\Z_{n_{k}})$
 to be naturally embedded into $\Sp_{[n_{1},\dots,n_{k}]}$.

\begin{lemma}\label{26}
$G_{ij}(\ell)=G_{ij}(1)^{\ell}$ for every $\ell\in\Z$ and $1\leq i<j\leq k$ and
 $G_{ij}(1)\in\Sp_{[n_{1},\dots,n_{k}]}$.
\end{lemma}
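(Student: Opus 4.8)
The plan is to verify both assertions by direct computation with the $2\times 2$ block structure introduced in Definitions \ref{11} and \ref{24}, reducing everything to identities of ordinary integer matrices modulo the relevant moduli. For the first claim, $G_{ij}(\ell)=G_{ij}(1)^{\ell}$, I would observe that $G_{ij}(1)=I+N$ where $N$ is the block matrix whose only nonzero blocks are in positions $(i,j)$ and $(j,i)$, each equal to $\frac{n_r}{\gcd(n_r,n_s)}\bigl(\begin{smallmatrix}0&0\\1&0\end{smallmatrix}\bigr)$ for the appropriate $(r,s)$. The key point is that $N^2=0$ as an element of $\mathcal{M}_{[n_1,\dots,n_k]}$: the $(i,i)$ block of $N^2$ is (a multiple of) $\bigl(\begin{smallmatrix}0&0\\1&0\end{smallmatrix}\bigr)^2 = 0$, and similarly for the $(j,j)$ block, while all other blocks vanish because $N$ has entries only in the $i$-th and $j$-th block-rows/columns crossing each other. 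Hence $(I+N)^{\ell}=I+\ell N = G_{ij}(\ell)$ for every $\ell\in\Z$ (including negatives, interpreting $\ell N$ modulo the moduli built into $\mathcal{M}_{[n_1,\dots,n_k]}$). One should double-check that the scalar bookkeeping $\frac{n_r}{\gcd(n_r,n_s)}$ behaves correctly under the product — but this is exactly the content of Remark \ref{11e}(\ref{c}), which guarantees the relevant products land back in $\mathcal{M}_{[n_1,\dots,n_k]}$ and that the off-diagonal factors compose as they should.

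For the second claim, $G_{ij}(1)\in\Sp_{[n_1,\dots,n_k]}$, I would apply Proposition \ref{23}: with $H=G_{ij}(1)$, write $H=(h_{rs})$ in the $2k\times 2k$ scalar form with coefficients $a_{rs}\in\Z_{n_{\lceil r/2\rceil}}$, and check the displayed congruence. The matrix $G_{ij}(1)$ is so sparse that almost all of the $a_{rs}$ are the identity-block entries ($a_{2m-1,2m-1}=a_{2m,2m}=1$, rest zero on the diagonal blocks) together with the single nontrivial off-diagonal entry $a_{2i,2j-1}$ (coming from the $(i,j)$ block $\bigl(\begin{smallmatrix}0&0\\1&0\end{smallmatrix}\bigr)$) and the symmetric one $a_{2j,2i-1}$ from the $(j,i)$ block. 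Plugging these into the bilinear expression $\sum_m \frac{n_{\lceil r/2\rceil}}{\gcd(n_m,n_{\lceil r/2\rceil})}\cdot\frac{n_m}{\gcd(n_m,n_{\lceil s/2\rceil})}\,(a_{2m-1,r}a_{2m,s}-a_{2m-1,s}a_{2m,r})$, the only surviving terms come from $m=i$ and $m=j$; for $m\ne i,j$ the pair $(a_{2m-1,*},a_{2m,*})$ picks out entries of the identity block and reproduces exactly $w_{rs}$, while the cross terms at $m=i,j$ either vanish or cancel in pairs because $G_{ij}(1)$ contributes at most one nonzero entry in each of the relevant rows. So the sum equals $w_{rs}$ for all $r,s$, which is precisely the condition $H^*JH=J$.

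A cleaner alternative for the second claim — which I would actually prefer to write — is to use Proposition \ref{19.1} together with the explicit factorization of $G_{ij}(1)$ through a single $\SL_2$ factor. Concretely, $\bigl(\begin{smallmatrix}1&0\\1&1\end{smallmatrix}\bigr)\in\SL_2(\Z)$, and when $n_i=n_j$ the matrix $G_{ij}(1)$ is literally a block-transvection that one recognizes as symplectic by the standard $2k\times 2k$ symplectic identity; the general case follows because the scaling $\frac{n_r}{\gcd(n_r,n_s)}$ is exactly what is forced by the congruence $\equiv$ defining $\mathcal{M}_{[n_1,\dots,n_k]}$, so $G_{ij}(1)^*JG_{ij}(1)=J$ reduces modulo each $n_r$ to the corresponding symplectic relation. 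Either way, one checks $G_{ij}(1)^*=G_{ij}(-1)=G_{ij}(1)^{-1}$ is consistent with the group structure of Corollary \ref{20.2}.

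The main obstacle I anticipate is purely organizational: keeping the index conventions straight between the $k\times k$ block picture (indices $r,s\in\{1,\dots,k\}$, blocks scaled by $\frac{n_r}{\gcd(n_r,n_s)}$) and the $2k\times 2k$ scalar picture of Proposition \ref{23} (indices $i,j\in\{1,\dots,2k\}$, ceilings $\lceil i/2\rceil$), and making sure the nilpotency $N^2=0$ is asserted in $\mathcal{M}_{[n_1,\dots,n_k]}$ rather than naively over $\Z$. There is no genuine mathematical difficulty — both statements are matrix identities — but the factor-of-two index shift and the gcd-scaling are easy to mishandle, so the proof should be written with the block notation fixed once and the scalar expansion invoked only where Proposition \ref{23} demands it.
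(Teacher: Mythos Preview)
Your approach is correct and genuinely different from the paper's. The paper argues by symmetry: a permutation $\pi$ of $\{1,\dots,k\}$ induces a ring isomorphism $\varphi_\pi:\mathcal{M}_{[n_1,\dots,n_k]}\to\mathcal{M}_{[n_{\pi(1)},\dots,n_{\pi(k)}]}$ respecting the symplectic condition, so it suffices to treat $(i,j)=(1,2)$; since all nontrivial blocks of $G_{12}(\ell)$ sit in the first two block-rows and block-columns, both assertions then reduce to the bipartite case $k=2$, which is simply quoted from the authors' earlier paper \cite{normalizer}. Your direct route --- the nilpotency $N^2=0$ (valid already over $\Z$, since every block of $N^2$ involves $\bigl(\begin{smallmatrix}0&0\\1&0\end{smallmatrix}\bigr)^2=0$) for the power law, and a block verification of $H^*JH=J$ for membership in $\Sp_{[n_1,\dots,n_k]}$ --- is more self-contained and actually carries out the computation rather than outsourcing it; the paper's route is shorter on the page but hides the work in an external reference. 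If you want to tighten the second check and avoid the index bookkeeping you flag as the hazard, expand $(I+N^*)J(I+N)=J+(N^*J+JN)+N^*JN$ and use the $2\times2$ identities $J_2\bigl(\begin{smallmatrix}0&0\\1&0\end{smallmatrix}\bigr)+\bigl(\begin{smallmatrix}0&1\\0&0\end{smallmatrix}\bigr)J_2=0$ and $\bigl(\begin{smallmatrix}0&1\\0&0\end{smallmatrix}\bigr)J_2\bigl(\begin{smallmatrix}0&0\\1&0\end{smallmatrix}\bigr)=0$ to kill the extra terms blockwise. One small correction: your closing remark that $G_{ij}(1)^*=G_{ij}(-1)$ is false --- the adjoint transposes each off-diagonal $2\times2$ block rather than negating it --- but nothing in your main argument depends on this.
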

\begin{proof}
First consider a permutation $\pi$ of the set $\{1,\dots,k\}$. It
induces an isomorphism
$\varphi_{\pi}:\mathcal{M}_{[n_{1},\dots,n_{k}]}\to
\mathcal{M}_{[n_{\pi(1)},\dots,n_{\pi(k)}]}$. It is clear that
$H\in\Sp_{[n_{1},\dots,n_{k}]}$ if and only if
$\varphi_{\pi}(H)\in\Sp_{[n_{\pi(1)},\dots,n_{\pi(k)}]}$ for every
$H\in\mathcal{M}_{[n_{1},\dots,n_{k}]}$. Hence it is enough to show
our assertion for $G_{12}(\ell)$ only and this is equivalent to the
case $k=2$ which was already treated in \cite[A.4]{normalizer},
where $G_{12}(\ell)$ was denoted $ r(\ell)$.
\end{proof}

\begin{remark}\label{24.1}
Let $u=(a,b)^{T}\in\Z^{2}$.
Then there are $A,A'\in\SL_{2}(\Z)$ such that
$Au=(0,\gcd(a,b))^{T}$ and $A'u=(\gcd(a,b),0)^{T}$.

 \noindent We can assume $u\neq 0$. Then there are $k,l\in\Z$ such
that $ka+lb=\gcd(a,b)=:d$. Now just put
$A=\Big(\begin{smat2}{b/d}{-a/d}{k}{l}\end{smat2}\Big)$ and
$A'=J_{2}A$.
\end{remark}

Now let $\mathcal{G}$ denote the subgroup of
$\Sp_{[n_{1},\dots,n_{k}]}$ which is generated by
$\SL_{2}(\Z_{n_{1}})\times\cdots\times\SL_{2}(\Z_{n_{k}})$ and
$\set{G_{ij}(1)}{1\leq i<j\leq k}$. We are going to prove theorem
\ref{30} that $\mathcal{G}=\Sp_{[n_{1},\dots,n_{k}]}$. For this some
auxiliary notions are needed.

\begin{remark}\label{rem4.4}
\begin{enumerate}

\item\label{a1}  Consider the elements of
$\mathcal{S}_{[n_{1},\dots,n_{k}]}$ as $k\times k$ matrices of
$2\times 2$ blocks. Let $\Sigma_{k}$ be the set of all last (i.e.
the $k$-th) columns of the elements of
$\mathcal{S}_{[n_{1},\dots,n_{k}]}$ and, similarly, let
$\Sigma^{\ast}_{k}$ be the set of all last (i.e. the $k$-th) rows of
the elements of $\mathcal{S}_{[n_{1},\dots,n_{k}]}$.  Clearly, the
involution $\ast$ on $\mathcal{S}_{[n_{1},\dots,n_{k}]}$ induces a
bijection $\Sigma_{k}\to\Sigma^{\ast}_{k}$ (we will use the same
notation for it).

\item\label{b1} The congruence $\equiv$ on
$\mathcal{S}_{[n_{1},\dots,n_{k}]}$ induces naturally equivalences
on $\Sigma_{k}$ and $\Sigma^{\ast}_{k}$ (we will use again the same
notation for them and denote $[U]$ the equivalence class containing
an element $U$). Hence it easily follows that $U,U'\in\Sigma_{k}$,
$U\equiv U'$ and $H, H'\in\mathcal{S}_{[n_{1},\dots,n_{k}]}$,
$H\equiv H'$ imply $U^{\ast}\equiv(U')^{\ast}$ and $HU\equiv H'U'$.
Moreover, $(HU)^{\ast}=U^{\ast}H^{\ast}$.

\item\label{d1}  Now, put $\Omega_{k}:=\Sigma_{k}/_{\equiv}$
and $\Omega^{\ast}_{k}:=\Sigma^{\ast}_{k}/_{\equiv}$. By (\ref{a1}),
(\ref{b1}) and \ref{122}, we have a well defined map
$\Omega_{k}\to\Omega^{\ast}_{k}$ induced by $\ast$ and there is a
natural action (via the matrix multiplication) of  the ring
$\mathcal{M}_{[n_{1},\dots,n_{k}]}$  on the set $\Omega_{k}$.

\item\label{e1} Let $U,U'\in\Sigma_{k}$, $U\equiv U'$ and
$T, T'\in\Sigma^{\ast}_{k}$, $T\equiv T'$. Then $TU\equiv_{n_{k}}T'U'$.

  \noindent (Clearly, there are
$H,H'\in\mathcal{S}_{[n_{1},\dots,n_{k}]}$ such that $U$ ($U'$) is
the last column of $H$ ($H'$) and $H\equiv H'$. Similarly, there are
$G,G'\in\mathcal{S}_{[n_{1},\dots,n_{k}]}$ such that $T$ ($T'$) is
the last row of $G$ ($G'$) and $G\equiv G'$. Then $TU$ ($T'U'$) is
the block on the $(k,k)$-position of the matrix $GH$ ($G'H'$). By
\ref{11e} part (\ref{f}) we have $GH\equiv G'H'$ and thus
$TU\equiv_{n_{k}}T'U'$.)
\end{enumerate}
\end{remark}

Now we define the set
\begin{equation} \label{Delta}
\Delta_{k}:=\set{[U]\in\Omega_{k}}{U^{\ast}JU\equiv_{n_{k}}J_{2}}.
\end{equation}
It is by \ref{rem4.4} part (\ref{e1}) well defined. Using parts
(\ref{b1}) and (\ref{d1}) we see that $\Delta_{k}$ is invariant
under the action of the group $\Sp_{[n_{1},\dots,n_{k}]}$ (this
action is a restriction of the action of
$\mathcal{M}_{[n_{1},\dots,n_{k}]}$ on $\Omega_{k}$ that was
considered above).

\begin{proposition}\label{29}
$\mathcal{G}$ acts transitively on $\Delta_{k}$.
\end{proposition}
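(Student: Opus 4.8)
\emph{Approach.} Since $\mathcal{G}$ is a group, transitivity on $\Delta_{k}$ is equivalent to the assertion that every $[U]\in\Delta_{k}$ lies in the $\mathcal{G}$-orbit of one fixed class; the plan is to take $[E_{k}]$, where $E_{k}=(0,\dots,0,I_{2})^{T}$ is the last block-column of the identity (clearly $E_{k}^{\ast}JE_{k}=J_{2}$, so $[E_{k}]\in\Delta_{k}$), and to show every $[U]$ can be moved to it. Write a representative as $U=(U_{1},\dots,U_{k})^{T}$ with $U_{i}=\frac{n_{i}}{\gcd(n_{i},n_{k})}A_{i}$, $A_{i}\in\M_{2}(\Z)$; expanding $U^{\ast}JU$ blockwise and using $A^{T}J_{2}A=(\det A)J_{2}$, the condition $[U]\in\Delta_{k}$ becomes
\[
\sum_{j=1}^{k}\frac{n_{k}n_{j}}{\gcd(n_{j},n_{k})^{2}}\det A_{j}\ \equiv_{n_{k}}\ 1 .
\]
I will use throughout how the generators act on a block-column by left multiplication: $\diag(H_{1},\dots,H_{k})$ sends $U_{i}\mapsto H_{i}U_{i}$, and since $\SL_{2}(\Z_{n_{i}})$ surjects onto $\SL_{2}(\Z_{\gcd(n_{i},n_{k})})$ this realizes arbitrary $\SL_{2}$ row operations on $A_{i}$ read modulo $\gcd(n_{i},n_{k})$; the transvection $G_{ij}(\ell)$ ($i<j$) adds a prescribed multiple of a modification of the $j$-th block to the $i$-th block and symmetrically, leaving the other blocks fixed; Remark~\ref{24.1} supplies the $\gcd$-reductions. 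Note also that $\equiv$ on columns is just the blockwise congruence $U_{i}\equiv_{n_{i}}U_{i}'$.

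\emph{Induction on $k$.} For $k=1$ one has $\mathcal{G}=\SL_{2}(\Z_{n_{1}})$ acting on $\Delta_{1}=\SL_{2}(\Z_{n_{1}})$ by left multiplication, which is transitive (the case $k=2$ was already treated in \cite{normalizer}). For the inductive step the key claim is that $\mathcal{G}$ moves $[U]$ to a representative with $U_{1}=0$. Granting this, let $\mathcal{G}'\leq\mathcal{G}$ be the subgroup generated by $\SL_{2}(\Z_{n_{2}})\times\cdots\times\SL_{2}(\Z_{n_{k}})$ and the $G_{ij}(1)$ with $2\leq i<j\leq k$. Then $\mathcal{G}'$ preserves the locus $U_{1}=0$ and acts on $(U_{2},\dots,U_{k})^{T}$ exactly as the group $\mathcal{G}$ attached to the $(k-1)$-partite system $(n_{2},\dots,n_{k})$ acts on its last block-column; with $U_{1}=0$ the displayed congruence is precisely the $\Delta_{k-1}$-condition for $(n_{2},\dots,n_{k})$, and the restriction of $\equiv$ to that locus is the one of $\Omega_{k-1}$. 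Hence the induction hypothesis moves $[(U_{2},\dots,U_{k})]$ to $[(0,\dots,0,I_{2})]$, i.e. $[U]$ to $[E_{k}]$.

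\emph{The key claim, and the main obstacle.} First, using the displayed relation --- which forces a combination of the $\det A_{j}$ to be a unit of $\Z_{n_{k}}$ --- one brings $U$, by a sequence of $\SL_{2}$-factor moves together with the transvections $G_{jk}(\ell)$ and their conjugates (all of which touch only the blocks $j$ and $k$), to a representative in which the last block $U_{k}$ is invertible over $\Z_{n_{k}}$. Once this is achieved, the subgroup of $\mathcal{G}$ generated by $\SL_{2}(\Z_{n_{1}})\times\SL_{2}(\Z_{n_{k}})$ and $G_{1k}(1)$ is, by the permutation-isomorphism used in Lemma~\ref{26}, a copy of the analogous group of the bipartite system $(n_{1},n_{k})$; since the rows of the invertible block $U_{k}$ generate $\Z_{\gcd(n_{1},n_{k})}^{2}$, combining $G_{1k}(\ell)$, its conjugate by $\diag(J_{2},I_{2},\dots,I_{2})$ and row operations from $\SL_{2}(\Z_{n_{1}})$ one can add to $A_{1}$ (modulo $\gcd(n_{1},n_{k})$) each rank-one matrix built from those rows, hence any matrix, and so clear $A_{1}$ entirely; a careful choice of the order of these moves keeps $U_{k}$ invertible until $A_{1}$ has been annihilated. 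This last part is a routine bipartite-style reduction (essentially that of \cite{normalizer}). The genuine obstacle is the preceding step --- turning the numerical relation $\sum_{j}\frac{n_{k}n_{j}}{\gcd(n_{j},n_{k})^{2}}\det A_{j}\equiv_{n_{k}}1$ into the algebraic statement that some block can be made invertible over $\Z_{n_{k}}$. In contrast to the bipartite case several mutually incommensurable moduli $n_{j}$ now interact simultaneously, the transvections $G_{jk}$ perturb $U_{k}$ by the nontrivial factors $\frac{n_{k}}{\gcd(n_{j},n_{k})}$, and the required unit must be assembled from $\det$-contributions coming from different $n_{j}$ (one for each prime power dividing $n_{k}$); this is the part that has to be developed by hand rather than transcribed from \cite{normalizer}.
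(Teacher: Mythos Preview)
Your proposal has a genuine gap at precisely the place you yourself flag as ``the genuine obstacle.'' You assert that the relation
\[
\sum_{j=1}^{k}\frac{n_{k}n_{j}}{\gcd(n_{j},n_{k})^{2}}\det A_{j}\ \equiv_{n_{k}}\ 1
\]
lets one bring $U$ to a representative with $U_{k}$ invertible over $\Z_{n_{k}}$, but you do not carry this out; you only say it ``has to be developed by hand.'' Until that step is actually done, neither the clearing of $U_{1}$ nor the induction on $k$ can get started, so as written the argument is incomplete. Note also that packaging the $\Delta_{k}$-condition purely as a relation among the scalars $\det A_{j}$ throws away exactly the structure that makes the step tractable: the transvections $G_{jk}(\ell)$ do not act on $\det A_{k}$ in any simple way, and assembling a unit of $\Z_{n_{k}}$ prime-by-prime from the various $\det A_{j}$ via these coupled moves is not a routine matter.

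The paper's proof avoids this obstacle altogether by \emph{not} treating $U$ as a column of $2\times 2$ blocks $A_{j}$ with their determinants, but as a pair of vectors $(v,u)$ in the phase space. After an $\SL_{2}$-move puts $u$ in the shape $(d_{(1,k)}a_{1},0,\dots,d_{(k,k)}a_{k},0)^{T}$, the $\Delta_{k}$-condition reads
\[
\sum_{m=1}^{k} d_{(k,m)}d_{(m,k)}\,a_{m}c_{m}\ \equiv_{n_{k}}\ -1,
\]
where the $c_{m}$ are entries of the \emph{other} column $v$. The crucial trick is then to apply $G_{mk}(c_{m})$ for $m=1,\dots,k-1$ (together with one lower-triangular $\SL_{2}(\Z_{n_{k}})$-move): each $G_{mk}(c_{m})$ adds $d_{(k,m)}d_{(m,k)}a_{m}c_{m}$ to the last coordinate of $u$, and the displayed relation guarantees that the total is exactly $-1$. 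This produces a unit in the last entry of $u$ for free, without ever having to argue about invertibility of $U_{k}$ or splice together contributions prime-by-prime. From there the clean-up of $u$ and then of $v$ is straightforward, and no induction on $k$ is needed. In short, the paper exploits the off-diagonal (cross-column) nature of the symplectic relation, whereas your determinant formulation collapses it to a diagonal statement and thereby creates the very obstacle you are unable to overcome.
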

\begin{proof}
In this proof we will consider an element from $\Omega_{k}$ as an
ordered pair of its columns, i.e. as $(v,u)$ where
$v,u\in\mathcal{K}=\Z_{n_{1}}^{2}\times\cdots\times\Z_{n_{k}}^{2}$
are $2k$-tuples.  Note that for $v^{0}=(0,\dots,0,1,0)^T$ and
$u^{0}=(0,\dots,0,1)^T$ the pair $(v^{0},u^{0})$ belongs to
$\Delta_{k}$.

Now assume that some $(v,u)\in\Delta_{k}$ is given. To prove our
assertion, we construct for some $n\in\N$ a sequence of pairs ending
with $(v^{0},u^{0})$, i.e.
$(v,u)=(v_{0},u_{0}),\dots,(v_{n},u_{n})=(v^{0},u^{0})$ in
$\Delta_{k}$ and another sequence of matrices $H_{1},\dots,H_{n}$ in
$\mathcal{G}$ such that $(v_{j+1},u_{j+1})=H_{j+1}(v_{j},u_{j})$ for
$j=0,\dots,n-1$. We divide the proof into several steps.
  We put $d_{(i,j)}=\frac{n_{i}}{\gcd(n_{i},n_{j})}$ and note that
$d_{(i,i)}=1$.

(1) By \ref{24.1}, there are $B_{i}\in\SL_{2}(\Z_{n_{i}})$ for
$i=1,\dots,k$ such that for
$H_{1}:=\diag(B_{1},\dots,B_{k})\in\mathcal{G}$ we have
$$u_{1}:=H_{1}u=\big(d_{(1,k)}a_{1},0,\dots,d_{(k,k)}a_{k},0\big)^{T}$$
for some $a_{i}\in\Z_{n_{i}}$.

(2) Let
$$v_{1}:=\big(d_{(1,k)}b_{1},d_{(1,k)}c_{1},\dots,
d_{(k,k)}b_{k},d_{(k,k)}c_{k}\big)^{T}.$$
Then, by the definition of $\Delta_{k}$, we have
$\sum_{m=1}^{k}d_{(k,m)}d_{(m,k)}a_{m}c_{m}\equiv_{n_{k}}-1$. Put
$H_{2}:=\diag(I_{2},\dots, I_{2},B)\in\mathcal{G}$, where
$B:=\Big(\begin{smat2}{1}{0}{c_{k}}{1}\end{smat2}\Big)$. Then
 $$u_{2}:=H_{2}u_{1}=\big(d_{(1,k)}a_{1},0,\dots,
 d_{(k-1,k-1)}a_{k-1},0,a_{k},a_{k}c_{k}\big)^{T}.$$
Next by induction on $1\leq m\leq k-1$ we get that for
$H_{m+2}:=G_{mk}(c_{m})$ (where $G_{ij}(\ell)$ was defined in
\ref{24})
  $$u_{m+2}:=H_{m+2} u_{m+1}=$$
 $$=\big(\dots,\ d_{(m+1,k)}a_{m+1},0,\dots, d_{(k-1,k)}a_{k-1},0, a_{k},
  \Big(a_{k}c_{k}+\sum_{i=1}^{m}d_{(k,i)}d_{(i,k)}a_{i}c_{i}\Big)\big)^{T}.$$
 Thus $$u_{k+1}=\big(\dots, a_{k}, -1 \big)^{T}.$$

 (3) Using a similar argument as in step (1), we get that there is
$H_{k+2}\in\mathcal{G}$ such that
$$u_{k+2}:=H_{k+2}u_{k+1}=\big(0,d_{(1,k)}a'_{1},
\dots,0,d_{(k-1,k-1)}a'_{k-1}, 1,0 \big)^{T}$$ for some
$a'_{i}\in\Z_{n_{i}}$. Further put $H_{k+3}:=G_{1,k}(-a'_{1})\cdots
G_{k,k}(-a'_{k})\in\mathcal{G}$. Then, clearly,
$$u_{k+3}:=H_{k+3}u_{k+2}=\big(0,\dots,0,1,0\big)^{T}.$$

(4) Using again a similar argument as in step (1), we get that there
is $H_{k+4}\in\mathcal{G}$ such that
$$u_{k+4}:=H_{k+4}u_{k+3}=\big(0,\dots,0,1\big)^{T}$$ and
$$v_{k+4}:=\big(0,d_{(1,k)}b'_{1},\dots,0,
d_{(k-1,k)}b'_{k-1}, b',c'\big)^{T}$$ for some
$b'_{i}\in\Z_{n_{i}}$, $b',c'\in\Z_{n_{k}}$. Now we get from the
defining equation (\ref{Delta}) for $\Delta_{k}$ that
$b'\equiv_{n_{k}}1$. Put
$B':=\Big(\begin{smat2}{1}{0}{-c'}{1}\end{smat2}\Big)$. Then for
$H_{k+5}:=\diag(I_{2},\dots,I_{2},B')\in\mathcal{G}$ we get that
$$u_{k+5}:=H_{k+5}u_{k+4}=\big(0,\dots,0,1\big)^{T}$$
 and
$$v_{k+5}:=H_{k+5}v_{k+4}=
\big(0,d_{(1,k)}b'_{1},\dots, 0,d_{(k-1,k)}b'_{k-1}, 1,0\big)^{T}.$$

So we are in an analogous situation to step (3) and thus there is
$H_{k+6}\in\mathcal{G}$ such that
$$u_{k+6}:=H_{k+6}u_{k+5}=\big(0,\dots,0,1\big)^{T}$$ stays
unchanged  and
   $v_{k+6}:=H_{k+6}v_{k+5}=\big(0,\dots,0,1,0\big)^{T}$.
   \end{proof}

\begin{lemma}\label{29.1}
Let $H\in\mathcal{M}_{[n_{1},\dots,n_{k-1}]}$ and assume that
$T\in\Sigma^{\ast}_{k-1}$ is such that
$\left(\begin{smat2}{H}{0}{T}{I_{2}}\end{smat2}\right)
\in\Sp_{[n_{1},\dots,n_{k}]}$. Then $T=0$ and
$H\in\Sp_{[n_{1},\dots,n_{k-1}]}$.
\end{lemma}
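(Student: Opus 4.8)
The plan is to unwind the defining equation $M^{\ast}JM=J$ of $\Sp_{[n_{1},\dots,n_{k}]}$ (Definition~\ref{19}) for the given matrix $M=\left(\begin{smat2}{H}{0}{T}{I_{2}}\end{smat2}\right)\in\mathcal{M}_{[n_{1},\dots,n_{k}]}$, writing $J=\diag(J_{2},\dots,J_{2})$ in the matching block form $\left(\begin{smat2}{J'}{0}{0}{J_{2}}\end{smat2}\right)$, where $J'=\diag(J_{2},\dots,J_{2})$ with $k-1$ diagonal blocks is precisely the matrix $J$ of the $(k-1)$-partite system. The first step is to describe $M^{\ast}$: by the block formula for $\ast$ (Definition~\ref{11}, cf.\ Remark~\ref{122}), $(M^{\ast})_{ij}$ is built from the $(j,i)$ block of $M$, so since the last block-column of $M$ equals $(0,\dots,0,I_{2})^{T}$ the last block-row of $M^{\ast}$ equals $(0,\dots,0,I_{2})$; concretely $M^{\ast}=\left(\begin{smat2}{H^{\ast}}{T^{\ast}}{0}{I_{2}}\end{smat2}\right)$, with $T^{\ast}$ the column obtained from the row $T$ by $\ast$.

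Next I would carry out the block multiplication $M^{\ast}JM$. This is legitimate with integer representatives followed by reduction modulo $\equiv$, since $\mathcal{S}_{[n_{1},\dots,n_{k}]}$ is a subring of $\M_{k}(\M_{2}(\Z))$, $\mathcal{M}_{[n_{1},\dots,n_{k}]}$ is its quotient by the ring congruence $\equiv$ (Remark~\ref{11e}), and $\ast$ descends to that quotient (Remark~\ref{122}). The result is
\[M^{\ast}JM=\left(\begin{smat2}{H^{\ast}J'H+T^{\ast}J_{2}T}{\,T^{\ast}J_{2}\,}{J_{2}T}{J_{2}}\end{smat2}\right),\]
and comparing it with $J$ block by block splits the claim into two parts.

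The bottom-left block gives $J_{2}T_{j}\equiv0$ for $j=1,\dots,k-1$; writing $T_{j}=\frac{n_{k}}{\gcd(n_{k},n_{j})}C_{j}$ this reads $\frac{n_{k}}{\gcd(n_{k},n_{j})}J_{2}C_{j}\equiv_{n_{k}}0$, equivalent by Remark~\ref{11e}(\ref{a}) to $J_{2}C_{j}\equiv_{\gcd(n_{k},n_{j})}0$; since $J_{2}$ is invertible over $\Z$ (with $J_{2}^{-1}=J_{2}^{T}$) we get $C_{j}\equiv_{\gcd(n_{k},n_{j})}0$ and hence $T_{j}=\frac{n_{k}}{\gcd(n_{k},n_{j})}C_{j}\equiv0$, i.e.\ $T=0$. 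Feeding $T=0$ (so $T^{\ast}=0$) back in, the top-left $(k-1)\times(k-1)$ block of the equation becomes $H^{\ast}J'H=J'$; as the congruence $\equiv$ on an $(i,j)$ block with $i,j\le k-1$ is reduction modulo $n_{i}$ in both $\mathcal{M}_{[n_{1},\dots,n_{k}]}$ and $\mathcal{M}_{[n_{1},\dots,n_{k-1}]}$, this identity already holds in $\mathcal{M}_{[n_{1},\dots,n_{k-1}]}$ and says exactly $H\in\Sp_{[n_{1},\dots,n_{k-1}]}$.

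There is no deep difficulty here --- the argument is bookkeeping --- and the two points needing attention are that $\ast$ distributes over the block decomposition of $M$ in the stated way (so the zero last block-column of $M$ yields a zero last block-row in $M^{\ast}$), and the passage via Remark~\ref{11e}(\ref{a}) from ``the block is $\equiv0$'' to a congruence in which the invertible factor $J_{2}$ may be cancelled. An alternative would be to substitute the explicit entries of $M$ directly into the componentwise criterion of Proposition~\ref{23} and read off the same two conclusions; the block computation seems shorter.
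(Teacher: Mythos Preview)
Your proof is correct and follows essentially the same approach as the paper: compute $M^{\ast}$ in block form, expand $M^{\ast}JM$, and read off from the off-diagonal block that $T=0$ (via invertibility of $J_{2}$) and from the top-left block that $H^{\ast}J'H=J'$. The paper's proof is slightly terser (it writes $T=U^{\ast}$ for $U\in\Sigma_{k}$ and concludes $U^{\ast}=0$ directly from $J_{2}U^{\ast}=0$), but your added justification for cancelling $J_{2}$ and for the compatibility of $\equiv$ on the $(k-1)$-minor is correct and welcome.
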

\begin{proof}
There is $U\in\Sigma_{k}$ such that $T=U^{\ast}$. We have
$$\left(\begin{array}{cc}J&0\\0&J_{2}\end{array}\right)=
\left(\begin{array}{cc}H^{\ast}&U\\0&I_{2}\end{array}\right)
\left(\begin{array}{cc}J&0\\0&J_{2}\end{array}\right)
\left(\begin{array}{cc}H&0\\U^{\ast}&I_{2}\end{array}\right)=
\left(\begin{array}{cc}H^{\ast}JH+UJ_{2}
U^{\ast}&UJ_{2}\\J_{2}U^{\ast}&J_{2}\end{array}\right).$$
Hence $U^{\ast}=0$ and $H^{\ast}JH=J$.
\end{proof}

\begin{theorem}\label{30}
The group $\Sp_{[n_{1},\dots,n_{k}]}$ is generated by
$\SL_{2}(\Z_{n_{1}})\times\cdots\times\SL_{2}(\Z_{n_{k}})$
and $\set{G_{ij}(1)}{1\leq i<j\leq k}$.
\end{theorem}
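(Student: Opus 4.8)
The plan is to prove $\mathcal{G} = \Sp_{[n_1,\dots,n_k]}$ by induction on the number of subsystems $k$, using the transitivity result of Proposition \ref{29} together with the restriction lemma \ref{29.1}. The base case $k=1$ is just $\Sp_{[n_1]} = \SL_2(\Z_{n_1})$, which is immediate from the defining equation $H^{\ast}JH = J$ for a single $2\times 2$ block (here $\ast$ is the transpose and $J = J_2$, so this says $\det H \equiv_{n_1} 1$). So assume the theorem holds for $k-1$ subsystems and take an arbitrary $H \in \Sp_{[n_1,\dots,n_k]}$.

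First I would look at the last column of $H$, viewed as an element $[U] \in \Omega_k$. Since $H^{\ast}JH = J$, comparing the $(k,k)$-block of both sides gives $U^{\ast}JU \equiv_{n_k} J_2$, i.e.\ $[U] \in \Delta_k$. By Proposition \ref{29}, $\mathcal{G}$ acts transitively on $\Delta_k$, so there is $H_1 \in \mathcal{G}$ with $H_1 \cdot [U] = [U^0]$ where $U^0 = (v^0, u^0)$ with $v^0 = (0,\dots,0,1,0)^T$, $u^0 = (0,\dots,0,1)^T$. Then $H_1 H \in \Sp_{[n_1,\dots,n_k]}$ has last column congruent to $U^0$; that is, its $(k,k)$-block is $I_2$ and its blocks in positions $(1,k),\dots,(k-1,k)$ are zero. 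Writing $H_1 H = \left(\begin{smat2}{H'}{0}{T}{I_2}\end{smat2}\right)$ with $H' \in \mathcal{M}_{[n_1,\dots,n_{k-1}]}$ and $T \in \Sigma^{\ast}_{k-1}$, Lemma \ref{29.1} forces $T = 0$ and $H' \in \Sp_{[n_1,\dots,n_{k-1}]}$.

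Now $H_1 H = \diag(H', I_2)$ with $H' \in \Sp_{[n_1,\dots,n_{k-1}]}$. By the inductive hypothesis, $H'$ is a product of elements of $\SL_2(\Z_{n_1})\times\cdots\times\SL_2(\Z_{n_{k-1}})$ and the $G_{ij}(1)$ for $1 \le i < j \le k-1$ (inside the smaller monoid). Each such generator, padded by $I_2$ in the last block coordinate, is exactly the corresponding generator of $\mathcal{G}$ (a diagonal $\SL_2$-block or a $G_{ij}(1)$ with $j < k$); this is a routine check using the block formulas in Definitions \ref{24} and the embedding remark, which I would spell out briefly. Hence $H_1 H = \diag(H', I_2) \in \mathcal{G}$, and therefore $H = H_1^{-1}(H_1 H) \in \mathcal{G}$. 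Since $H$ was arbitrary and $\mathcal{G} \subseteq \Sp_{[n_1,\dots,n_k]}$ by construction, we conclude $\mathcal{G} = \Sp_{[n_1,\dots,n_k]}$.

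The main obstacle is not in this final assembly — which is a clean induction once the pieces are in place — but rather in the already-established Proposition \ref{29}, whose multi-step reduction (clearing the $u$-component to $u^0$ via $\SL_2$-blocks and the matrices $G_{mk}(\cdot)$, then cleaning the $v$-component) is where the real work lies; the present theorem essentially harvests it. The one genuine subtlety here that deserves care is checking that the generators of the smaller group $\mathcal{G}_{k-1}$, when embedded via $H' \mapsto \diag(H', I_2)$, land among the chosen generators of $\mathcal{G}_k$ and not merely in $\Sp_{[n_1,\dots,n_k]}$; this holds because the scaling factors $n_r/\gcd(n_r,n_s)$ in the block entries of $G_{ij}(1)$ depend only on $n_r, n_s$ and are unaffected by adjoining an extra subsystem, so the embedded $G_{ij}(1)$ ($i<j\le k-1$) is literally the $G_{ij}(1)$ of $\mathcal{G}_k$, and likewise $\SL_2(\Z_{n_i})$ for $i \le k-1$ sits inside $\SL_2(\Z_{n_1})\times\cdots\times\SL_2(\Z_{n_k})$ by putting $I_2$ in the $k$-th slot.
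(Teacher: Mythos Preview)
Your proposal is correct and follows essentially the same approach as the paper: use Proposition~\ref{29} to bring the last column to standard form, apply Lemma~\ref{29.1} to obtain a block-diagonal reduction to $\Sp_{[n_1,\dots,n_{k-1}]}$, and then iterate. The paper phrases the iteration as ``repeating this argument several times'' rather than as a formal induction on $k$, and does not spell out the embedding $H'\mapsto\diag(H',I_2)$ of generators that you (rightly) flag as the one point needing a check; otherwise the two arguments coincide.
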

\begin{proof}
Let $G\in\Sp_{[n_{1},\dots,n_{k}]}$ and $U\in\Sigma_{k}$ be the last
column of $G$.  Then $U\in\Delta_{k}$ by \ref{23}. Hence by \ref{29}
there is $G'\in\mathcal{G}$ such that
$G'G=\left(\begin{smat2}{H}{0}{T}{I_{2}}\end{smat2}\right)$ for some
$H\in\mathcal{M}_{[n_{1},\dots,n_{k-1}]}$ and
$T\in\Sigma^{\ast}_{k-1}$. Using \ref{29.1}, we have
$G'G=\left(\begin{smat2}{H}{0}{0}{I_{2}}\end{smat2}\right)$ with
$H\in\Sp_{[n_{1},\dots,n_{k-1}]}$. Now, by repeating this argument
several times, we find $\widetilde{G}\in\mathcal{G}$ such that
$\widetilde{G}G=I_{2k}$. Hence $G=\widetilde{G}^{-1}\in\mathcal{G}$
and we conclude with $\Sp_{[n_{1},\dots,n_{k}]}=\mathcal{G}$.
\end{proof}

\section{The normalizer of $\P_{(n_{1},\dots,n_{k})}$}\label{sec5}

In this section the normalizer is completely described and the main
theorem \ref{37} is proved. It contains our principal result that
the symmetry group, being the quotient of the normalizer, is indeed
isomorphic to $\Sp_{[n_{1},\dots,n_{k}]}$.

For proving the isomorphism between the group
$\mathcal{N}(\mathcal{P}_{(n_{1},\dots,n_{k})})/\mathcal{P}_{(n_{1},\dots,n_{k})}$
and $\Sp_{[n_{1},\dots,n_{k}]}$, we will consider elements of
$\mathcal{M}_{[n_{1},\dots,n_{k}]}$ as matrices $2k\times 2k$
instead of taking them as matrices $k\times k$ of blocks $2\times
2$, as we did so far. More precisely,
$H\in\Sp_{[n_{1},\dots,n_{k}]}$ will be treated as
$H=(h_{ij})_{i,j=1,\dots,2k}$, where
 $$
h_{ij}=\frac{n_{\lceil i/2\rceil}}{\gcd(n_{\lceil i/2\rceil},n_{\lceil j/2\rceil})}a_{ij}
$$
 for some $a_{ij}\in\Z_{n_{\lceil i/2\rceil}}$ and all $i,j=1,\dots,2k$.

\begin{definition}
Define $$\mathcal{N}(\mathcal{P}_{(n_{1},\dots,n_{k})}):=
N_{\Inn(\GL_{n_{1}\cdots n_{k}}(\C))}(\mathcal{P}_{(n_{1},\dots,n_{k})}),$$
 the normalizer of $\mathcal{P}_{(n_{1},\dots,n_{k})}$ in
 $\Inn(\GL_{n_{1}\cdots n_{k}}(\C))$.
 Further define
$$\mathcal{N}(\P_{n}):=N_{\Inn(\GL_{n}(\C))}(\P_{n}),$$
the normalizer of $\P_{n}$ in $\Inn(\GL_{n}(\C))$, and
 $$\mathcal{N}(\P_{n_{1}})\times\cdots\times\mathcal{N}(\P_{n_{k}}):=
 \set{\Ad_{M_{1}\otimes \cdots\otimes M_{k}}}{M_{i}\in\mathcal{N}(\P_{n_{i}})}
 \sub\Inn(\GL_{n_{1}\cdots n_{k}}(\C)).$$
\end{definition}

\begin{remark}\
\begin{enumerate}
\item Clearly,
$\mathcal{N}(\P_{n_{1}})\times\cdots\times\mathcal{N}(\P_{n_{k}})
\subseteq\mathcal{N}(\mathcal{P}_{(n_{1},\dots,n_{k})})$.

\item Consider now the usual natural homomorphism
$$\Psi:\mathcal{N}(\mathcal{P}_{(n_{1},\dots,n_{k})})\to
\Aut(\mathcal{P}_{(n_{1},\dots,n_{k})})$$
 given by
 $$\Psi(\Ad_{M})(\Ad_{X}):=\Ad_{M}\Ad_{X}\Ad_{M}^{-1}$$
 for every
 $\Ad_{M}\in\mathcal{N}(\mathcal{P}_{(n_{1},\dots,n_{k})})$ and
 $\Ad_{X}\in\mathcal{P}_{(n_{1},\dots,n_{k})}$.

\noindent We have $\ker(\Psi)=C_{\Inn(\GL_{n_{1}\dots
n_{k}}(\C))}(\mathcal{P}_{(n_{1},\dots,n_{k})})=\mathcal{P}_{(n_{1},\dots,n_{k})}$,
by lemma \ref{10}.

\item Further we put
$$\lambda_{ij}=\exp\Big(2\pi\mathrm{i}\frac{w_{ij}}{n_{\lceil i/2\rceil}}\Big)$$
for $i,j=1,\dots,2k$, where $w_{ij}$ are the entries of the matrix
$J\in\Sp_{[n_{1},\dots,n_{k}]}$ defined in \ref{19}. Using
(\ref{2A}) and (\ref{2PQ}) we can write the commutation relations
\begin{equation} \label{5PQ}
 A_{i}^{m}A_{j}^{n}=\lambda_{ij}^{mn}A_{j}^{n}A_{i}^{m}
 \end{equation}
 for all pairs $i,j=1,\dots,2k$ and $m,n\in\Z$.
\end{enumerate}
\end{remark}

\begin{lemma}\label{34}
$\Phi\Psi(\mathcal{N}(\mathcal{P}_{(n_{1},\dots,n_{k})}))\sub\Sp_{[n_{1},\dots,n_{k}]}$.
\end{lemma}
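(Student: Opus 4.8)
The goal is to show that for each $\Ad_{M}\in\mathcal{N}(\mathcal{P}_{(n_{1},\dots,n_{k})})$, the automorphism $\Psi(\Ad_{M})$ of $\mathcal{P}_{(n_{1},\dots,n_{k})}$, once transported through the ring isomorphism $\Phi$ of Proposition~\ref{31} to an element $H=\Phi\Psi(\Ad_M)\in\mathcal{M}_{[n_{1},\dots,n_{k}]}$, actually satisfies the symplectic condition $H^{\ast}JH=J$. First I would fix notation: write $H=(h_{ij})_{i,j=1,\dots,2k}$ with $h_{ij}=\frac{n_{\lceil i/2\rceil}}{\gcd(n_{\lceil i/2\rceil},n_{\lceil j/2\rceil})}a_{ij}$ as in the paragraph preceding the statement, so that $\Psi(\Ad_M)(e_{j})=\prod_{i}e_{i}^{h_{ij}}$, i.e.\ conjugation by $M$ sends the generator $A_{j}$ (up to a central phase in $Z(\Pi_{n_{\lceil j/2\rceil}})$) to $\prod_{i=1}^{2k}A_{i}^{h_{ij}}$.

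The key computational step is to feed this into the commutation relations~(\ref{5PQ}). On the one hand, $A_{i}A_{j}=\lambda_{ij}A_{j}A_{i}$ for the original generators. On the other hand, conjugation by $M$ is an automorphism of the ambient group $\GL_N(\C)$, hence preserves commutators; applying $\Ad_M$ to $A_{i}A_{j}=\lambda_{ij}A_{j}A_{i}$ (the $\lambda_{ij}$ being scalars, fixed by $\Ad_M$) and using bilinearity of the commutator pairing of $\mathcal{P}_{(n_{1},\dots,n_{k})}$ gives the identity
$$\sum_{r,s=1}^{2k} h_{ri}\,h_{sj}\,(\text{commutator exponent of }e_r,e_s)\ \equiv\ (\text{commutator exponent of }e_i,e_j)$$
in the appropriate modulus. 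Since $e_{2m-1},e_{2m}$ satisfy $A_{2m-1}A_{2m}=\omega_{n_m}A_{2m}A_{2m-1}$ and distinct blocks commute, the left-hand pairing collapses to $\sum_{m=1}^{k}\frac{1}{n_m}(h_{2m-1,i}h_{2m,j}-h_{2m-1,j}h_{2m,i})$ read mod $1$, which upon clearing denominators is exactly the expression appearing in Proposition~\ref{23}. Matching it against the right-hand side $w_{ij}/n_{\lceil i/2\rceil}$ yields precisely the system of congruences that Proposition~\ref{23} certifies to be equivalent to $H\in\Sp_{[n_{1},\dots,n_{k}]}$. So the strategy is: (i) translate $\Psi(\Ad_M)$ into the matrix $H$ via $\Phi$; (ii) apply $\Ad_M$ to each relation~(\ref{5PQ}) and read off the resulting phase identity; (iii) recognize the identity as the criterion of Proposition~\ref{23}, hence conclude $H\in\Sp_{[n_{1},\dots,n_{k}]}$.

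The main obstacle is bookkeeping the moduli correctly in step~(ii). The exponents $h_{ij}$ live in $\Z_{n_{\lceil i/2\rceil}}$ but are multiplied by exponents $h_{sj}$ living in a possibly different $\Z_{n_{\lceil s/2\rceil}}$, and the resulting phase $\lambda_{rs}^{h_{ri}h_{sj}}$ only depends on $h_{ri}h_{sj}$ modulo $n_{\min}$ of the relevant subsystem order; one must check that the divisibility constraints built into $\mathcal{M}_{[n_{1},\dots,n_{k}]}$ (namely $h_{ij}\in\frac{n_{\lceil i/2\rceil}}{\gcd(n_{\lceil i/2\rceil},n_{\lceil j/2\rceil})}\Z$, cf.\ Remark~\ref{11e}\eqref{c}) make every term well defined and make the cross-terms between different blocks vanish. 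This is the same kind of $\gcd$-juggling already carried out in Remarks~\ref{11e} and~\ref{122}, so it should go through, but it is where care is required. Once the phase identity is pinned down, comparison with Proposition~\ref{23} is immediate and the inclusion follows.
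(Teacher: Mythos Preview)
Your proposal is correct and follows essentially the same route as the paper's proof: translate $\Psi(\Ad_M)$ into $H$ via $\Phi$, exploit that conjugation by $M$ preserves the commutation relations~(\ref{5PQ}) (the paper phrases this as computing $GA_iA_jG^{-1}$ in two ways), read off the resulting phase identity $\sum_m \tfrac{1}{n_m}(h_{2m-1,i}h_{2m,j}-h_{2m-1,j}h_{2m,i})\equiv w_{ij}/n_{\lceil i/2\rceil}\pmod{1}$, and invoke Proposition~\ref{23}. Your cautionary remark about the moduli bookkeeping is apt but, as you anticipate, the divisibility constraints built into $\mathcal{M}_{[n_1,\dots,n_k]}$ make every term well defined, and the paper's explicit computation confirms this without further incident.
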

\begin{proof}
Let $\Ad_{G}\in\mathcal{N}(\mathcal{P}_{(n_{1},\dots,n_{k})})$,
where $G\in\GL_{n_{1}\cdots n_{k}}(\C)$. By \ref{31}, there is
$H=(h_{ij})_{i,j=1,\dots,2k}\in\mathcal{M}_{[n_{1},\dots,n_{k}]}$
such that $\Phi\Psi(\Ad_{G})=H$. Especially for $e_{j} =\Ad_{A_{j}}$
we have
$$\Ad_{GA_{j}G^{-1}}=\Psi(\Ad_{G})(e_{j})=
\prod\limits^{2k}_{i=1}e_{i}^{h_{ij}}=
\prod\limits^{2k}_{i=1}\Ad_{A_{i}^{h_{ij}}}.$$ So there are
constants $0\neq\nu_{j}\in\C$ such that
$$GA_{j}G^{-1}=\nu_{j}A_{1}^{h_{1,j}}\cdots A_{2k}^{h_{2k,j}}$$
for $j=1,\dots,2k$. Further,
$$GA_{i}A_{j}G^{-1}=
GA_{i}G^{-1}GA_{j}G^{-1}=
\nu_{i}\nu_{j}A_{1}^{h_{1,i}}\cdots A_{2k}^{h_{2k,i}}A_{1}^{h_{1,j}}
\cdots A_{2k}^{h_{2k,j}}=$$
$$=\nu_{i}\nu_{j}
\big(\prod\limits^{k}_{m=1}\lambda_{2m,2m-1}^{h_{2m,i}h_{2m-1,j}}\big)
A_{1}^{h_{1,i}+h_{1,j}}\cdots A_{2k}^{h_{2k,i}+h_{2k,j}}$$
using the commutation relations (\ref{5PQ}) (where the only
non-commuting elements are pairs $A_{2m-1}$, $A_{2m}$ for
$m=1,\dots,k$). On the other hand,
$$GA_{i}A_{j}G^{-1}=
\lambda_{ij}GA_{j}A_{i}G^{-1}=
\nu_{i}\nu_{j}\lambda_{ij}
\big(\prod\limits^{k}_{m=1}\lambda_{2m,2m-1}^{h_{2m,j}h_{2m-1,i}}\big)
A_{1}^{h_{1,i}+h_{1,j}}\cdots A_{2k}^{h_{2k,i}+h_{2k,j}}.$$
Thus
$$\prod\limits^{k}_{m=1}e^{-2\pi\mathrm{i}(h_{2m,i}h_{2m-1,j}/n_{m})}=
\lambda_{ij}\prod\limits^{k}_{m=1}e^{-2\pi\mathrm{i}(h_{2m,j}h_{2m-1,i}/n_{m})}$$
for every $i,j=1,\dots,2k$, i.e.
$$\exp\Big(2\pi\mathrm{i}\big(-\frac{w_{ij}}{n_{\lceil i/2\rceil}}+
\sum^{k}_{m=1}\frac{h_{2m-1,i}h_{2m,j}-h_{2m-1,j}h_{2m,i}}{n_{m}}\big)\Big)=1.$$
Since $h_{ij}=\frac{n_{\lceil i/2\rceil}}{\gcd(n_{\lceil i/2\rceil},n_{\lceil j/2\rceil})}a_{ij}$
for some $a_{ij}\in\Z_{n_{\lceil i/2\rceil}}$,
 by \ref{31} we get
$$-\frac{w_{ij}}{n_{\lceil i/2\rceil}}+
\sum^{k}_{m=1}\frac{n_{m}}{\gcd(n_{m},n_{\lceil
i/2\rceil})\gcd(n_{m},n_{\lceil j/2\rceil})}
(a_{2m-1,i}a_{2m,j}-a_{2m-1,j}a_{2m,i})\in\Z.$$
 This means that
$$\sum^{k}_{m=1}\frac{n_{\lceil i/2\rceil}}{\gcd(n_{m},n_{\lceil
i/2\rceil})}\cdot\frac{n_{m}}{\gcd(n_{m},n_{\lceil
j/2\rceil})}(a_{2m-1,i}a_{2m,j}-a_{2m-1,j}a_{2m,i})\equiv_{n_{\lceil
i/2\rceil}} w_{ij}$$ for every $i,j=1,\dots,2k$. Hence, by \ref{23},
$H\in\Sp_{[n_{1},\dots,n_{k}]}$.
\end{proof}

\begin{definition}
Let $1\leq i<j\leq k$. Put
$$T_{ij}=I_{n_{i+1}\cdots n_{j-1}}
\otimes Q_{n_{j}}^{\frac{n_{j}}{\gcd(n_{i},n_{j})}}$$
and
$$R_{ij}=I_{n_{1}\cdots n_{i-1}}
\otimes\mathrm{diag}(I_{n_{i+1}\cdots n_{j}},T_{ij},\dots,T_{ij}^{n_{i}-1})
\otimes I_{n_{j+1}\cdots n_{k}}.$$
\end{definition}

\begin{remark}\label{34.1}
For a ring $\mathcal{R}$,  $\M_{n}(\mathcal{R})$ is the ring of $n
\times n$ matrices with entries from $\mathcal{R}$. For
$a\in\mathcal{R}$ denote $Q_{[a]}:=\diag(1,a,a^{2},\dots,a^{n-1})\in
\M_{n}(\mathcal{R})$ and $P\in \M_{n}(\mathcal{R})$, where
$(P)_{i,j}:=\delta_{i,j-1}\cdot 1_{\mathcal{R}}$ for $i,j\in\Z_n$.
Let $E$ denote the identity matrix.
\begin{enumerate}
\item Let $a\in\mathcal{R}$ be such that $a^{n}=1$.
 Then $PQ_{[a]}=(aE)Q_{[a]}P$.
\item Let $a,b,\omega\in\mathcal{R}$ be such that $ab=\omega ba$.
 Then $Q_{[a]}(bE)=Q_{[\omega]}(bE)Q_{[a]}$.
\end{enumerate}
\end{remark}

\begin{lemma}\label{35}
Let $1\leq i<j\leq k$.
Then $\Ad_{R_{ij}}\in \mathcal{N}(\mathcal{P}_{(n_{1},\dots,n_{k})})$ and
$\Phi\Psi(\Ad_{R_{ij}})=G_{ij}(-1)\in\Sp_{[n_{1},\dots,n_{k}]}$.
\end{lemma}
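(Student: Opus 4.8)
The plan is to evaluate $\Ad_{R_{ij}}$ on the generating automorphisms $e_{1},\dots,e_{2k}$, read off the $2k\times 2k$ matrix it induces, check that this matrix is $G_{ij}(-1)$ (which lies in $\Sp_{[n_{1},\dots,n_{k}]}$ by lemma \ref{26} and corollary \ref{20.2}), and obtain $\Ad_{R_{ij}}\in\mathcal{N}(\mathcal{P}_{(n_{1},\dots,n_{k})})$ as a by-product. First I would record the structural form of $R_{ij}$: by definition it is $I_{n_{1}\cdots n_{i-1}}\otimes\diag(I_{n_{i+1}\cdots n_{j}},T_{ij},\dots,T_{ij}^{\,n_{i}-1})\otimes I_{n_{j+1}\cdots n_{k}}$, and since $T_{ij}=I_{n_{i+1}\cdots n_{j-1}}\otimes Q_{n_{j}}^{\,n_{j}/\gcd(n_{i},n_{j})}$ is trivial on the factors $i+1,\dots,j-1$, the middle block is exactly $Q_{[a]}$ in the notation of remark \ref{34.1}, with $n=n_{i}$, $a=T_{ij}$ and $\mathcal{R}=\End(\ell^{2}(\Z_{n_{i+1}})\otimes\cdots\otimes\ell^{2}(\Z_{n_{j}}))$. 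One checks $T_{ij}^{\,n_{i}}=I$, because $n_{i}\,n_{j}/\gcd(n_{i},n_{j})=\lcm(n_{i},n_{j})$ is a multiple of $n_{j}$; hence $Q_{[a]}$ is of the type covered by remark \ref{34.1}, and in particular $R_{ij}\in\U_{N}(\C)\subseteq\GL_{N}(\C)$.

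Next I would run through $R_{ij}A_{m}R_{ij}^{-1}$ for $m=1,\dots,2k$ using (\ref{2A}), (\ref{2PQ}) and remark \ref{34.1}. If $\lceil m/2\rceil$ lies outside $\{i,\dots,j\}$, or strictly between $i$ and $j$, or $m=2i$, or $m=2j$, then $R_{ij}$ and $A_{m}$ commute (disjoint factors; or $T_{ij}$ trivial on the factor carrying $A_{m}$; or both operators block-diagonal on the $i$-th, resp. $j$-th, factor), so $R_{ij}A_{m}R_{ij}^{-1}=A_{m}$. For $m=2i-1$, where $A_{2i-1}$ restricts to the shift $P$ over $\Z_{n_{i}}$, remark \ref{34.1}(1) gives $PQ_{[a]}=(aE)Q_{[a]}P$, whence $Q_{[a]}PQ_{[a]}^{-1}=(a^{-1}E)P$; since $a^{-1}=T_{ij}^{-1}$ acts on the $j$-th factor as $Q_{n_{j}}^{-n_{j}/\gcd(n_{i},n_{j})}$, this reads
$$R_{ij}A_{2i-1}R_{ij}^{-1}=A_{2i-1}A_{2j}^{-n_{j}/\gcd(n_{i},n_{j})}.$$
For $m=2j-1$, I write $A_{2j-1}=bE$ with $b$ supported on the factors $i+1,\dots,j$; from (\ref{2PQ}) one gets $T_{ij}b=\omega_{n_{j}}^{-n_{j}/\gcd(n_{i},n_{j})}\,bT_{ij}$, so remark \ref{34.1}(2) yields $R_{ij}A_{2j-1}R_{ij}^{-1}=Q_{[\omega]}A_{2j-1}$ with $\omega=\omega_{n_{j}}^{-n_{j}/\gcd(n_{i},n_{j})}E$. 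Using $\omega_{n_{j}}^{\,n_{j}/\gcd(n_{i},n_{j})}=\omega_{\gcd(n_{i},n_{j})}=\omega_{n_{i}}^{\,n_{i}/\gcd(n_{i},n_{j})}$, one identifies $Q_{[\omega]}$ with $Q_{n_{i}}^{-n_{i}/\gcd(n_{i},n_{j})}$ on the $i$-th factor, so
$$R_{ij}A_{2j-1}R_{ij}^{-1}=A_{2i}^{-n_{i}/\gcd(n_{i},n_{j})}A_{2j-1}.$$

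Passing to inner automorphisms and writing $d_{(r,s)}=n_{r}/\gcd(n_{r},n_{s})$, the above says $\Psi(\Ad_{R_{ij}})(e_{m})=e_{m}$ for $m\notin\{2i-1,2j-1\}$, $\Psi(\Ad_{R_{ij}})(e_{2i-1})=e_{2j}^{-d_{(j,i)}}e_{2i-1}$ and $\Psi(\Ad_{R_{ij}})(e_{2j-1})=e_{2i}^{-d_{(i,j)}}e_{2j-1}$. In particular $\Ad_{R_{ij}}$ carries every generator of $\mathcal{P}_{(n_{1},\dots,n_{k})}$ into $\mathcal{P}_{(n_{1},\dots,n_{k})}$, and, this group being finite, $\Ad_{R_{ij}}$ normalises it, i.e. $\Ad_{R_{ij}}\in\mathcal{N}(\mathcal{P}_{(n_{1},\dots,n_{k})})$. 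Writing $\Psi(\Ad_{R_{ij}})(e_{m})=\prod_{l}e_{l}^{h_{lm}}$ and regrouping $(h_{lm})_{l,m=1,\dots,2k}$ into $2\times 2$ blocks, the matrix $H=\Phi\Psi(\Ad_{R_{ij}})$ has $I_{2}$ on the diagonal, the block $-d_{(i,j)}\,\big(\begin{smallmatrix}0&0\\1&0\end{smallmatrix}\big)$ in position $(i,j)$, the block $-d_{(j,i)}\,\big(\begin{smallmatrix}0&0\\1&0\end{smallmatrix}\big)$ in position $(j,i)$, and zero blocks elsewhere; comparison with definition \ref{24} shows $H=G_{ij}(-1)$, which finishes the argument.

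The only real difficulty is organisational: keeping the tensor bookkeeping straight (which factor each $A_{m}$ and each block of $R_{ij}$ lives on), not confusing the exponents $n_{j}/\gcd(n_{i},n_{j})$ and $n_{i}/\gcd(n_{i},n_{j})$, using $\omega_{n_{j}}^{\,n_{j}/\gcd(n_{i},n_{j})}=\omega_{n_{i}}^{\,n_{i}/\gcd(n_{i},n_{j})}=\omega_{\gcd(n_{i},n_{j})}$, and then matching the flat $2k\times 2k$ matrix to the block form of $G_{ij}(-1)$ with the correct signs. A shortcut that sidesteps most of this: because $T_{ij}$ is trivial on the middle factors, a suitable permutation of the tensor factors turns $R_{ij}$ into the bipartite operator of \cite{normalizer} for the pair $(n_{i},n_{j})$ tensored with identities, reducing the claim to the $k=2$ case settled there --- just as lemma \ref{26} was deduced from \cite{normalizer}.
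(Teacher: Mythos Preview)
Your proof is correct and follows essentially the same route as the paper's: both arguments compute $R_{ij}A_{m}R_{ij}^{-1}$ for all generators $A_{m}$, dispose of the commuting cases by inspection (diagonal matrices, or disjoint/trivial tensor factors), and handle the two nontrivial cases $m=2i-1$ and $m=2j-1$ via the identities of remark~\ref{34.1}. Your write-up is slightly more explicit about the passage from the flat $2k\times 2k$ matrix to the block form of $G_{ij}(-1)$ and about why normalisation of a finite group follows, but the mathematical content is identical.
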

\begin{proof}
$R_{ij}$ is a regular diagonal matrix, so are $A_{2m}$,
$m=1,\dots,k$, and thus these matrices commute. Further, for $m$
such that $1\leq m<i$ or $j<m\leq k$, the matrices $R_{ij}$ and
$A_{2m-1}$ also commute. Let now $m$ be such that $i<m<j$, then
 $$A_{2m-1}=I_{n_{1}\cdots n_{i-1}}
 \otimes\mathrm{diag}(U,U,\dots,U)\otimes I_{n_{j+1}\cdots n_{k}},$$
where $U=I_{n_{i+1}\cdots n_{m-1}}\otimes P_{n_{m}} \otimes
I_{n_{m+1}\cdots n_{j-1}}\otimes I_{n_{j}}$ and
$$R_{ij}=I_{n_{1}\cdots n_{i-1}}
\otimes\mathrm{diag}(V^{0},V^{1},\dots,V^{n_{i}-1}) \otimes
I_{n_{j+1}\cdots n_{k}},$$ where $V=I_{n_{i+1}\cdots n_{m-1}}\otimes
I_{n_{m}}\otimes I_{n_{m+1}\cdots n_{j-1}}\otimes
Q_{n_{j}}^{\frac{n_{j}}{\gcd(n_{i},n_{j})}}$. Hence $UV=VU$ and we
have the commutativity of $R_{ij}$ and $A_{2m-1}$ again.

  Now we use \ref{29} to treat the remaining cases. Put $n=n_{i}$,
  $\mathcal{R}=\M_{n_{i+1}\cdots n_{j}}(\C)$ and $a=T_{ij}$.
By \ref{34.1}(1), we have
$$\diag(T_{ij}^{0},T_{ij},T_{ij}^{2},\dots,T_{ij}^{n_{i}-1})(P_{n_{i}}
\otimes I_{n_{i+1}\cdots n_{j}})
\Big(\diag(T_{ij}^{0},T_{ij},T_{ij}^{2},\dots,T_{ij}^{n_{i}-1})\Big)^{-1}=$$
$$= Q_{[a]}PQ_{[a]}^{-1}=P(aE)^{-1}=(P_{n_{i}}\otimes
I_{n_{i+1}\cdots n_{j}})(I_{n_{i}}\otimes T_{ij})^{-1}.$$
 Tensoring this with $I_{n_{1}\cdots n_{i-1}}$ from the left and with
$I_{n_{j+1}\cdots n_{k}}$ from the right we get
$$R_{ij}A_{2i-1}R_{ij}^{-1}=A_{2i-1}A_{2j}^{-\frac{n_{j}}{\gcd(n_{i},n_{j})}}.$$

 Put $b=I_{n_{i+1}\cdots n_{j-1}}\otimes P_{n_{j}}$ and
 $\omega=e^{-2\pi\mathrm{i}/\gcd(n_{i},n_{j})}\cdot I_{n_{i+1}\cdots n_{j}}$.
Then $ab=\omega ba$ and by \ref{34.1}(2) we have
$$\diag(T_{ij}^{0},T_{ij},T_{ij}^{2},\dots,T_{ij}^{n_{i}-1})(I_{n_{i}\cdots
n_{j-1}}\otimes P_{n_{j}})
\Big(\diag(T_{ij}^{0},T_{ij},T_{ij}^{2},\dots,T_{ij}^{n_{i}-1})\Big)^{-1}=$$
$$=Q_{[a]}(bE)Q_{[a]}^{-1}=
Q_{[\omega]}(bE)=(Q_{n_{i}}\otimes I_{n_{i+1}\cdots
n_{j}})^{-\frac{n_{i}}{\gcd(n_{i},n_{j})}}(I_{n_{i}\cdots n_{j-1}}
\otimes P_{n_{j}})$$
 Tensoring this with $I_{n_{1}\cdots n_{i-1}}$ from the left and
with $I_{n_{j+1}\cdots n_{k}}$ from the right we get
$$R_{ij}A_{2j-1}R_{ij}^{-1}=A_{2i}^{-\frac{n_{i}}{\gcd(n_{i},n_{j})}}A_{2j-1}.$$
We conclude with $\Phi\Psi(\Ad_{R_{ij}})=G_{ij}(-1)$, where
$G_{ij}(\ell)$ was defined in \ref{24}.
\end{proof}

\begin{remark}\label{rem10}
Now the results obtained in \cite{HPPT02} have to be recalled since
they will be used in this and in the next section. They correspond
to the case $k=1$ with $n=n_{1}$. Using our notation we get that
$\Phi\Psi(\mathcal{N}(\P_{n}))=\SL_{2}(\Z_{n})$. Further, the group
$\SL_{2}(\Z_{n})$ is generated by
$\Big(\begin{smat2}{1}{1}{0}{1}\end{smat2}\Big)$ and
$\Big(\begin{smat2}{0}{-1}{1}{0}\end{smat2}\Big)$ and the group
$\mathcal{N}(\P_{n})$  is generated by $\Ad_{P_{n}}, \Ad_{Q_{n}},
\Ad_{D_{n}}$ and $\Ad_{S_{n}}$, where
$$(D_{n})_{ij}:=\delta_{ij}\varepsilon^{-i}\omega_{n}^{{i\choose
2}}$$ with $\varepsilon=\sqrt{-1}$ for $n$ even and $\varepsilon=1$
for $n$ odd, and
  $$(S_{n})_{ij}:=\omega_{n}^{ij}/\sqrt{n}.$$
It was shown in \cite{HPPT02} that
$\Phi\Psi(D_{n})=\Big(\begin{smat2}{1}{1}{0}{1}\end{smat2}\Big)$,
$\Phi\Psi(S_{n})=\Big(\begin{smat2}{0}{-1}{1}{0}\end{smat2}\Big)$
and $\ker(\Phi\Psi)=\P_{n}$.
\end{remark}
As an immediate consequence we have the following proposition.
\begin{proposition}\label{36}
$\Phi\Psi(\mathcal{N}(\P_{n_{1}})\times\cdots\times
\mathcal{N}(\P_{n_{k}}))=\SL_{2}(\Z_{n_{1}})\times\cdots\times\SL_{2}(\Z_{n_{k}})$.
\end{proposition}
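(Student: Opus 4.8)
The plan is to reduce the statement to the single-subsystem result recalled in remark~\ref{rem10}, namely $\Phi\Psi(\mathcal{N}(\P_{n}))=\SL_{2}(\Z_{n})$, by exploiting that conjugation by a tensor product factorises over the tensor slots. For $i=1,\dots,k$ I will write $\Phi_{i}\Psi_{i}$ for the version of $\Phi\Psi$ attached to the single subsystem of dimension $n_{i}$ (the case $k=1$, $n=n_{i}$), so that $\Phi_{i}\Psi_{i}(\mathcal{N}(\P_{n_{i}}))=\SL_{2}(\Z_{n_{i}})$ by that remark.

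For the inclusion $\subseteq$ I would take matrices $M_{1},\dots,M_{k}$ with $\Ad_{M_{i}}\in\mathcal{N}(\P_{n_{i}})$, put $M=M_{1}\otimes\cdots\otimes M_{k}$ (so $\Ad_{M}$ lies in $\mathcal{N}(\P_{n_{1}})\times\cdots\times\mathcal{N}(\P_{n_{k}})\sub\mathcal{N}(\mathcal{P}_{(n_{1},\dots,n_{k})})$ by the remark preceding lemma~\ref{34}), and determine $\Phi\Psi(\Ad_{M})$ from its action on the generators $e_{j}$. By property (i) of the matrix tensor product and $M^{-1}=M_{1}^{-1}\otimes\cdots\otimes M_{k}^{-1}$, conjugating $A_{2i-1}=I\otimes\cdots\otimes P_{n_{i}}\otimes\cdots\otimes I$ (and likewise $A_{2i}$) by $M$ only affects the $i$-th slot, giving $MA_{2i-1}M^{-1}=I\otimes\cdots\otimes(M_{i}P_{n_{i}}M_{i}^{-1})\otimes\cdots\otimes I$. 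Since $\Ad_{M_{i}}$ normalises $\P_{n_{i}}$, there are nonzero scalars $\mu_{i},\mu'_{i}$ and integers such that $M_{i}P_{n_{i}}M_{i}^{-1}=\mu_{i}P_{n_{i}}^{b_{i}}Q_{n_{i}}^{a_{i}}$ and $M_{i}Q_{n_{i}}M_{i}^{-1}=\mu'_{i}P_{n_{i}}^{d_{i}}Q_{n_{i}}^{c_{i}}$, where $\Big(\begin{smat2}{b_{i}}{d_{i}}{a_{i}}{c_{i}}\end{smat2}\Big)$ represents $\Phi_{i}\Psi_{i}(\Ad_{M_{i}})\in\SL_{2}(\Z_{n_{i}})$; passing to inner automorphisms removes the scalars, and $I\otimes\cdots\otimes P_{n_{i}}^{b_{i}}Q_{n_{i}}^{a_{i}}\otimes\cdots\otimes I=A_{2i-1}^{b_{i}}A_{2i}^{a_{i}}$. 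Hence $\Psi(\Ad_{M})$ sends $e_{2i-1}\mapsto e_{2i-1}^{b_{i}}e_{2i}^{a_{i}}$ and $e_{2i}\mapsto e_{2i-1}^{d_{i}}e_{2i}^{c_{i}}$, with no contribution from the $e_{r}$ belonging to the other slots.

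Comparing with $\Psi(\Ad_{M})(e_{j})=\prod_{r}e_{r}^{h_{rj}}$ and using that $\mathcal{P}_{(n_{1},\dots,n_{k})}$ is the direct product of the $\gen{e_{r}}$ (with $e_{2i-1},e_{2i}$ of order $n_{i}$), I conclude that $H=\Phi\Psi(\Ad_{M})$ is block-diagonal with $i$-th $2\times2$ block $\Phi_{i}\Psi_{i}(\Ad_{M_{i}})$, i.e. $H=\diag(\Phi_{1}\Psi_{1}(\Ad_{M_{1}}),\dots,\Phi_{k}\Psi_{k}(\Ad_{M_{k}}))$, which under the embedding fixed before lemma~\ref{26} is an element of $\SL_{2}(\Z_{n_{1}})\times\cdots\times\SL_{2}(\Z_{n_{k}})$. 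For the reverse inclusion, given $(H_{1},\dots,H_{k})$ in that product I use $\Phi_{i}\Psi_{i}(\mathcal{N}(\P_{n_{i}}))=\SL_{2}(\Z_{n_{i}})$ to choose $M_{i}$ with $\Phi_{i}\Psi_{i}(\Ad_{M_{i}})=H_{i}$; then $\Ad_{M_{1}\otimes\cdots\otimes M_{k}}$ lies in $\mathcal{N}(\P_{n_{1}})\times\cdots\times\mathcal{N}(\P_{n_{k}})$ and maps, by the computation above, to $\diag(H_{1},\dots,H_{k})$.

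The tensor-product bookkeeping is routine; the only delicate point — and what I regard as the main (minor) obstacle — is to make the identification precise: that the $2\times2$ data produced by conjugating the $i$-th slot is literally the single-subsystem matrix $\Phi_{i}\Psi_{i}(\Ad_{M_{i}})$, that the off-block entries of $H$ genuinely vanish modulo the relevant moduli (via the reduction rule of remark~\ref{11e}), and that the resulting block-diagonal matrix is the canonical image of $\prod_{i}\SL_{2}(\Z_{n_{i}})$ in $\mathcal{M}_{[n_{1},\dots,n_{k}]}$ described before lemma~\ref{26}.
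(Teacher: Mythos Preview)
Your proof is correct and follows exactly the route the paper intends: the paper states this proposition as ``an immediate consequence'' of remark~\ref{rem10} (the $k=1$ case) without further argument, and what you have written is precisely the routine tensor-slot computation that justifies that phrase. The only difference is level of detail --- you have spelled out the block-diagonal form $\Phi\Psi(\Ad_{M_{1}\otimes\cdots\otimes M_{k}})=\diag(\Phi_{1}\Psi_{1}(\Ad_{M_{1}}),\dots,\Phi_{k}\Psi_{k}(\Ad_{M_{k}}))$ that the paper leaves implicit.
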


Finally we arrive at the announced main theorems.
\begin{theorem}\label{37}

\begin{enumerate}
\item[(i)] $\mathcal{N}(\mathcal{P}_{(n_{1},\dots,n_{k})})/\mathcal{P}_{(n_{1},\dots,n_{k})}\cong\Sp_{[n_{1},\dots,n_{k}]}$
\item[(ii)] The group $\mathcal{N}(\mathcal{P}_{(n_{1},\dots,n_{k})})$ is generated by
$\mathcal{N}(\P_{n_{1}})\times\cdots\times\mathcal{N}(\P_{n_{k}})$ and\\ $\set{\Ad_{R_{ij}}}{1\leq i<j\leq k}$.
\end{enumerate}
\end{theorem}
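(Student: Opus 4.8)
The plan is to combine the structural results already established: Lemma~\ref{34} (which gives $\Phi\Psi(\mathcal{N}(\mathcal{P}_{(n_{1},\dots,n_{k})}))\subseteq\Sp_{[n_{1},\dots,n_{k}]}$), Theorem~\ref{30} (which exhibits generators of $\Sp_{[n_{1},\dots,n_{k}]}$), and Lemmas~\ref{35},~\ref{36}, and Remark~\ref{rem10} (which produce preimages under $\Phi\Psi$ of each of those generators inside $\mathcal{N}(\mathcal{P}_{(n_{1},\dots,n_{k})})$). First I would observe that $\Phi\Psi$ restricted to $\mathcal{N}(\mathcal{P}_{(n_{1},\dots,n_{k})})$ is a group homomorphism into $\Sp_{[n_{1},\dots,n_{k}]}$ whose kernel is exactly $\mathcal{P}_{(n_{1},\dots,n_{k})}$: indeed $\ker\Psi = \mathcal{P}_{(n_{1},\dots,n_{k})}$ by Lemma~\ref{10} and the remark following it, and $\Phi$ is an isomorphism by Proposition~\ref{31}, so $\ker(\Phi\Psi\!\restriction_{\mathcal{N}})=\ker(\Psi\!\restriction_{\mathcal{N}})=\mathcal{P}_{(n_{1},\dots,n_{k})}$.

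The crux is surjectivity of $\Phi\Psi:\mathcal{N}(\mathcal{P}_{(n_{1},\dots,n_{k})})\to\Sp_{[n_{1},\dots,n_{k}]}$. By Theorem~\ref{30}, $\Sp_{[n_{1},\dots,n_{k}]}$ is generated by the factors $\SL_{2}(\Z_{n_{i}})$ and by the elements $G_{ij}(1)$ for $1\le i<j\le k$. By Proposition~\ref{36}, every element of $\SL_{2}(\Z_{n_{1}})\times\cdots\times\SL_{2}(\Z_{n_{k}})$ lies in the image, since $\Phi\Psi(\mathcal{N}(\P_{n_{1}})\times\cdots\times\mathcal{N}(\P_{n_{k}}))$ equals that product and $\mathcal{N}(\P_{n_{1}})\times\cdots\times\mathcal{N}(\P_{n_{k}})\subseteq\mathcal{N}(\mathcal{P}_{(n_{1},\dots,n_{k})})$. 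By Lemma~\ref{35}, $\Phi\Psi(\Ad_{R_{ij}})=G_{ij}(-1)=G_{ij}(1)^{-1}$ (using $G_{ij}(\ell)=G_{ij}(1)^{\ell}$ from Lemma~\ref{26}), so $G_{ij}(1)=\Phi\Psi(\Ad_{R_{ij}}^{-1})=\Phi\Psi(\Ad_{R_{ij}^{-1}})$ is also in the image. Since a homomorphism whose image contains a generating set of the target is surjective, $\Phi\Psi\!\restriction_{\mathcal{N}}$ is onto. Combined with the kernel computation, the first isomorphism theorem yields (i).

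For part (ii), let $\mathcal{N}'$ denote the subgroup of $\mathcal{N}(\mathcal{P}_{(n_{1},\dots,n_{k})})$ generated by $\mathcal{N}(\P_{n_{1}})\times\cdots\times\mathcal{N}(\P_{n_{k}})$ together with $\{\Ad_{R_{ij}}\mid 1\le i<j\le k\}$. The argument just given shows that $\Phi\Psi(\mathcal{N}')$ already contains all the generators of $\Sp_{[n_{1},\dots,n_{k}]}$ from Theorem~\ref{30}, hence $\Phi\Psi(\mathcal{N}')=\Sp_{[n_{1},\dots,n_{k}]}=\Phi\Psi(\mathcal{N}(\mathcal{P}_{(n_{1},\dots,n_{k})}))$. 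Moreover $\mathcal{P}_{(n_{1},\dots,n_{k})}=\gen{e_{1},\dots,e_{2k}}$, and each $e_{j}=\Ad_{A_{j}}$ is a tensor product of identity matrices with one $P_{n_{i}}$ or $Q_{n_{i}}$, hence lies in $\mathcal{N}(\P_{n_{1}})\times\cdots\times\mathcal{N}(\P_{n_{k}})\subseteq\mathcal{N}'$; therefore $\ker(\Phi\Psi\!\restriction_{\mathcal{N}})=\mathcal{P}_{(n_{1},\dots,n_{k})}\subseteq\mathcal{N}'$. A subgroup that surjects onto the quotient and contains the kernel must be the whole group, so $\mathcal{N}'=\mathcal{N}(\mathcal{P}_{(n_{1},\dots,n_{k})})$, proving (ii). The only genuinely nontrivial inputs are Theorem~\ref{30} and Lemma~\ref{35}, both already available; the main obstacle in assembling this proof is simply the bookkeeping needed to confirm that $\mathcal{P}_{(n_{1},\dots,n_{k})}$ is contained in the subgroup generated by the listed elements, which follows because the generators $e_{j}$ of $\mathcal{P}_{(n_{1},\dots,n_{k})}$ sit inside the ``block-diagonal'' part $\mathcal{N}(\P_{n_{1}})\times\cdots\times\mathcal{N}(\P_{n_{k}})$.
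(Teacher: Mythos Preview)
Your proof is correct and follows essentially the same route as the paper's own argument: both use Lemma~\ref{34} for the inclusion of the image in $\Sp_{[n_{1},\dots,n_{k}]}$, Theorem~\ref{30} for the generating set of $\Sp_{[n_{1},\dots,n_{k}]}$, Lemmas~\ref{35} and~\ref{36} to hit those generators, and the observation $\ker(\Phi\Psi)=\mathcal{P}_{(n_{1},\dots,n_{k})}\subseteq\mathcal{N}(\P_{n_{1}})\times\cdots\times\mathcal{N}(\P_{n_{k}})$ to finish part~(ii). Your write-up is in fact slightly more explicit than the paper's, spelling out the use of Lemma~\ref{26} to pass from $G_{ij}(-1)$ to $G_{ij}(1)$ and the reason why each $e_{j}$ lies in the block-diagonal factor.
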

\begin{proof}

(i) By \ref{30}, $\Sp_{[n_{1},\dots,n_{k}]}$ is generated by
$\set{G_{ij}(1)}{1\leq i<j\leq k}$ and $\SL_{2}(\Z_{n_{1}})\times
\cdots\times \SL_{2}(\Z_{n_{k}})$. Hence, by \ref{34}, \ref{35} and
\ref{36},
$\Phi\Psi(\mathcal{N}(\mathcal{P}_{(n_{1},\dots,n_{k})}))=\Sp_{[n_{1},\dots,n_{k}]}$.
Using \ref{31} and $\ker(\Psi)=\mathcal{P}_{(n_{1},\dots,n_{k})}$ we
get $\ker(\Phi\Psi)=\mathcal{P}_{(n_{1},\dots,n_{k})}$.

(ii) Let $\mathcal{N}$ be a     subgroup of $\mathcal{N}(\mathcal{P}_{(n_{1},\dots,n_{k})})$ generated by
$\mathcal{N}(\P_{n_{1}})\times\cdots\times\mathcal{N}(\P_{n_{k}})$ and $\set{\Ad_{R_{ij}}}{1\leq i<j\leq k}$. Then
$\ker(\Phi\Psi)=\mathcal{P}_{(n_{1},\dots,n_{k})}\sub\mathcal{N}(\P_{n_{1}})\times\cdots\times\mathcal{N}(\P_{n_{k}})\sub\mathcal{N}$
and, by \ref{35}, \ref{36} and \ref{30},
$\Phi\Psi(\mathcal{N})=\Sp_{[n_{1},\dots,n_{k}]}$. Hence
$\mathcal{N}=\mathcal{N}(\mathcal{P}_{(n_{1},\dots,n_{k})})$.
\end{proof}

\begin{theorem}\label{37.1}
There is a group
$\mathcal{G}_{(n_{1},\dots,n_{k})}\sub\U_{n_{1}\cdots n_{k}}(\C)$
such that
$\mathcal{N}(\mathcal{P}_{(n_{1},\dots,n_{k})})=
\set{\Ad_{M}}{M\in\mathcal{G}_{(n_{1},\dots,n_{k})}}$.
In particular, $\mathcal{G}_{(n_{1},\dots,n_{k})}$ is generated by
the matrices $$I_{n_{1}\cdots n_{i-1}}\otimes P_{n_{i}}\otimes
I_{n_{i+1}\cdots n_{k}}$$
 $$I_{n_{1}\cdots n_{i-1}}\otimes Q_{n_{i}}\otimes I_{n_{i+1}\cdots n_{k}}$$
  $$I_{n_{1}\cdots n_{i-1}}\otimes D_{n_{i}}\otimes I_{n_{i+1}\cdots n_{k}}$$
   $$I_{n_{1}\cdots n_{i-1}}\otimes S_{n_{i}}\otimes I_{n_{i+1}\cdots n_{k}}$$
 for $i=1,\dots,k$ and
 $$R_{ij}$$
 for $1\leq i<j\leq k$.
  \end{theorem}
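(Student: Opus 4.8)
The plan is to deduce Theorem \ref{37.1} as an essentially formal consequence of Theorem \ref{37}, the only new ingredient being the passage from the group of inner automorphisms $\mathcal{N}(\mathcal{P}_{(n_{1},\dots,n_{k})})$ back to a group of actual unitary matrices. First I would set
$$\mathcal{G}_{(n_{1},\dots,n_{k})}:=\gen{\,I_{n_{1}\cdots n_{i-1}}\otimes M_{n_{i}}\otimes I_{n_{i+1}\cdots n_{k}},\ R_{ij}\ \big|\ M\in\{P,Q,D,S\},\ 1\le i<j\le k\,}\sub\U_{n_{1}\cdots n_{k}}(\C),$$
noting first that each listed generator is unitary (the $P_{n_i},Q_{n_i}$ are unitary by construction, $D_{n_i}$ is diagonal with unimodular entries, $S_{n_i}$ is the finite Fourier matrix hence unitary, $R_{ij}$ is a block-diagonal matrix whose blocks are powers of the unitary $T_{ij}$), and tensor products and products of unitaries are unitary, so $\mathcal{G}_{(n_{1},\dots,n_{k})}\sub\U_{n_{1}\cdots n_{k}}(\C)$ indeed.

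Next I would verify the inclusion $\set{\Ad_{M}}{M\in\mathcal{G}_{(n_{1},\dots,n_{k})}}\sub\mathcal{N}(\mathcal{P}_{(n_{1},\dots,n_{k})})$. By Remark \ref{rem10} each $\Ad_{D_{n_i}}$ and $\Ad_{S_{n_i}}$ lies in $\mathcal{N}(\P_{n_i})$, and $\Ad_{P_{n_i}},\Ad_{Q_{n_i}}\in\mathcal{N}(\P_{n_i})$ trivially; tensoring with identities on the other factors places $\Ad_{I\otimes M_{n_i}\otimes I}$ inside $\mathcal{N}(\P_{n_1})\times\cdots\times\mathcal{N}(\P_{n_k})\sub\mathcal{N}(\mathcal{P}_{(n_{1},\dots,n_{k})})$. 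By Lemma \ref{35}, $\Ad_{R_{ij}}\in\mathcal{N}(\mathcal{P}_{(n_{1},\dots,n_{k})})$ as well. Since $\mathcal{N}(\mathcal{P}_{(n_{1},\dots,n_{k})})$ is a group and $M\mapsto\Ad_M$ is a homomorphism, the subgroup generated by the $\Ad$ of our generators is contained in $\mathcal{N}(\mathcal{P}_{(n_{1},\dots,n_{k})})$.

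For the reverse inclusion, I would use Theorem \ref{37}(ii): $\mathcal{N}(\mathcal{P}_{(n_{1},\dots,n_{k})})$ is generated by $\mathcal{N}(\P_{n_{1}})\times\cdots\times\mathcal{N}(\P_{n_{k}})$ together with the $\Ad_{R_{ij}}$. Each $\Ad_{R_{ij}}$ is manifestly $\Ad$ of a generator of $\mathcal{G}_{(n_{1},\dots,n_{k})}$. For the factor part, Remark \ref{rem10} tells us $\mathcal{N}(\P_{n_i})$ is generated by $\Ad_{P_{n_i}},\Ad_{Q_{n_i}},\Ad_{D_{n_i}},\Ad_{S_{n_i}}$; hence any element of $\mathcal{N}(\P_{n_1})\times\cdots\times\mathcal{N}(\P_{n_k})$, being of the form $\Ad_{M_1\otimes\cdots\otimes M_k}$ with each $M_i$ a product of $P_{n_i},Q_{n_i},D_{n_i},S_{n_i}$ and their inverses, equals a product of the $\Ad_{I\otimes M_{n_i}\otimes I}$ (using the tensor-product identities (i), (ii) from section \ref{sec2}, which let one interleave the one-factor pieces freely because factors on different legs commute). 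Therefore every generator of $\mathcal{N}(\mathcal{P}_{(n_{1},\dots,n_{k})})$ lies in $\set{\Ad_M}{M\in\mathcal{G}_{(n_{1},\dots,n_{k})}}$, and since the latter is a group, $\mathcal{N}(\mathcal{P}_{(n_{1},\dots,n_{k})})\sub\set{\Ad_M}{M\in\mathcal{G}_{(n_{1},\dots,n_{k})}}$, giving equality.

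The only genuinely delicate point — the ``main obstacle'' — is bookkeeping: making sure that when one writes an element $\Ad_{M_1\otimes\cdots\otimes M_k}$ of the product normalizer as a product of single-leg generators $\Ad_{I\otimes M_{n_i}\otimes I}$, one respects that $M_i\otimes M_j$ on distinct legs commute and that $\Ad$ is multiplicative only up to the scalar ambiguity $\Ad_{\alpha M}=\Ad_M$; but since we are working with $\Ad$ throughout, scalars are harmless, and the tensor identities (i),(ii) handle the commuting. I would state this last paragraph compactly, since it is routine once Theorem \ref{37}(ii) and Remark \ref{rem10} are in hand.
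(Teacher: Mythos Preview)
Your proposal is correct and follows exactly the paper's approach: the paper's proof is the single line ``Follows immediately from \ref{rem10} and \ref{37},'' and what you have written is precisely the unpacking of that sentence, defining $\mathcal{G}_{(n_1,\dots,n_k)}$ via the listed unitary generators, using \ref{rem10} to identify the generators of each $\mathcal{N}(\P_{n_i})$, and invoking \ref{37}(ii) for the generation of the full normalizer. Your additional checks that the generators are unitary and your remark about the scalar ambiguity in $\Ad$ are correct and make explicit what the paper leaves implicit.
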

  \begin{proof}
  Follows immediately from \ref{rem10} and \ref{37}.
  \end{proof}

\section{Mutually unbiased bases and the symmetry group}\label{sec6}

In this section we turn to the special cases described in corollary
\ref{20.3}, with $n_1=\dots=n_k= p$, where $p$ is a prime. For such
multipartite systems composed of subsystems with the same prime
dimension $p$ the Hilbert space is $\ell^{2}(\Z_p)\otimes \dots
\otimes \ell^{2}(\Z_p) \cong \ell^{2}(\Z_{p^{k}})$ and the symmetry
group is $\Sp_{[p,\dots,p]}\cong \Sp_{2k}(\Z_p)$. Our goal is to use
the symmetry group for an alternative proof of existence of a
maximal set of mutually unbiased bases (MUBs)\footnote{In the
Hilbert space $\ell^{2}(\Z_N)$ let $\{e_i \}_{i=1}^{N}$, $\{f_j
\}_{j=1}^{N}$ be two orthonormal bases. They are mutually unbiased,
if $\vert (e_i, f_j)\vert = 1/\sqrt{N}$ for all $i,j=1,\dots,N$. In
physical terms, if the system is in one of the states $e_i$, then
the probabilities to find the system in any of the states $f_j$ are
all equal to $1/N$.}
 in the Hilbert space of prime power dimension.

It is known that the number of mutually unbiased bases in a Hilbert
space of dimension $N$ must not exceed $N+1$
\cite{WoottersFields89}. It is also well known that the maximal
number $N+1$ is attained for $N$ being prime or power of a prime.
However, the determination of the maximal number of mutually
unbiased bases for other dimensions $N$ remains an open  problem as
yet.

Note that in this section the letter $k$ will be replaced by $n$,
thus an $n$-partite system is considered and the respective Hilbert
space $\ell^{2}(\Z_{p^{n}})$ is taken with the standard inner
product.

Essentially an idea of Bandyopadhyay, Boykin, Roychowdhury and Vatan
\cite{Bandyo} is used, but we provide a different proof. First we
recall their main point.

Denote
$$\Pi_{p}(n):=\set{M_{1}\otimes\cdots\otimes
M_{n}\in\GL_{p^{n}}(\C)}{M_{i}\in\Pi_{p}}.$$ For
$\alpha=(k_{1},\dots,k_{n},\ell_{1},\dots,\ell_{n})^{T}\in\Z_{p}^{2n}$
put $$A[\alpha]:=Q_{p}^{k_{1}}P_{p}^{\ell_{1}}\otimes\dots\otimes
Q_{p}^{k_{n}}P_{p}^{\ell_{n}}\in\Pi_{p}(n).$$
 For a $2n\times n$ matrix $U$ over $\Z_{p}$ assign a set of operators
$$\mathcal{C}(U):=\set{A[\alpha_{i}]}{i=1,\dots,n},$$ where
$\alpha_{i}$ is the $i$-th column of the matrix $U$.

MUBs are now constructed as orthonormal sets of common eigenvectors
of mutually commuting operators from $\mathcal{C}(U)$ for suitably
chosen $U$. Using the commutation relations for $P$ and $Q$ one
easily gets that
\begin{quote}
$A[\alpha]$ and $A[\beta]$ commute if and only if
$\alpha^{T}J'\beta=0$, where
$J':=\Big(\begin{smat2}{0}{-I_{n}}{I_{n}}{0}\end{smat2}\Big)$. Thus
$\mathcal{C}(U)$ consists of mutually commuting operators if and
only if $U^{T}J'U=0$.
 \end{quote}
  Now a special system $(\ast)$ of such
matrices fulfilling this condition is chosen, namely
$$
\big(\begin{smallmatrix}I_{n}\\ 0\end{smallmatrix}\big) \ \
\text{and} \ \ \big(\begin{smallmatrix}A_{i}\\
I_{n}\end{smallmatrix}\big) \ \ \text{for} \ \ i=1,\dots,p^{n}, \ \
\ \ \ \ \ \  (\ast)$$
 where $A_{i}\in\M_{n}(\Z_{p})$ are symmetric and
$A_{i}-A_{j}$ are regular for $i\neq j$. The existence of such a
system is shown further, see \ref{system}. In the following,
$\mathcal{C}(U)$ will always denote such a set of mutually commuting
operators.

We will now apply our previous results concerning the symmetry group
to get a different proof that the system $(\ast)$ indeed provides a
set of $p^{n}+1$ mutually unbiased bases. Moreover, we will show
that there is a group generating the MUBs from the canonical basis
via a natural action.

As already mentioned, we consider the case $n_{i}=p$ for each
$i=1,\dots,n$. Then the phase space
$\P_{(p,\dots,p)}\cong(\Z_{p})^{2n}$ is a vector space of dimension
$2n$ over $\Z_{p}$. In proposition \ref{31} we have considered the
homomorphism
$$\mathcal{N}(\mathcal{P}_{(p,\dots,p)})
\stackrel{\Phi}{\to}\End(\mathcal{P}_{(p,\dots,p)})
\cong\End(\Z_{p}^{2}\times\cdots\times\Z_{p}^{2})
\cong\mathcal{M}_{[p,\dots,p]}$$
 where the isomorphism
 $\End(\mathcal{P}_{(p,\dots,p)})\cong\mathcal{M}_{[p,\dots,p]}$
 was given with respect to the basis
 $(\Ad_{A_{1}},\dots,\Ad_{A_{2n}})$ of $\mathcal{P}_{(p,\dots,p)}$, where
$A_{2i-1}=I_{p^{i-1}}\otimes P_{p}\otimes I_{p^{n-i}}$ and
$A_{2i}=I_{p^{i-1}}\otimes Q_{p}\otimes I_{p^{n-i}}$ for
$i=1,\dots,n$.

Here it is useful to take a differently ordered basis, namely
$$(\Ad_{A_{2}},\Ad_{A_{4}},\dots,\Ad_{A_{2n}},
\Ad_{A_{1}},\Ad_{A_{3}},\dots,\Ad_{A_{2n-1}}).$$
 Then the corresponding automorphism of
$\mathcal{M}_{[p,\dots,p]}=\M_{2n}(\Z_{p})$,
 given by the above permutation matrix, transforms the symmetry group
$\Sp_{[p,\dots,p]}$ into the symplectic group over $\Z_{p}$
\cite{symplectic}
$$\Sp_{2n}(\Z_{p}):=\set{H\in\M_{2n}(\Z_{p})}{H^{T}J'H=J'}.$$
 Thus we can formulate our result as follows:
\begin{proposition}
There is a surjective homomorphism
$$\chi:\mathcal{N}(\mathcal{P}_{(p,\dots,p)})\to \Sp_{2n}(\Z_{p})$$
such that
$$\Ad_{M}\Ad_{A[\alpha]}\Ad_{M}^{-1}=\Ad_{A[\chi(\Ad_{M})\alpha]}$$
for every $\alpha\in\Z_{p}^{2n}$ and
$\Ad_{M}\in\mathcal{N}(\mathcal{P}_{(p,\dots,p)})$, where $M\in
\U_{p^{n}}(\C)$.
\end{proposition}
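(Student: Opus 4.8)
The plan is to transport the isomorphism of Theorem~\ref{37}(i) through the change of basis described just above the statement. First I would record that by Theorem~\ref{37}(i) the map $\Phi\Psi$ induces an isomorphism $\mathcal{N}(\mathcal{P}_{(p,\dots,p)})/\mathcal{P}_{(p,\dots,p)}\cong\Sp_{[p,\dots,p]}$; in particular $\Phi\Psi$ is a surjective homomorphism onto $\Sp_{[p,\dots,p]}$ with kernel $\mathcal{P}_{(p,\dots,p)}$. Since each $n_i=p$, we have $\mathcal{M}_{[p,\dots,p]}=\M_{2n}(\Z_p)$ (all factors $n_i/\gcd(n_i,n_j)=1$), and by Corollary~\ref{20.3} the group $\Sp_{[p,\dots,p]}$ sits inside $\M_{2n}(\Z_p)$ as the set of $H$ with $H^{\ast}JH=J$, where $J=\diag(J_2,\dots,J_2)$.

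Next I would introduce the permutation matrix $\sigma\in\GL_{2n}(\Z_p)$ realizing the reordering of the basis $(\Ad_{A_1},\dots,\Ad_{A_{2n}})$ into $(\Ad_{A_2},\Ad_{A_4},\dots,\Ad_{A_{2n}},\Ad_{A_1},\Ad_{A_3},\dots,\Ad_{A_{2n-1}})$, i.e. the permutation grouping the even indices first and the odd indices second. Conjugation by $\sigma$ is an automorphism of $\M_{2n}(\Z_p)$, and a direct check shows $\sigma J\sigma^{-1}=\sigma J\sigma^{T}=\pm J'$ with $J'=\bigl(\begin{smat2}{0}{-I_n}{I_n}{0}\end{smat2}\bigr)$; moreover, since $\sigma$ is a permutation matrix, $H\mapsto \sigma H\sigma^{T}$ carries the involution $\ast$ (which on $\M_{2n}(\Z_p)$, where all blocks are genuine $2\times 2$ integer matrices mod $p$, is just the transpose up to the block structure) to the ordinary transpose. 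Hence $H^{\ast}JH=J$ is transformed into $(\sigma H\sigma^{T})^{T}J'(\sigma H\sigma^{T})=J'$, so conjugation by $\sigma$ gives a group isomorphism $\Sp_{[p,\dots,p]}\xrightarrow{\ \sim\ }\Sp_{2n}(\Z_p)$. Define $\chi:=(\text{conjugation by }\sigma)\circ\Phi\Psi$; it is a surjective homomorphism onto $\Sp_{2n}(\Z_p)$.

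Finally I would verify the intertwining relation. For $\Ad_M\in\mathcal{N}(\mathcal{P}_{(p,\dots,p)})$, the definition of $\Phi\Psi$ says that $H=\Phi\Psi(\Ad_M)$ is the matrix of the endomorphism $\Ad_X\mapsto \Ad_M\Ad_X\Ad_M^{-1}$ of $\mathcal{P}_{(p,\dots,p)}$ in the basis $(\Ad_{A_1},\dots,\Ad_{A_{2n}})$. Writing a general $\alpha\in\Z_p^{2n}$ in that basis, $\Ad_{A[\alpha]}$ corresponds (up to a central phase, which disappears under $\Ad$) to the vector $\alpha$, and then $\Ad_M\Ad_{A[\alpha]}\Ad_M^{-1}=\Ad_{A[H\alpha]}$ directly from $\Phi\Psi(\Ad_M)=H$ together with the commutation relations~(\ref{2PQ}) used to re-collect the $Q$'s and $P$'s into the standard form $A[\,\cdot\,]$. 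Re-expressing everything in the permuted basis replaces $H$ by $\chi(\Ad_M)=\sigma H\sigma^{T}$ and $\alpha$ by $\sigma\alpha$, and absorbing $\sigma$ into the identification $\Z_p^{2n}\cong\P_{(p,\dots,p)}$ used to define $A[\alpha]$ yields $\Ad_M\Ad_{A[\alpha]}\Ad_M^{-1}=\Ad_{A[\chi(\Ad_M)\alpha]}$.

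The only delicate point is bookkeeping: one must be careful that the permutation $\sigma$ used to pass from $\Sp_{[p,\dots,p]}$ to $\Sp_{2n}(\Z_p)$ is exactly the one built into the coordinates $\alpha=(k_1,\dots,k_n,\ell_1,\dots,\ell_n)^T$ of $A[\alpha]$, so that no stray sign or transposition is introduced; the fact that $\sigma$ conjugates $J$ to $\pm J'$ (not $J'$ on the nose) must be checked not to matter, which it does not, since $\Sp_{2n}(\Z_p)$ defined with $J'$ or $-J'$ is the same group. All of this is a routine verification once the change of basis is set up correctly; there is no real obstacle beyond this indexing care.
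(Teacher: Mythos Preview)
Your proposal is correct and follows exactly the route the paper indicates: the proposition is stated there without a separate proof, as an immediate reformulation of Theorem~\ref{37}(i) via the change of basis from $(\Ad_{A_1},\dots,\Ad_{A_{2n}})$ to $(\Ad_{A_2},\Ad_{A_4},\dots,\Ad_{A_{2n}},\Ad_{A_1},\dots,\Ad_{A_{2n-1}})$, which conjugates $\Sp_{[p,\dots,p]}$ onto $\Sp_{2n}(\Z_p)$. One small simplification: since all $n_i=p$, the involution $\ast$ on $\mathcal{M}_{[p,\dots,p]}=\M_{2n}(\Z_p)$ is literally the ordinary transpose, and the permutation sends $J$ to $J'$ on the nose (not merely $\pm J'$), so your caveat about a possible sign is unnecessary.
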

\begin{remark}\label{rem1}
Let $\Ad_{M}\in\mathcal{N}(\mathcal{P}_{(p,\dots,p)})$,
$U\in\Z_{p}^{2n\times n}$ and $\alpha_{i}$ be the $i$-th column of
$U$. Then the above property can be reformulated: for every
$i=1,\dots,n$ there is $0\neq\lambda_{i}\in\C$ such that $$M\cdot
A[\alpha_{i}] \cdot M^{-1}=\lambda_{i}A[\chi(\Ad_{M})\alpha_{i}].$$
Moreover, if $u\in\ell_{p^{n}}$ is a common eigenvector of the set
of operators $\mathcal{C}(U)$, then $Mu$ is a common eigenvector of
the set of operators $\mathcal{C}(\chi(\Ad_{M})U)$.
\end{remark}
\begin{proposition}\label{prop1}
Let $A,B\in\M_{n}(\Z_{p})$ be symmetric and $A-B$ be a regular matrix.
Then: \begin{enumerate}
\item[(i)] There is $H\in\Sp_{2n}(\Z_{p})$ such that
$H\big(\begin{smallmatrix}I_{n}\\ 0\end{smallmatrix}\big)=
\big(\begin{smallmatrix}I_{n}\\ 0\end{smallmatrix}\big)$ and
$H\big(\begin{smallmatrix}A\\I_{n}\end{smallmatrix}\big)=
\big(\begin{smallmatrix}0\\I_{n}\end{smallmatrix}\big)$.
\item[(ii)] There is $G\in\Sp_{2n}(\Z_{p})$ such that
$G\big(\begin{smallmatrix}A\\I_{n}\end{smallmatrix}\big)=
\big(\begin{smallmatrix}I_{n}\\0\end{smallmatrix}\big)$ and
$G\big(\begin{smallmatrix}B\\I_{n}\end{smallmatrix}\big)=
\big(\begin{smallmatrix}0\\ D\end{smallmatrix}\big)$
for some regular $D\in\M_{n}(\Z_{p})$.
\end{enumerate}
\end{proposition}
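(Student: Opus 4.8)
The plan is to exhibit the required symplectic matrices explicitly, viewing an element of $\Sp_{2n}(\Z_p)$ as a $2\times 2$ array of $n\times n$ blocks. The only facts I will use repeatedly are: for $M\in\M_n(\Z_p)$ the block matrices $\left(\begin{smallmatrix}I_n&M\\0&I_n\end{smallmatrix}\right)$ and $\left(\begin{smallmatrix}I_n&0\\M&I_n\end{smallmatrix}\right)$ lie in $\Sp_{2n}(\Z_p)$ precisely when $M=M^{T}$ (a one-line verification of $H^{T}J'H=J'$ using $J'=\left(\begin{smallmatrix}0&-I_n\\I_n&0\end{smallmatrix}\right)$), and that $J'\in\Sp_{2n}(\Z_p)$, hence also $J'^{-1}=-J'$. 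Finally, since $A,B$ are symmetric, $C:=B-A$ is symmetric, and since $A-B$ is regular so is $C$; therefore $C^{-1}$ is symmetric as well. These observations reduce both parts to bookkeeping.

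For (i) I would simply put $H:=\left(\begin{smallmatrix}I_n&-A\\0&I_n\end{smallmatrix}\right)$. Because $A=A^{T}$ this $H$ is symplectic, and one reads off directly that $H\left(\begin{smallmatrix}I_n\\0\end{smallmatrix}\right)=\left(\begin{smallmatrix}I_n\\0\end{smallmatrix}\right)$ and $H\left(\begin{smallmatrix}A\\I_n\end{smallmatrix}\right)=\left(\begin{smallmatrix}A-A\\I_n\end{smallmatrix}\right)=\left(\begin{smallmatrix}0\\I_n\end{smallmatrix}\right)$.

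For (ii) I would build $G$ as a product of three symplectic matrices, tracking the two columns $\left(\begin{smallmatrix}A\\I_n\end{smallmatrix}\right)$ and $\left(\begin{smallmatrix}B\\I_n\end{smallmatrix}\right)$ at each step. First apply $H$ from part (i): it sends $\left(\begin{smallmatrix}A\\I_n\end{smallmatrix}\right)\mapsto\left(\begin{smallmatrix}0\\I_n\end{smallmatrix}\right)$ and $\left(\begin{smallmatrix}B\\I_n\end{smallmatrix}\right)\mapsto\left(\begin{smallmatrix}C\\I_n\end{smallmatrix}\right)$ with $C=B-A$ symmetric and regular. Next apply $-J'=\left(\begin{smallmatrix}0&I_n\\-I_n&0\end{smallmatrix}\right)$, sending $\left(\begin{smallmatrix}0\\I_n\end{smallmatrix}\right)\mapsto\left(\begin{smallmatrix}I_n\\0\end{smallmatrix}\right)$ and $\left(\begin{smallmatrix}C\\I_n\end{smallmatrix}\right)\mapsto\left(\begin{smallmatrix}I_n\\-C\end{smallmatrix}\right)$. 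Finally apply $\left(\begin{smallmatrix}I_n&C^{-1}\\0&I_n\end{smallmatrix}\right)$, which is symplectic because $C^{-1}$ is symmetric; it fixes $\left(\begin{smallmatrix}I_n\\0\end{smallmatrix}\right)$ and sends $\left(\begin{smallmatrix}I_n\\-C\end{smallmatrix}\right)\mapsto\left(\begin{smallmatrix}I_n-C^{-1}C\\-C\end{smallmatrix}\right)=\left(\begin{smallmatrix}0\\-C\end{smallmatrix}\right)$. Hence $G:=\left(\begin{smallmatrix}I_n&C^{-1}\\0&I_n\end{smallmatrix}\right)(-J')H\in\Sp_{2n}(\Z_p)$ satisfies $G\left(\begin{smallmatrix}A\\I_n\end{smallmatrix}\right)=\left(\begin{smallmatrix}I_n\\0\end{smallmatrix}\right)$ and $G\left(\begin{smallmatrix}B\\I_n\end{smallmatrix}\right)=\left(\begin{smallmatrix}0\\D\end{smallmatrix}\right)$ with $D:=-C=A-B$ regular.

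There is no serious obstacle: the argument is essentially a guided Gaussian elimination on Lagrangian frames inside $\Sp_{2n}(\Z_p)$. The only points that need care are the symplecticity checks for the elementary block matrices (these force the symmetry conditions and explain why the hypotheses are stated with symmetric $A,B$) and keeping the two columns' images straight through the composition; the one genuinely load-bearing observation is that $C=B-A$ is both symmetric and invertible, so that $C^{-1}$ is again symmetric and the final transvection is legitimate.
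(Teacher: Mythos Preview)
Your proof is correct and essentially coincides with the paper's: part (i) is identical, and for part (ii) the paper simply writes down $G=\left(\begin{smallmatrix}(A-B)^{-1}&-(A-B)^{-1}B\\-I_{n}&A\end{smallmatrix}\right)$ directly, which is exactly what your product $\left(\begin{smallmatrix}I_n&C^{-1}\\0&I_n\end{smallmatrix}\right)(-J')\left(\begin{smallmatrix}I_n&-A\\0&I_n\end{smallmatrix}\right)$ multiplies out to (with $C=B-A$, so $-C^{-1}=(A-B)^{-1}$ and $I_n+C^{-1}A=-(A-B)^{-1}B$). Your factorisation has the minor expository advantage that symplecticity of $G$ is automatic, whereas the paper leaves the verification $G^{T}J'G=J'$ to the reader.
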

\begin{proof}
Note that for $A\in\M_{n}(\Z_{p})$,
$\Big(\begin{smat2}{I_{n}}{A}{0}{I_{n}}\end{smat2}\Big)$ belongs to
$\Sp_{2n}(\Z_{p})$ if and only if $A$ is symmetric. Put
$H=\Big(\begin{smat2}{I_{n}}{-A}{0}{I_{n}}\end{smat2}\Big)$ and
$G=\Big(\begin{smat2}{(A-B)^{-1}}{-(A-B)^{-1}B}{-I_{n}}{A}\end{smat2}\Big)$.
  \end{proof}
\begin{remark}\label{rem2}
 For $m\in\N$ the matrix
$S_{m}\in\M_{m}(\C)$ introduced in
 \ref{rem10} is unitary and induces the discrete Fourier transform
$$S_{m}Q_{m}S_{m}^{-1}=P_{m}\ \ \ \ \textrm{and}\ \ \ \
S_{m}P_{m}S_{m}^{-1}=Q_{m}^{-1}.$$
 Thus for
$\Ad_{S_{p}\otimes\cdots\otimes
S_{p}}\in\mathcal{N}(\mathcal{P}_{(p,\dots,p)})$ we have
$$\chi(\Ad_{S_{p}\otimes\cdots\otimes S_{p}})=J'\ \ \ \ \textrm{and}\ \ \ \
J'\big(\begin{smallmatrix}I_{n}\\
0\end{smallmatrix}\big)= \big(\begin{smallmatrix}0\\
I_{n}\end{smallmatrix}\big).
$$
\end{remark}
\begin{corollary}\label{cor1}
Let $U=\big(\begin{smallmatrix}I_{n}\\ 0\end{smallmatrix}\big)$ or
$U=\big(\begin{smallmatrix}A\\ I_{n}\end{smallmatrix}\big)$ where
$A\in\M_{n}(\Z_{p})$ is a symmetric matrix. Then there is an
orthonormal basis of common eigenvectors for the mutually commuting
operator set $\mathcal{C}(U)$.
\end{corollary}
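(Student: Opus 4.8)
The plan is to reduce everything to the single observation that for $U_{0}:=\big(\begin{smallmatrix}I_{n}\\ 0\end{smallmatrix}\big)$ the set $\mathcal{C}(U_{0})$ consists of diagonal operators. Indeed, the $i$-th column of $U_{0}$ is the standard basis vector with a $1$ in the $i$-th ``position'' slot, so $A[\alpha_{i}]=I_{p^{i-1}}\otimes Q_{p}\otimes I_{p^{n-i}}$, a diagonal unitary matrix; hence the canonical orthonormal basis $\{e_{j_{1}}\otimes\cdots\otimes e_{j_{n}}\}$ of $\ell^{2}(\Z_{p^{n}})$ consists of simultaneous eigenvectors, which already settles the case $U=\big(\begin{smallmatrix}I_{n}\\ 0\end{smallmatrix}\big)$. (That $\mathcal{C}(U_{0})$ is commuting is the trivial case $U_{0}^{T}J'U_{0}=0$ of the criterion recalled before \ref{rem1}.)

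For $U=\big(\begin{smallmatrix}A\\ I_{n}\end{smallmatrix}\big)$ with $A$ symmetric, first I note $U^{T}J'U=-A^{T}+A=0$, so $\mathcal{C}(U)$ is indeed a commuting set. Next I produce $H\in\Sp_{2n}(\Z_{p})$ with $HU_{0}=U$: the block matrix $H=\big(\begin{smat2}{A}{-I_{n}}{I_{n}}{0}\end{smat2}\big)$ does the job, since a direct check gives $H^{T}J'H=J'$ once $A^{T}=A$ is used, and $HU_{0}=\big(\begin{smallmatrix}A\\ I_{n}\end{smallmatrix}\big)$ (equivalently, one may invoke Proposition \ref{prop1}(i) with an auxiliary symmetric $B$ chosen so that $A-B$ is invertible, and pass to the inverse shear). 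By surjectivity of $\chi$ there is $\Ad_{M}\in\mathcal{N}(\mathcal{P}_{(p,\dots,p)})$ with $\chi(\Ad_{M})=H$, and by Theorem \ref{37.1} the operator $M$ may be taken unitary. Applying Remark \ref{rem1} to the canonical eigenbasis of $\mathcal{C}(U_{0})$ then shows that $M$ carries it to a family of common eigenvectors of $\mathcal{C}(\chi(\Ad_{M})U_{0})=\mathcal{C}(HU_{0})=\mathcal{C}(U)$; it is an orthonormal basis because $M$ is unitary.

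I do not expect a genuine obstacle here: the only steps needing a line of verification are that the chosen $H$ really lies in $\Sp_{2n}(\Z_{p})$ and that $\mathcal{C}(U)$ is commuting, both of which come down to $A^{T}=A$, together with the bookkeeping that the lift $M$ along $\chi$ can be chosen unitary (guaranteed by Theorem \ref{37.1}). If one prefers not to write $H$ down explicitly, an alternative route is to pass from $U_{0}$ to $\big(\begin{smallmatrix}0\\ I_{n}\end{smallmatrix}\big)$ using the tensored Fourier transform $S_{p}\otimes\cdots\otimes S_{p}$ via Remark \ref{rem2}, and then from $\big(\begin{smallmatrix}0\\ I_{n}\end{smallmatrix}\big)$ to a general $\big(\begin{smallmatrix}A\\ I_{n}\end{smallmatrix}\big)$ by the shear $\big(\begin{smat2}{I_{n}}{A}{0}{I_{n}}\end{smat2}\big)\in\Sp_{2n}(\Z_{p})$, again lifting along $\chi$ and invoking Remark \ref{rem1} at each step.
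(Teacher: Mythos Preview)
Your argument is correct and follows essentially the same route as the paper: handle $U_{0}=\big(\begin{smallmatrix}I_{n}\\0\end{smallmatrix}\big)$ by the standard basis, then for $U=\big(\begin{smallmatrix}A\\I_{n}\end{smallmatrix}\big)$ find a symplectic element carrying $U_{0}$ to $U$, lift it through $\chi$ to a unitary $M$ (via Theorem \ref{37.1}), and transport the eigenbasis using Remark \ref{rem1}. The only cosmetic difference is that the paper invokes Proposition \ref{prop1}(ii) (with $B=A-I_{n}$) to obtain the symplectic element, whereas you write one down explicitly as $H=\big(\begin{smat2}{A}{-I_{n}}{I_{n}}{0}\end{smat2}\big)$; your alternative Fourier-plus-shear route is also valid but not used in the paper.
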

\begin{proof}
For $U=\big(\begin{smallmatrix}I_{n}\\ 0\end{smallmatrix}\big)$
clearly the standard basis $\mathcal{E}$ is the desired basis. Let
$U=\big(\begin{smallmatrix}A\\ I_{n}\end{smallmatrix}\big)$, where
$A\in\M_{n}(\Z_{p})$ is a symmetric matrix. By \ref{prop1} (putting
e.g. $B=A-I_{n}$) there are $G\in\Sp_{2n}(\Z_{p})$ such that
$\big(\begin{smallmatrix}A\\I_{n}\end{smallmatrix}\big)=
G^{-1}\big(\begin{smallmatrix}I_{n}\\0\end{smallmatrix}\big)$
and $M\in\mathrm{U}_{p^{n}}(\C)$ such that
$\Ad_{M}\in\mathcal{N}(\mathcal{P}_{(p,\dots,p)})$ and
$\chi(\Ad_{M})=G$. Using \ref{rem1} and the unitarity of $M$, we
obtain the desired basis as $\set{M^{-1}e}{e\in\mathcal{E}}$.
 \end{proof}
\begin{proposition}\label{prop2}

(i) Let $D\in\M_{n}(\Z_{p})$ be regular and $\mathcal{B}$ be a basis
of common eigenvectors for $\mathcal{C}\big(\begin{smallmatrix}0\\
D\end{smallmatrix}\big)$. Then $\mathcal{B}$ is also a basis of
common eigenvectors for $\mathcal{C}\big(\begin{smallmatrix}0\\
I_{n}\end{smallmatrix}\big)$.

(ii) Let $\mathcal{B}$ be an orthonormal basis of common
eigenvectors for $\mathcal{C}\big(\begin{smallmatrix}I_{n}\\
0\end{smallmatrix}\big)$ and $u$ from $\mathcal{B}$. Then there is a
complex unit $\lambda$ such that $\lambda u$ belongs to the standard
basis $\mathcal{E}$ of $\C^{p^{n}}$.

(iii) Let $\mathcal{B}$ ($\mathcal{B}'$, respectively) be an
orthonormal basis of common eigenvectors for
 $\mathcal{C}\big(\begin{smallmatrix}I_{n}\\ 0\end{smallmatrix}\big)$
($\mathcal{C}\big(\begin{smallmatrix}0\\
I_{n}\end{smallmatrix}\big)$, respectively). Then $\mathcal{B}$ and
 $\mathcal{B}'$ are mutually unbiased.
 \end{proposition}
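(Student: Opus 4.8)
The plan is to treat (i)--(iii) in turn, reducing everything to the two ``extreme'' column matrices $\big(\begin{smallmatrix}I_n\\0\end{smallmatrix}\big)$ and $\big(\begin{smallmatrix}0\\I_n\end{smallmatrix}\big)$, whose commuting families $\mathcal{C}(\cdot)$ have one-dimensional joint eigenspaces. For (i), since $D\in\M_n(\Z_p)$ is regular its columns span $\Z_p^n$, so the columns of $\big(\begin{smallmatrix}0\\D\end{smallmatrix}\big)$ span over $\Z_p$ the same subspace of $\Z_p^{2n}$ as the columns of $\big(\begin{smallmatrix}0\\I_n\end{smallmatrix}\big)$. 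From $P_pQ_p=\omega_pQ_pP_p$ one gets that $A[\alpha]A[\beta]$ and $A[\alpha]^{-1}$ are nonzero scalar multiples of $A[\alpha+\beta]$ and $A[-\alpha]$ (exponents mod $p$); consequently a common eigenvector of the generators of $\mathcal{C}\big(\begin{smallmatrix}0\\D\end{smallmatrix}\big)$ is automatically an eigenvector of $A[\gamma]$ for every $\gamma$ in that span, in particular of each operator of $\mathcal{C}\big(\begin{smallmatrix}0\\I_n\end{smallmatrix}\big)$. That is the whole content of (i).

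For (ii) I would observe that the $i$-th column of $\big(\begin{smallmatrix}I_n\\0\end{smallmatrix}\big)$ gives $A[\alpha_i]=I_p^{\otimes(i-1)}\otimes Q_p\otimes I_p^{\otimes(n-i)}$, which is diagonal in the standard basis $\mathcal{E}=\{e_{j_1}\otimes\cdots\otimes e_{j_n}\}$ and acts on $e_{j_1}\otimes\cdots\otimes e_{j_n}$ by $\omega_p^{j_i}$. Since the eigenvalue tuple $(\omega_p^{j_1},\dots,\omega_p^{j_n})$ determines $(j_1,\dots,j_n)$ uniquely, every joint eigenspace of $\mathcal{C}\big(\begin{smallmatrix}I_n\\0\end{smallmatrix}\big)$ is a line $\C e$ with $e\in\mathcal{E}$. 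A unit common eigenvector $u$ therefore lies on such a line, so $\lambda u\in\mathcal{E}$ for a suitable complex unit $\lambda$.

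For (iii), put $S:=S_p\otimes\cdots\otimes S_p$. By remarks \ref{rem10} and \ref{rem2}, $\Ad_S\in\mathcal{N}(\mathcal{P}_{(p,\dots,p)})$ and $\chi(\Ad_S)=J'$; hence $\chi(\Ad_{S^{-1}})=(J')^{-1}=-J'$ and $(-J')\big(\begin{smallmatrix}0\\I_n\end{smallmatrix}\big)=\big(\begin{smallmatrix}I_n\\0\end{smallmatrix}\big)$. So by remark \ref{rem1}, for $v\in\mathcal{B}'$ the vector $S^{-1}v$ is a common eigenvector of $\mathcal{C}\big(\begin{smallmatrix}I_n\\0\end{smallmatrix}\big)$, and it is a unit vector because $S$ is unitary. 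Applying (ii) to $u\in\mathcal{B}$ and to $S^{-1}v$ yields $e=e_{i_1}\otimes\cdots\otimes e_{i_n}$ and $e'=e_{i'_1}\otimes\cdots\otimes e_{i'_n}$ in $\mathcal{E}$ with $u\in\C e$ and $S^{-1}v\in\C e'$; then, using $(S_p)_{ab}=\omega_p^{ab}/\sqrt p$ and multiplicativity of the tensor-product inner product,
$$|(u,v)|=|(e,Se')|=\prod_{m=1}^{n}|(S_p)_{i_m,i'_m}|=p^{-n/2}=\frac{1}{\sqrt{p^{n}}},$$
which is exactly mutual unbiasedness of $\mathcal{B}$ and $\mathcal{B}'$ in $\ell^{2}(\Z_{p^{n}})$.

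The only step that is not pure bookkeeping is the essential uniqueness of the common eigenbasis, i.e.\ that the joint eigenspaces of $\mathcal{C}(U)$ are one-dimensional for the $U$ at hand; this is what (ii) really establishes and what (iii) then exploits. For $\big(\begin{smallmatrix}I_n\\0\end{smallmatrix}\big)$ it is immediate from the simple spectrum of $Q_p$, and the case $\big(\begin{smallmatrix}0\\I_n\end{smallmatrix}\big)$ is transported to it by conjugation with the Fourier operator $S$ --- precisely the place where the normalizer/symmetry-group machinery of \ref{rem1}--\ref{rem2} is needed. Beyond that I expect no genuine obstacle, only care in tracking the scalar phases, which part (i) already shows to be harmless.
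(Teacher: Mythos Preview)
Your proof is correct and follows essentially the same route as the paper: part (i) uses invertibility of $D$ so that the group generated by $\mathcal{C}\big(\begin{smallmatrix}0\\D\end{smallmatrix}\big)$ contains (up to scalars) each operator of $\mathcal{C}\big(\begin{smallmatrix}0\\I_n\end{smallmatrix}\big)$; part (ii) identifies the joint eigenspaces of the commuting $Q_p$-operators as the standard basis lines (the paper does this by successively peeling off tensor factors, you by reading off the eigenvalue tuple --- the same content); and part (iii) conjugates $\mathcal{B}'$ by the tensored Fourier operator $S=S_p^{\otimes n}$ via \ref{rem1}--\ref{rem2} to reduce to (ii), then computes $|(e,Se')|=p^{-n/2}$ entrywise. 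The paper packages the last computation through permutation-like matrices $R_1,R_2$ and the identity $|(R_2^T M R_1)_{ij}|=1/\sqrt{p^n}$, but this is just a notational variant of your direct product $\prod_m|(S_p)_{i_m,i'_m}|$.
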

 \begin{proof}
(i) Since $D$ is invertible, we have that for every $i=1,\dots,n$,
$I_{p^{i-1}}\otimes P_{i}\otimes I_{p^{n-i}}$ lies in the group
generated by $\set{A[\alpha_{j}]}{j=1,\dots,n}$ where $\alpha_{j}$
is the $j$-the column of $\big(\begin{smallmatrix}0\\
D\end{smallmatrix}\big)$. Our assertion now follows immediately.

(ii) Let $\mathcal{E}_{p}$ be the standard basis of $\C^{p}$. Since
$u$ is an eigenvector of $Q_{p}\otimes I_{p^{n-1}}$ it is of the
form $u=e_{i_{1}}\otimes v$ for some $i_{1}=1,\dots,p$ and
$v\in\C^{p^{n-1}}$. Now, $u=e_{i_{1}}\otimes v$ is an eigenvector of
$I_{p}\otimes Q_{p}\otimes I_{p^{n-2}}$, hence it is of the form
$u=e_{i_{1}}\otimes e_{i_{2}}\otimes w$ for some $i_{2}=1,\dots,p$
and $w\in\C^{p^{n-2}}$.Repeating this argument we get that
$u=\lambda e_{i_{1}}\otimes \cdots\otimes e_{i_{n}}$ for
$i_{j}=1,\dots,p$ and $\lambda\in\C$. Since $u$ is normalized, it
follows that $|\lambda|= 1$.

(iii) Put $M=S_{p}\otimes\cdots\otimes
S_{p}\in\mathrm{U}_{p^{n}}(\C)$ (see \ref{rem2}). By \ref{rem1} and
\ref{rem2}, $M^{-1}\mathcal{B'}$ is an orthonormal basis of common
eigenvectors for $\mathcal{C}\big(\begin{smallmatrix}I_{n}\\
0\end{smallmatrix}\big)$. Hence, by (ii), there are matrices
$R_{1},R_{2}\in\GL_{p^{n}}(\C)$ with only one non-zero entry (a
complex unit) in each column and row, such that
$M^{-1}\mathcal{B'}=R_{1}\mathcal{E}$ and
$\mathcal{B}=R_{2}\mathcal{E}$. Now, let $u$ be from
$\mathcal{B}=R_{2}\mathcal{E}$ and $u'$ from
$\mathcal{B}'=MR_{1}\mathcal{E}$. Then there are $i,j\in\{1,\dots,
p^{n}\}$ such that $u=R_{2}e_{i}$ and $MR_{1}e_{j}$ with
$e_{i},e_{j}$ from $\mathcal{E}$. Hence
$$|(u,u')|=|(R_{2}e_{i},MR_{1}e_{j})|=|(R_{2}^{T}MR_{1})_{ij}|=
1/\sqrt{p^{n}},$$
 i.e. $\mathcal{B}$ and $\mathcal{B}'$ are mutually unbiased.
  \end{proof}

\begin{corollary}\label{cor2}
Let $U$ and $U'$ be distinct matrices from the system $(\ast)$, and
$\mathcal{B}$ and $\mathcal{B}'$ be orthonormal bases of common
eigenvectors for $\mathcal{C}(U)$ and $\mathcal{C}(U')$,
respectively. Then $\mathcal{B}$ and $\mathcal{B}'$ are mutually
unbiased.
 \end{corollary}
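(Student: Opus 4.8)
The plan is to reduce the general statement to the two special normal forms already handled in Proposition~\ref{prop2}(iii), using the action of the symmetry group. First I would recall that every matrix in the system $(\ast)$ is either $\big(\begin{smallmatrix}I_{n}\\ 0\end{smallmatrix}\big)$ or of the form $\big(\begin{smallmatrix}A\\ I_{n}\end{smallmatrix}\big)$ with $A$ symmetric, and that differences of the $A_i$'s are regular. So there are two cases to treat: (a) $U=\big(\begin{smallmatrix}I_{n}\\ 0\end{smallmatrix}\big)$ and $U'=\big(\begin{smallmatrix}A\\ I_{n}\end{smallmatrix}\big)$; (b) both $U$ and $U'$ are of the second type, say $U=\big(\begin{smallmatrix}A\\ I_{n}\end{smallmatrix}\big)$ and $U'=\big(\begin{smallmatrix}B\\ I_{n}\end{smallmatrix}\big)$ with $A-B$ regular.

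In case (a) I would invoke Proposition~\ref{prop1}(i): there is $H\in\Sp_{2n}(\Z_p)$ with $H\big(\begin{smallmatrix}I_{n}\\ 0\end{smallmatrix}\big)=\big(\begin{smallmatrix}I_{n}\\ 0\end{smallmatrix}\big)$ and $H\big(\begin{smallmatrix}A\\ I_{n}\end{smallmatrix}\big)=\big(\begin{smallmatrix}0\\ I_{n}\end{smallmatrix}\big)$. Since $\chi$ is surjective (the proposition just before Remark~\ref{rem1}), choose a unitary $M$ with $\chi(\Ad_M)=H$. Then $\Ad_M$ maps $\mathcal{B}$ to a basis of common eigenvectors for $\mathcal{C}(\big(\begin{smallmatrix}I_{n}\\ 0\end{smallmatrix}\big))$ and $\mathcal{B}'$ to one for $\mathcal{C}(\big(\begin{smallmatrix}0\\ I_{n}\end{smallmatrix}\big))$, by Remark~\ref{rem1}; unbiasedness is preserved under the unitary $M$, so Proposition~\ref{prop2}(iii) finishes this case.

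In case (b) I would use Proposition~\ref{prop1}(ii): since $A-B$ is regular, there is $G\in\Sp_{2n}(\Z_p)$ with $G\big(\begin{smallmatrix}A\\ I_{n}\end{smallmatrix}\big)=\big(\begin{smallmatrix}I_{n}\\ 0\end{smallmatrix}\big)$ and $G\big(\begin{smallmatrix}B\\ I_{n}\end{smallmatrix}\big)=\big(\begin{smallmatrix}0\\ D\end{smallmatrix}\big)$ for some regular $D$. Pick $M'$ unitary with $\chi(\Ad_{M'})=G$; then $\Ad_{M'}\mathcal{B}$ is a basis of common eigenvectors for $\mathcal{C}(\big(\begin{smallmatrix}I_{n}\\ 0\end{smallmatrix}\big))$ and $\Ad_{M'}\mathcal{B}'$ one for $\mathcal{C}(\big(\begin{smallmatrix}0\\ D\end{smallmatrix}\big))$. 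By Proposition~\ref{prop2}(i) the latter is also a basis of common eigenvectors for $\mathcal{C}(\big(\begin{smallmatrix}0\\ I_{n}\end{smallmatrix}\big))$, and again Proposition~\ref{prop2}(iii) plus unitarity of $M'$ gives the conclusion.

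The only mild subtlety — not really an obstacle — is bookkeeping: one must note that "basis of common eigenvectors for $\mathcal{C}(U)$" does not uniquely determine the basis, so one genuinely argues about the image $\Ad_M\mathcal{B}$ being \emph{some} such basis and then quotes the fact that Proposition~\ref{prop2}(iii) applies to any pair of such bases; and one must check that a unitary map preserves the modulus of inner products, which is immediate. Everything else is a direct chaining of the already-established propositions, so the proof is short.
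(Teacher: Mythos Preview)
Your argument is correct and follows the same route as the paper's proof: use Proposition~\ref{prop1} to find a symplectic element sending the pair $(U,U')$ to the standard pair, lift it to a unitary $M$ via surjectivity of $\chi$ (equivalently Theorem~\ref{37.1}), then apply Remark~\ref{rem1}, Proposition~\ref{prop2}(i),(iii), and unitarity of $M$. The paper merely treats your cases (a) and (b) in one stroke by citing both parts of Proposition~\ref{prop1} together; one cosmetic point: write $M\mathcal{B}$ rather than $\Ad_M\mathcal{B}$, since it is the unitary $M$ that acts on vectors, not the inner automorphism.
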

\begin{proof}
By \ref{prop1}, \ref{rem2} and \ref{37.1} there is
$M\in\mathrm{U}_{p^{n}}(\C)$ such that
$\chi(\Ad_{M})U=
\big(\begin{smallmatrix}I_{n}\\0\end{smallmatrix}\big)$
and $\chi(\Ad_{M})U'=
\big(\begin{smallmatrix}0\\ D\end{smallmatrix}\big)$
for some regular $D\in\M_{n}(\Z_{p})$.
By \ref{prop2}(i) and \ref{rem1}, $M\mathcal{B}$ (
$M\mathcal{B}'$, resp.) is
an orthonormal basis of common eigenvectors for
$\mathcal{C}\big(\begin{smallmatrix}I_{n}\\
0\end{smallmatrix}\big)$ ($\mathcal{C}\big(\begin{smallmatrix}0
\\ I_{n}\end{smallmatrix}\big)$, resp.).
Hence by \ref{prop2}(iii), the bases $M\mathcal{B}$ and
$M\mathcal{B}'$ are mutually unbiased. Finally, since $M$
is unitary, the bases $\mathcal{B}$ and $\mathcal{B}'$
are also mutually unbiased.
\end{proof}

Thus we have shown, using our knowledge of the normalizer of
$\mathcal{P}_{(p,\dots,p)}$, that having the system $(\ast)$, there
are $p^{n}+1$ mutually unbiased bases in the Hilbert space
$\ell^{2}(\Z_{p^{n}})$, where $p$ is a prime number. In the
remaining part of this section we show how to generate these bases
from the canonical one using one particular operator and an
elementary commutative group of order $p^{n}$ consisting of unitary
diagonal matrices, i.e. $\cong\Z_{p}^{n}$.

First, recall a result by Wootters and Fields in
\cite{WoottersFields89} (mentioned also in \cite{Bandyo}) that
supports the existence of a system $(\ast)$:
\begin{proposition}\label{system}
There are symmetric matrices $B_{1},\dots ,B_{n}\in\M_{n}(\Z_{p})$
such that for every $0\neq(\alpha_{1},\dots,\alpha_{n})^{T}\in\Z_{p}^{n}$
the matrix $\sum^{n}_{\ell=1}\alpha_{\ell}B_{\ell}$ is regular.
In particular, let $\gamma_{1},\dots , \gamma_{n}$ be a basis
of the finite field $\mathbb{F}_{p^{n}}$ as a vector space over
the field $\mathbb{Z}_{p}$.
Then any element $\gamma_{i}\gamma_{j}\in\mathbb{F}_{p^{n}}$ can be written
uniquely as
$$\gamma_{i}\gamma_{j} =\sum\limits^{n}_{\ell=1}b_{ij}^{\ell}\gamma_{\ell}$$
where $b_{ij}^{\ell}\in\mathbb{Z}_{p}$.
Now $(B_{\ell})_{ij}=b_{ij}^{\ell}$ are the required matrices.
\end{proposition}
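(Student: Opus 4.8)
The plan is to take the explicit construction offered in the ``in particular'' clause at face value and verify its two required features: each $B_{\ell}$ is symmetric, and every nonzero $\Z_{p}$-combination $\sum_{\ell}\alpha_{\ell}B_{\ell}$ is regular. Symmetry is immediate: since $\mathbb{F}_{p^{n}}$ is commutative, $\gamma_{i}\gamma_{j}=\gamma_{j}\gamma_{i}$, so the structure constants satisfy $b_{ij}^{\ell}=b_{ji}^{\ell}$ for all $i,j,\ell$; hence $(B_{\ell})_{ij}=b_{ij}^{\ell}=b_{ji}^{\ell}=(B_{\ell})_{ji}$, and a sum of symmetric matrices is again symmetric.

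For regularity, fix $\alpha=(\alpha_{1},\dots,\alpha_{n})^{T}\neq 0$ and introduce the $\Z_{p}$-linear functional $\phi_{\alpha}\colon\mathbb{F}_{p^{n}}\to\Z_{p}$ determined by $\phi_{\alpha}(\gamma_{\ell})=\alpha_{\ell}$. Applying $\phi_{\alpha}$ to $\gamma_{i}\gamma_{j}=\sum_{\ell}b_{ij}^{\ell}\gamma_{\ell}$ yields
$$\Big(\sum_{\ell=1}^{n}\alpha_{\ell}B_{\ell}\Big)_{ij}=\sum_{\ell=1}^{n}\alpha_{\ell}b_{ij}^{\ell}=\phi_{\alpha}(\gamma_{i}\gamma_{j}),$$
so $C:=\sum_{\ell}\alpha_{\ell}B_{\ell}$ is precisely the Gram matrix, in the basis $\gamma_{1},\dots,\gamma_{n}$, of the symmetric $\Z_{p}$-bilinear form $(x,y)\mapsto\phi_{\alpha}(xy)$ on $\mathbb{F}_{p^{n}}$. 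Suppose now that $Cv=0$ for some $v=(v_{1},\dots,v_{n})^{T}\in\Z_{p}^{n}$, and put $x=\sum_{i}v_{i}\gamma_{i}\in\mathbb{F}_{p^{n}}$. Reading off the $j$-th coordinate of $Cv$ and using the linearity of $\phi_{\alpha}$, we get $\phi_{\alpha}(\gamma_{j}x)=0$ for every $j$, hence $\phi_{\alpha}(yx)=0$ for all $y\in\mathbb{F}_{p^{n}}$. If $x\neq 0$, then $y\mapsto yx$ is a bijection of the field $\mathbb{F}_{p^{n}}$, forcing $\phi_{\alpha}\equiv 0$ and therefore $\alpha=0$, a contradiction; so $x=0$, i.e. $v=0$. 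Thus $\ker C=0$, and over the field $\Z_{p}$ this means $C$ is invertible, i.e. regular.

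Finally, the existence of a $\Z_{p}$-basis $\gamma_{1},\dots,\gamma_{n}$ of $\mathbb{F}_{p^{n}}$ (and of the field $\mathbb{F}_{p^{n}}$ itself for every prime power $p^{n}$) is standard. I anticipate no serious obstacle: the only step needing a little care is the index bookkeeping identifying $\sum_{\ell}\alpha_{\ell}B_{\ell}$ with the Gram matrix of $(x,y)\mapsto\phi_{\alpha}(xy)$; once that is in place, the rest is the classical observation that pairing a nonzero linear functional through the multiplication of a field produces a nondegenerate bilinear form.
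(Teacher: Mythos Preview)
Your proof is correct. Note, however, that the paper does not actually supply its own proof of this proposition: it introduces the statement with ``recall a result by Wootters and Fields'' and cites \cite{WoottersFields89} (and \cite{Bandyo}), so there is nothing in the paper to compare against beyond the statement itself. Your argument --- recognizing $\sum_{\ell}\alpha_{\ell}B_{\ell}$ as the Gram matrix of the bilinear form $(x,y)\mapsto\phi_{\alpha}(xy)$ and invoking that multiplication by a nonzero field element is bijective --- is the standard clean way to see this, and it fills in exactly what the paper leaves to the references.
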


Now let $\mathcal{D}$ denote the additive subgroup of
$\M_{n}(\Z_{p})$ generated by $B_{1},\dots,B_{n}$ from \ref{system}.
Clearly, $\mathcal{D}\cong\Z_{p}^{n}$ and it is easy to see that
$$\mathcal{H}:=
\Big\{\Big(\begin{smat2}{I_{n}}{B}{0}{I_{n}}\end{smat2}\Big)\Big|\
B\in\mathcal{D}\Big\}$$ is a (multiplicative) commutative subgroup
of $\Sp_{2n}(\Z_{p})$ that has a natural action (via matrix
multiplication) on the set
$$\big\{\big(\begin{smallmatrix}C\\I_{n}\end{smallmatrix}\big)\big|
\ C\in\mathcal{D}\big\}.$$

We consider now the system $(\ast)$ naturally as
$\big\{\big(\begin{smallmatrix}I_{n}\\0\end{smallmatrix}\big)\big\}
\cup\big\{\big(\begin{smallmatrix}C\\I_{n}\end{smallmatrix}\big)\big|
\ C\in\mathcal{D}\big\}$
with the mappings
\begin{equation*}
\big(\begin{smallmatrix}I_{n}\\0\end{smallmatrix}\big) \quad
 \stackrel{J'}{\longrightarrow} \quad
\big(\begin{smallmatrix}0\\I_{n}\end{smallmatrix}\big) \quad
 \stackrel{\Big(\begin{smat2}{I_{n}}{A}{0}{I_{n}}\end{smat2}\Big)}{\longrightarrow}
 \quad \big(\begin{smallmatrix}A\\I_{n}\end{smallmatrix}\big) \quad
 \stackrel{\Big(\begin{smat2}{I_{n}}{B-A}{0}{I_{n}}\end{smat2}\Big)}{\longrightarrow}
 \quad \big(\begin{smallmatrix}B\\I_{n}\end{smallmatrix}\big).
\end{equation*}
\begin{remark}\label{rem3}
The subspace of all symmetric matrices in $\M_{n}(\Z_{p})$ has a basis consisting of
\begin{itemize}
 \item matrices $E_{ij}$,
where $1\leq i<j\leq n$, which have the entry 1 at the positions
$(i,j)$ and $(j,i)$  and zeros otherwise, and
 \item matrices $E_{i}$, with $1\leq i\leq n$ with the only non-zero entry 1
on the position $(i,i)$.
 \end{itemize}
Now, put $F_{i}:=I_{p^{i-1}}\otimes D_{i}\otimes I_{p^{n-i}}$ for
$i=1,\dots,n$. Using \ref{35} and \ref{rem10} we get that
$\chi(\Ad_{R_{ij}^{-1}})=\Big(\begin{smat2}{I_{n}}{E_{ij}}{0}{I_{n}}\end{smat2}\Big)$
and
$\chi(\Ad_{F_{i}})=\Big(\begin{smat2}{I_{n}}{E_{i}}{0}{I_{n}}\end{smat2}\Big)$.
Thus taking unitary diagonal matrices in $\ell^{2}(\Z_{p^{n}})$,
 $$K_{\ell}:=\Big(\prod\limits^{n}_{i=1}F_{i}^{b_{ii}^{\ell}}\Big)
 \Big(\prod\limits_{1\leq i<j\leq n}R_{ij}^{-b_{ij}^{\ell}}\Big)
 \in\mathrm{U}_{p^{n}}(\C)$$ for $\ell=1,\dots,n$, we get
  $$\chi(\Ad_{K_{\ell}})=\Big(\begin{smat2}{I_{n}}{B_{\ell}}{0}{I_{n}}\end{smat2}\Big)$$
 \end{remark}

Let now $\mathcal{K}$ denote the multiplicative subgroup of
$\GL_{p^{n}}(\C)$ generated by $K_{1},\dots,K_{n}$. We have an
isomorphism $\mathcal{K}\rightarrow \mathcal{H} :
K\mapsto\chi(\Ad_{K})$. Hence $\mathcal{K}\cong\Z_{p}^{n}$. We can
now choose our set of $p^{n}+1$ mutually unbiased bases in
$\ell^{2}(\Z_{p^{n}})$ as
$$\big\{\mathcal{E}\big\}\cup\big\{KS\mathcal{E}\big|\ K\in\mathcal{K}\big\}.$$

Indeed, by \ref{rem1}, \ref{rem2} we have that $\mathcal{E}$,
$\mathcal{S\mathcal{E}}$ and $KS\mathcal{E}$ are (in this order)
orthonormal bases of common eigenvectors for
$\mathcal{C}\big(\begin{smallmatrix}I_{n}\\0\end{smallmatrix}\big)$,
$\mathcal{C}\big(\begin{smallmatrix}0\\ I_{n}\end{smallmatrix}\big)$
and $\mathcal{C}\big(\begin{smallmatrix}A\\
I_{n}\end{smallmatrix}\big)$, where
$\chi(\Ad_{K})=\Big(\begin{smat2}{I_{n}}{A}{0}{I_{n}}\end{smat2}\Big)$.
Now, by \ref{cor2}, these bases are mutually unbiased. The group
$\mathcal{K}$ acts on the system of bases as follows:
\begin{equation*}
\mathcal{E} \quad \stackrel{S}{\longrightarrow} \quad
 \quad S\mathcal{E} \quad \stackrel{K}{\longrightarrow}
 \quad KS\mathcal{E}.
\end{equation*}
\begin{remark}\label{rem4}
To have a better insight into the matrices $S$ and
$K_{1},\dots,K_{n}$ we will express the numbering of columns and
rows as $p$-adic numbers, i.e. as $n$-tuples
$\alpha_{1}\dots\alpha_{n}$, with $\alpha_{i}\in\{0,\dots,p-1\}$,
that correspond to $\alpha_{1}p^{n-1}+\cdots+\alpha_{n}p^{0}$. In
this notation we get
$$S_{\alpha_{1}\dots\alpha_{n}, \beta_{1}\dots\beta_{n}}=
\omega_{p}^{\sum_{i}\alpha_{i}\beta_{i}}/\sqrt{p^{n}},$$
$$(F_{i})_{\alpha_{1}\dots\alpha_{n},\alpha_{1}\dots\alpha_{n}}=
\varepsilon^{-\alpha_{i}}\omega_{p}^{{\alpha_{i}\choose 2}},$$
$$(R_{ij})_{\alpha_{1}\dots\alpha_{n},\alpha_{1}\dots\alpha_{n}}=
\omega_{p}^{\alpha_{i}\alpha_{j}},$$ and
$$(K_{\ell})_{\alpha_{1}\dots\alpha_{n},\alpha_{1}\dots\alpha_{n}}=
\varepsilon^{-\sum_{i}b^{\ell}_{ii}\alpha_{i}}\cdot
\omega_{p}^{\sum_{i}b^{\ell}_{ii}{\alpha_{i}\choose 2}
-\sum_{i<j}b^{\ell}_{ij}\alpha_{i}\alpha_{j}}$$
where $i,j,\ell=1,\dots,n$, $i< j$ and $\varepsilon=\sqrt{-1}$
for $p=2$ and $\varepsilon=1$ otherwise.
\end{remark}

\section{Conclusions}

In this paper we have described the symmetry groups of finite
Heisenberg groups of arbitrary quantum systems consisting of a
finite number $k$ of subsystems with Hilbert spaces of finite
dimensions $n_{1},\dots,n_{k}$, thus extending our results obtained
for bipartite systems \cite{normalizer}. For such a finitely
composed quantum system the finite Heisenberg group is embedded in
$\GL_{N}(\C)$, $N=n_{1}\dots n_{k}$. It induces --- via inner
automorphisms $\Ad_M$ --- an Abelian subgroup
$\P_{(n_{1},\dots,n_{k})}$ in $\Inn (\GL_{N}(\C))$. We have studied
the normalizer of this Abelian subgroup in $\Inn (\GL_{N}(\C))$ and
have thoroughly described it. The obtained symmetry group
$\Sp_{[n_{1},\dots,n_{k}]}$ is the quotient group of the normalizer
(theorem \ref{37}) and its further characterization was given in
section \ref{sec4}.

The symmetry groups uncover deeper structure of FDQM. For instance,
the cases when $n_1=\dots = n_k = n$, $n\in \Z$, corresponding to
dimensions $N=n^k$, are of particular interest. Then the symmetry
group for a multipartite system with this special composition is
$\Sp_{[n,\dots,n]}\cong \mathrm{Sp}_{2k}(\Z_n)$, which extends the
bipartite case $\mathrm{Sp}_{4}(\Z_n)$ considered in
\cite{normalizer} and \cite{PST06}. Thus our class of symmetry
groups can be viewed as a very specific generalization of the
familiar symplectic groups over modular rings \cite{symplectic}.

We have exploited the cases when $n_1=\dots = n_k = p$, $p$ prime,
corresponding to prime power dimension $N=p^k$, in section
\ref{sec6}, where the symmetry group $\mathrm{Sp}_{2k}(\Z_p)$ is
applied to an alternative derivation of the maximal set of mutually
unbiased bases in Hilbert spaces of prime power dimensions. Our
group theoretic derivation uses the idea of \cite{SulcTolar07},
where a constructive existence proof for $k=1$, $N=p$ prime, was
based on consistent use of the symmetry group $\mathrm{Sp}_{2}(\Z_p)
\cong \mathrm{SL}_{2}(\Z_p)$.

Our motivation to study symmetries of finite Heisenberg groups not
in prime or prime power dimensions as in \cite{Balian,Vourdas07,
GHW04}, but for arbitrary dimensions stems from our previous
research where we obtained results not restricted to finite fields.
Especially recall our paper \cite{TolarChadz} on Feynman's path
integral and mutually unbiased bases. Also the recent paper
\cite{VourdasBanderier} belongs to this direction, by dealing with
quantum tomography over modular rings. The papers
\cite{DigernesVV,Digernes} support our motivation, too, since they
show that finite quantum mechanics with growing odd dimensions
yields surprisingly good approximations of ordinary quantum
mechanics on the real line. This suggests a promising subject of
research to extend the results of \cite{Balian,Neuhauser} on
$\SL_{2}(\mathrm{F}_q)$ from finite fields to modular rings.

\section*{Acknowledgements}

The first author (M.K.) was supported by the project LC505 of Eduard
\v Cech's Center for Algebra and Geometry. The second author (J.T.)
acknowledges partial support by the Ministry of Education of Czech
Republic, projects MSM6840770039 and LC06002.

\section*{References}

\end{document}